\documentclass[letterpaper,11pt]{article}
\pdfoutput=1

\usepackage[utf8]{inputenc}

\usepackage{amsmath, amsthm, amssymb}
\usepackage[table,xcdraw]{xcolor}
\usepackage[vlined,linesnumbered]{algorithm2e}
\usepackage{mathtools}
\usepackage[numbers]{natbib}
\usepackage{comment} 
\usepackage{tcolorbox} 
\usepackage{xfrac}
\usepackage{hyperref}
\usepackage{multirow}
\usepackage{caption}
\usepackage{bm}
\usepackage{newfloat}
\usepackage{enumitem}
\usepackage{wrapfig}
\usepackage{hhline}
\usepackage{array, multirow}
\usepackage{amssymb}
\usepackage{pifont}

\parskip=5pt

\usepackage[margin=1in]{geometry}

\allowdisplaybreaks

\definecolor{mygreen}{RGB}{20,150,100}
\definecolor{myred}{RGB}{150,0,0}

\hypersetup{
     colorlinks=true,
     citecolor= mygreen,
     linkcolor= myred,
     urlcolor= mygreen
}

\renewcommand{\epsilon}{\varepsilon}
\newcommand{\stackeq}[2]{\ensuremath{\stackrel{\text{#1}}{#2}}}

\DeclareMathOperator{\E}{\ensuremath{\normalfont \textbf{E}}}

\newcommand{\hiddencomment}[1]{}
\newcommand{\cmark}{\ensuremath{\checkmark}}%
\newcommand{\xmark}{\ensuremath{\times}}%

\newcommand{\MPC}[0]{{\normalfont MPC}}

\newcommand{\clusterspivot}[0]{\ensuremath{\mathcal{C}_{\text{PIV}}}}
\newcommand{\clustersalg}[0]{\ensuremath{\mathcal{C}_{r\text{-PIV}}}}
\newcommand{\pivotspivot}[0]{\ensuremath{P_{\text{PIV}}}}
\newcommand{\pivotsourpivot}[0]{\ensuremath{P_{r\text{-PIV}}}}

\newcommand{\pivot}[0]{\ensuremath{\textsc{Pivot}}}
\newcommand{\ourpivot}[0]{\ensuremath{r\textsc{-Pivot}}}
\newcommand{\costourpiv}[1]{\ensuremath{\textsc{COST}_{r\textsc{-PIV}}(#1)}}
\newcommand{\costpiv}[1]{\ensuremath{\textsc{COST}_{\textsc{PIV}}(#1)}}
\newcommand{\opt}[1]{\ensuremath{\textsc{OPT}(#1)}}

\DeclareMathOperator{\poly}{poly}

\newcommand{\mc}[1]{\ensuremath{\mathcal{#1}}}

\usepackage[noabbrev,nameinlink]{cleveref}
\crefname{lemma}{Lemma}{Lemmas}
\crefname{theorem}{Theorem}{Theorems}
\crefname{property}{Property}{Properties}
\crefname{claim}{Claim}{Claims}
\crefname{definition}{Definition}{Definitions}
\crefname{observation}{Observation}{Observations}
\crefname{assumption}{Assumption}{Assumptions}
\crefname{line}{Line}{Lines}
\creflabelformat{property}{(#1)#2#3}
\crefname{equation}{Eq}{Eq}
\creflabelformat{equation}{(#1)#2#3}
\crefname{section}{Section}{Sections}
\crefname{result}{Result}{Results}
\crefname{appendix}{Appendix}{Appendices}
\crefname{table}{Table}{Tables}
\crefname{myalgctr}{Algorithm}{Algorithms}

\newtheorem{theorem}{Theorem}[section]
\newtheorem{result}{Result}
\newtheorem{lemma}[theorem]{Lemma}

\newtheorem{corollary}[theorem]{Corollary}

\newtheorem{definition}[theorem]{Definition}
\newtheorem{claim}[theorem]{Claim}

\newtheorem{observation}[theorem]{Observation}

\definecolor{mylightgray}{RGB}{240,240,240}



\newenvironment{graytbox}{
\par\addvspace{0.1cm}
\begin{tcolorbox}[width=\textwidth,
                 boxsep=5pt,
                 left=1pt,
                 right=1pt,
                 top=2pt,
                 bottom=2pt,
                 boxrule=1pt,
                 arc=0pt,
                 colback=mylightgray,
                 colframe=black
                 ]
}{
\end{tcolorbox}
}

\newenvironment{whitetbox}{
\par\addvspace{0.1cm}
\begin{tcolorbox}[width=\textwidth,
                  boxsep=5pt,
                  left=1pt,
                  right=1pt,
                  top=2pt,
                  bottom=2pt,
                  boxrule=1pt,
                  arc=0pt,
                  colframe=black,
                  colback=white
                  ]
}{
\end{tcolorbox}
}

\newenvironment{myproof}{
\vspace{-0.5cm}
\begin{proof}
}{
\end{proof}
}

\newcounter{myalgctr}

\makeatletter
\renewcommand{\paragraph}{%
  \@startsection{paragraph}{4}%
  {\z@}{10pt}{-1em}%
  {\normalfont\normalsize\bfseries}%
}
\makeatother

\makeatletter
\patchcmd{\@algocf@start}
  {-1.5em}
  {0pt}
  {}{}
\makeatother

\title{
Almost 3-Approximate Correlation Clustering\\in Constant Rounds
}

\author{Soheil Behnezhad\thanks{Department of Computer Science, Stanford University.} \and Moses Charikar\footnotemark[1] \and Weiyun Ma\footnotemark[1] \and Li-Yang Tan\footnotemark[1]}

\date{}

\begin{document}

\maketitle

\begin{abstract}
    We study parallel algorithms for {\em correlation clustering}. Each pair among $n$ objects is labeled as either ``similar'' or ``dissimilar''. The goal is to partition the objects into arbitrarily many clusters while minimizing the number of disagreements with the labels. 
    
    \smallskip
    Our main result is an algorithm that for any $\epsilon > 0$ obtains a $(3+\epsilon)$-approximation in $O(1/\epsilon)$ rounds (of models such as massively parallel computation, local, and semi-streaming). This is a culminating point for the rich literature on parallel correlation clustering. On the one hand, the approximation (almost) matches a natural barrier of 3 for combinatorial algorithms. On the other hand, the algorithm's round-complexity is essentially constant.
    
    \smallskip
    To achieve this result, we introduce a simple $O(1/\epsilon)$-round parallel algorithm. Our main result is to provide an analysis of this algorithm, showing that it achieves a $(3+\epsilon)$-approximation. Our analysis draws on new connections to sublinear-time algorithms. Specifically, it builds on the work of \citet*{YoshidaYI-STOC09} on bounding the ``query complexity'' of greedy maximal independent set. To our knowledge, this is the first application of this method in analyzing the approximation ratio of any algorithm.
\end{abstract}

\thispagestyle{empty}
\clearpage
\setcounter{page}{1}

\clearpage

\section{Introduction}

We study parallel algorithms for the following {\em correlation clustering} problem. The input is a collection of objects and a complete labeling of the object-pairs as ``similar'' or ``dissimilar''. It would be convenient to model the labels by a graph $G=(V, E)$ where similar pairs are adjacent and dissimilar pairs are non-adjacent. The goal is to partition the vertex set $V$ into {\em arbitrarily} many clusters, capturing the labels as closely as possible. 
Unless $G$ is a collection of vertex-disjoint cliques, a perfect clustering satisfying all the labels does not exist. A natural objective is thus to find the clustering that minimizes {\em disagreements}\footnote{Another natural objective is to maximize agreements which is ``easier'' for approximate algorithms \cite{BansalBC-FOCS02,CharikarGW-FOCS03}.}: That is, the number of edges that go across clusters plus the number of non-adjacent pairs inside the clusters.

In contrast to some other clustering problems such as $k$-means, $k$-median, or $k$-center, correlation clustering does not require the number of clusters to be pre-specified. This, as well as the fact that correlation clustering uses information about both similarity and dissimilarity of the pairs in its output makes it a desirable clustering method for various tasks. Examples of applications of correlation clustering include image segmentation \cite{KimYNK14}, community detection \cite{ShiDELM21},  disambiguation tasks \cite{DBLP:journals/tkde/KalashnikovCMN08}, automated labeling \cite{AgrawalHKMT09,ChakrabartiKP08}, and document clustering \cite{BansalBC-FOCS02}, among others. 

In many of these applications, the input tends to be by orders of magnitude larger than what a single machine can handle. This has motivated a long and rich body of work studying efficient parallel algorithms for this problem; see  \cite{BlellochFS-SPAA12,ChierichettiDK-KDD14,PanPORRJ15,AhnCGMW-ICML15,FischerN-SODA18,CambusCMU-DISC21,Cohen-AddadLMNP21,AssadiChen-ITCS} and the references therein.

In this paper, we continue the study of parallel correlation clustering algorithms. Our main result is that an (almost) 3-approximation can be obtained in constant rounds.

\begin{graytbox}
\begin{result}[Informal -- see \Cref{thm:main}]\label{res:main}
    For any $\epsilon > 0$, one can obtain a $(3+\epsilon)$-approximation of correlation clustering in $O(1/\epsilon)$ parallel rounds.
\end{result}
\end{graytbox}

\cref{res:main} is a culminating point for parallel correlation clustering. First, its round complexity is essentially constant. Second, 3-approximation is a natural target; it remains the best achieved by any combinatorial correlation clustering algorithm, even sequential ones, that do not solve an LP.

To put our result into perspective, let us first overview the prior work. Throughout this paper, let $n = |V|$ denote the number of vertices in $G$, $m = |E|$ denote the number of edges in $G$, and $\Delta$ denote the maximum degree in $G$.

\subsection{State of Affairs on (Parallel) Correlation Clustering}

\paragraph{Sequential algorithms:} Correlation clustering was first introduced by \citet*{BansalBC-FOCS02,BansalBC-ML04} who showed that it admits a 
(large) constant approximation in polynomial time. Several follow-up works improved the approximation ratio \cite{CharikarGW-FOCS03,AilonCN-STOC05,AilonCN-JACM08,ChawlaMSY14}. The current best known is 2.06 by \citet*{ChawlaMSY14}, which is obtained by rounding the solution to a natural LP. It is also known that the problem is APX-hard \cite{CharikarGW-FOCS03}.

Among known combinatorial algorithms the best known approximation is 3. It is achieved by a surprisingly simple randomized algorithm of \citet*{AilonCN-STOC05}, known as the \pivot{} algorithm. It is worth noting that the \pivot{} algorithm is also useful in LP rounding; for instance \citet{ChawlaMSY14} first modify the input based on the LP solution, then run \pivot{} on the resulting graph. The \pivot{} algorithm iteratively picks a random vertex, clusters it with its remaining neighbors, removes this cluster from the graph, and recurses on the remaining graph. A slightly paraphrased, but still equivalent, variant of this algorithm reads as follows.



\begin{whitetbox}
\textbf{Algorithm} \pivot{} \cite{AilonCN-STOC05}\textbf{:}

\vspace{0.2cm}

\begin{itemize}[topsep=0pt, itemsep=0pt,leftmargin=13pt]
\item Draw a permutation $\pi$ of the vertex set $V$ uniformly at random.
\item While $G$ has at least one vertex:
\begin{itemize}[topsep=0pt,itemsep=0pt,leftmargin=10pt]
    \item Let $v$ be the vertex in $G$ with the lowest rank in $\pi$. Mark $v$ as a {\em pivot}.
    \item Put $v$ and its (remaining) neighbors in a cluster $C_v$ and remove the vertices of $C_v$ from $G$.
\end{itemize}
\end{itemize}
\end{whitetbox}

\paragraph{Parallel algorithms:} The algorithms noted above are all highly sequential. Even the simple \pivot{} algorithm, as stated, may take $\Omega(n)$ rounds as it picks only one pivot in each round. This has led to a rich and beautiful line of work on both parallelizing the \pivot{} algorithm and also devising new parallel algorithms from scratch \cite{BlellochFS-SPAA12,ChierichettiDK-KDD14,PanPORRJ15,AhnCGMW-ICML15,FischerN-SODA18,CambusCMU-DISC21,Cohen-AddadLMNP21,AssadiChen-ITCS}. See \cref{table:related} for an overview of these results in three models of {\em massively parallel computation} (MPC) with sublinear space, the {\em streaming} model with $\widetilde{O}(n)$ space, and the distributed local model. The formal definition of these models is not relevant for our discussion in this section, and is thus deferred to \cref{sec:implementations}.

\begin{table}[]
\centering
\begin{tabular}{|c|c|c|c|c|c|}
\hline
\rowcolor[HTML]{EFEFEF} 
\multicolumn{1}{|l|}{\cellcolor[HTML]{EFEFEF}Approx} & \multicolumn{1}{l|}{\cellcolor[HTML]{EFEFEF}Rounds} & \multicolumn{1}{l|}{\cellcolor[HTML]{EFEFEF}\parbox{1.5cm}{\centering \vspace{0.1cm} \small Sublinear\\ MPC \\[0.1cm]}} & \multicolumn{1}{l|}{\cellcolor[HTML]{EFEFEF}\parbox{1.5cm}{\centering \vspace{0.1cm} \small Semi-\\Streaming \\[0.1cm]}} & \multicolumn{1}{l|}{\cellcolor[HTML]{EFEFEF}Local} & \multicolumn{1}{l|}{\cellcolor[HTML]{EFEFEF}Reference} \\ \hline
$3$                                                  & $O(n)$                                              & \cmark                                                           & \cmark                                                 & \cmark                                             & \citet{AilonCN-STOC05,AilonCN-JACM08}                                 \\ \hline
$3+\epsilon$                                         & $O(\log n \cdot \log \Delta /\epsilon)$             & \cmark                                                           & \cmark                                                 & \cmark                                             & \citet{ChierichettiDK-KDD14}                           \\ \hline
$3$                                                  & $O(\log^2 n)$                                       & \cmark                                                           & \cmark                                                 & \cmark                                             & \citet{BlellochFS-SPAA12}                              \\ \hline
$3$                                                  & $O(\log n)$                                         & \cmark                                                           & \cmark                                                 & \cmark                                             & \citet{FischerN-SODA18,FischerN-TransAlg20}                                \\ \hline
$3$                                                  & $O(\log \Delta \cdot \log\log n)$                   & \cmark                                                           & \xmark                                                 & \xmark                                             & \citet{CambusCMU-DISC21}                               \\ \hline
$3$                                                  & $O(\log \log n)$                                    & \xmark                                                           & \cmark                                                 & \xmark                                             & \citet{AhnCGMW-ICML15,AhnCGMW-Alg21}                                 \\ \hline
$701$                                                & $O(1)$                                              & \cmark                                                           & \cmark                                                 & \cmark                                             & \citet{Cohen-AddadLMNP21}                              \\ \hline
$10^5$                                               & 1                                                   & \xmark                                                           & \cmark                                                 & \xmark                                             & \citet{AssadiChen-ITCS}                                \\ \hline\hline
$3+\epsilon$                                         & $O(1/\epsilon)$                                     & \cmark                                                           & \cmark                                                 & \cmark                                             & \textbf{This work.}                                    \\ \hline
\end{tabular}
\caption{\small Prior work on low-depth correlation clustering. Here $\epsilon > 0$ can be made arbitrarily small. See \cref{sec:implementations} for the formal definition of the models.}
\label{table:related}
\end{table}

\vspace{-0.2cm}
\paragraph{Parallelizing {\normalfont \pivot{}}:} An early work of \cite{ChierichettiDK-KDD14} was based on the idea of parallelizing \pivot{} through picking multiple pivots independently in each round. For the approximation analysis to go through, it is important for all vertices of each round to have nearly the same chance of getting marked as pivots. Additionally, since the pivots must be non-adjacent, not too many vertices can be marked as pivots in parallel, leading to a round-complexity of $O(\log^2 n / \epsilon)$ for a $(3+\epsilon)$-approximation.

Another beautiful line of work \cite{BlellochFS-SPAA12,FischerN-SODA18} on parallelizing \pivot{}, which is the closest to this paper in terms of techniques, is based on a parallel implementation of the so-called {\em randomized greedy maximal independent set} algorithm. This algorithm also picks multiple pivots in each round, but correlates these choices in a way that guarantees exactly the same output as \pivot{}. In this algorithm, one first fixes a random permutation $\pi$ as in \pivot{}. Then in each round all vertices that come earlier than their (remaining) neighbors in $\pi$ are marked as pivots in parallel, and are removed from the graph along with their neighbors. This repeats over the same permutation $\pi$ until the graph becomes empty.  It can be shown that the final set of pivots formed this way is exactly equal to the set of pivots found by \pivot{} over $\pi$. But because the decisions across different rounds are not independent, the parallel round complexity of this algorithm is more complicated to analyze.  \citet*{BlellochFS-SPAA12} were the first to show that this process w.h.p. terminates in $\poly\log n$ rounds. \citet*{FischerN-SODA18} later improved this to $O(\log n)$, which they proved is the correct bound for this algorithm by providing a matching lower bound of $\Omega(\log n)$.

Depending on the specific model of computation, the round-complexity of \pivot{} can be further improved to sublogarithmic \cite{AhnCGMW-ICML15,CambusCMU-DISC21}, e.g. to $O(\log \log n)$ in the semi-streaming model \cite{AhnCGMW-ICML15}. But all these algorithms still require $\omega(1)$ rounds. See \cref{table:related}.

\paragraph{Constant Round Parallel Algorithms:} An alternative line of work on parallel correlation clustering focuses on obtaining constant round algorithms \cite{Cohen-AddadLMNP21,AssadiChen-ITCS}. These algorithms do not attempt to parallelize \pivot{}. Rather, they are based on the key new insight that for an $O(1)$-approximation, it suffices to find clusters that are either singletons or near-cliques. This helps getting around the intricacies of finding a maximal independent set (as in \pivot{}) which in turn results in a much faster round complexity of $O(1)$. The main downside of solutions with only near-cliques and singleton clusters is that a large approximation is inherent to it (see \cite[Remark~3.10]{Cohen-AddadLMNP21} for an example). For instance, the approximations achieved by \cite{Cohen-AddadLMNP21}  and \cite{AssadiChen-ITCS} are respectively $701$ and $10^5$. 

Compared to the two lines of work discussed above on parallel correlation clustering, \cref{res:main} achieves the best of both worlds. Its approximation ratio comes close to the 3-approximation of the first line of work, and its round-complexity is essentially constant.

\vspace{-0.2cm}
\subsection{Our Contribution}\label{sec:ourcontribution}

We introduce a new algorithm \ourpivot{} which has a parameter $r \geq 1$ that adjusts the round-complexity. It proceeds in the same way as the parallel \pivot{} discussed above for $r$ rounds, then we truncate the process and no longer find pivots. As a result, unlike \pivot{}, not every vertex will have a pivot among its neighbors. Thus, we have to be careful about how we form the clusters at the end. For technical reasons, even the vertices that do have pivots among their neighbors may form singleton clusters in our algorithm. We will discuss the intuition behind this perhaps counter-intuitive process of forming the clusters later in \Cref{sec:techniques}.

\begin{whitetbox}
\textbf{Algorithm} \ourpivot{}\textbf{:} Our $r$-round algorithm for correlation clustering.

\vspace{0.2cm}

\begin{itemize}[topsep=0pt, itemsep=0pt,leftmargin=13pt]
\item Draw a permutation $\pi$ of the vertex set $V$ uniformly at random.
\item Initially every vertex is {\em unsettled}.
\item For $r$ rounds:
\begin{itemize}[topsep=0pt,itemsep=0pt,leftmargin=10pt]
    \item For any unsettled $v \in V$, mark $v$ as a {\em pivot} if $\pi(v) < \pi(u)$ for all unsettled $u \in N(v)$.
    \item Mark all pivots and any vertex adjacent to them as {\em settled}.
\end{itemize}
\item Every pivot starts a cluster which includes itself. Then for every non-pivot vertex $u$:
\begin{itemize}[topsep=0pt,itemsep=0pt,leftmargin=10pt]
    \item If there is no pivot in $N(u)$ or if there exists an unsettled vertex $w \in N(u)$ whose rank is smaller than all the pivots in $N(u)$, then $u$ forms a singleton cluster.
    \item Otherwise $u$ joins the cluster of the minimum rank pivot in $N(u)$.
\end{itemize}
\end{itemize}
\end{whitetbox}

\vspace{-0.2cm}
\paragraph{Notation:} Let us denote the cost paid by \ourpivot{} for parameter $r$, permutation $\pi$, and input graph $G$ by \costourpiv{G, \pi}. We write \costpiv{G, \pi} to denote the cost paid by \pivot{} run on permutation $\pi$. Let us also denote by $\opt{G}$ the optimal correlation clustering cost of graph $G$. 

Our main result is that our truncated algorithm \ourpivot{} achieves almost the same approximation as the full fledged \pivot{} algorithm, even if $r$ is a rather small constant:

\begin{theorem}[\textbf{Main Technical Result}]\label{thm:main}
 For any graph $G$ and any $r \geq 1$,
    $$
    \E_\pi[\costourpiv{G, \pi}] \leq \E_\pi[\costpiv{G, \pi}] + \frac{8}{2r-1} \cdot \opt{G}.
    $$
\end{theorem}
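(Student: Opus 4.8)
The plan is to couple \ourpivot{} with \pivot{} on the \emph{same} permutation $\pi$, viewing \pivot{} through its randomized-greedy-MIS implementation: fix $\pi$, run the parallel peeling process to completion, and let $M$ be the resulting pivot set, so that the first $r$ rounds of that process produce exactly the pivots $M_r\subseteq M$ found by \ourpivot{}. For a non-pivot $u$ write $\phi(u)$ for the minimum-rank pivot in $N(u)$, i.e. the center of $u$'s cluster under \pivot{}. The first step is to show that the clustering of \ourpivot{} \emph{refines} that of \pivot{}: any two vertices sharing a \ourpivot{}-cluster share a \pivot{}-cluster. The singleton rule is engineered precisely for this — if a non-pivot $u$ joins a non-singleton \ourpivot{}-cluster, its center is the minimum-rank pivot of $M_r\cap N(u)$, and were this different from $\phi(u)$ then $\phi(u)$ would be an unsettled neighbor of $u$ of rank below every pivot in $M_r\cap N(u)$, which the rule would have forced $u$ to be a singleton. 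Since refining a partition can only cut additional edges and can only \emph{remove} internal non-edges, and the latter only helps, we get the pointwise inequality
\[
\costourpiv{G,\pi}-\costpiv{G,\pi}\ \le\ X(\pi):=\#\bigl\{(u,v)\in E:\ u,v\ \text{share a \pivot{}-cluster but not a \ourpivot{}-cluster}\bigr\},
\]
so it remains to prove $\E_\pi[X(\pi)]\le \frac{8}{2r-1}\cdot\opt{G}$.

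Next I would pin down what produces these extra cuts. Call a vertex \emph{late} if it is still unsettled after $r$ rounds, i.e. $\rho(v)>r$ where $\rho(v)$ is the round in which $v$ is peeled in the untruncated process. A short case analysis identifies the vertices that \ourpivot{} isolates into singletons, which are the only possible endpoints of edges counted by $X$: non-pivots $u$ with $\phi(u)$ late; non-pivots $u$ with a late neighbor of rank below $\phi(u)$; and late pivots. In every case such a vertex sees a late vertex in its closed neighborhood of no larger rank. The bridge to the probabilistic estimate is the recursion $\rho(v)=1+\max\{\rho(u):u\in N(v),\,\pi(u)<\pi(v)\}$ when $v\in M$ and $\rho(v)=\min\{\rho(u):u\in N(v),\,\pi(u)<\pi(v),\,u\in M\}$ otherwise; unrolling it shows that ``$v$ is late'' forces a strictly decreasing path $v=w_0,w_1,\dots,w_\ell$ in $G$, alternating between pivots and non-pivots, with $\ell$ growing linearly in $r$ — this is the combinatorial source of the denominator $2r-1$.

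The heart of the argument — and, per the abstract, its new ingredient — is to bound $\E_\pi[X]$ against $\opt$. The template is the Ailon--Charikar--Newman analysis, which charges the cost of \pivot{} to \emph{bad triangles} (triples with exactly two edges): every bad triangle forces one unit of cost in every clustering, so any nonnegative weighting of bad triangles loading each pair by at most $1$ lower-bounds $\opt{G}$. I would charge each edge counted by $X$ to a bad triangle in the same spirit, but with the ``trigger'' being not that the triangle's lowest-rank vertex is a \emph{pivot} (as in ACN) but that it is \emph{late}; the weight of a triangle is the probability of this trigger, and the packing argument converts $\E_\pi[X]$ into a constant times $\opt{G}$. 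What remains is to bound the trigger probability, and here the work of \citet{YoshidaYI-STOC09} enters: conditioned on the relative order of the few vertices that matter, the probability that a designated vertex is late — equivalently, that the greedy-MIS recursion rooted there fails to resolve within $r$ rounds — should decay like $O(1/(2r-1))$ \emph{independently of the degrees}. This is exactly the delicate part of the YYI query-complexity analysis: the forced decreasing path above has unboundedly many candidate continuations, so it cannot be summed naively; the fix is YYI's conditioning on rank-orders, which makes the recursion tree contract geometrically. Combining the packing bound with this decay, and with the constant $8$ absorbing the ACN-type loss together with the YYI constants, yields $\E_\pi[X]\le\frac{8}{2r-1}\opt{G}$.

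The main obstacle I expect is precisely this last point, for two reasons. First, the naive per-instance charging breaks down: if a late pivot $p$ forces its whole \pivot{}-cluster $C$ to dissolve into singletons, one seemingly pays $|E(C)|$, which is enormous when $C$ is dense and is \emph{not} chargeable to $\opt$ (a dense cluster costs $\opt$ almost nothing). The resolution has to be global and probabilistic — crediting the many internal non-edges of $C$ that are simultaneously ``freed'', and/or using that a dense cluster's center is settled very fast so that $\Pr[p\ \text{late}]$ is tiny — and quantifying this is exactly where the YYI machinery is indispensable. Second, the YYI bound as originally stated scales with $n+m$; that is harmless for a sublinear-time algorithm but fatal here since $\opt{G}$ can be arbitrarily smaller, so the whole charging must stay anchored to the cost-witnessing bad triangles throughout the unrolling of the $\rho$-recursion rather than to the ambient edge count.
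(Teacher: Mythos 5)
Your first half tracks the paper closely: coupling \ourpivot{} and \pivot{} on the same $\pi$, the refinement property of the clusterings, the consequence $X\subseteq E$, the classification of mistakes via unsettled (``late'') vertices of small rank, and the observation that a mistake forces a rank-decreasing query path of length $\Theta(r)$ are exactly \Cref{cl:cluster-refine}, \Cref{cl:X-plus-pair}, \Cref{lem:X-pair-classification}, and \Cref{cl:stack-gets-large,cl:stack-vertex} in the paper, and your reduction to $\E_\pi|X|\le\frac{8}{2r-1}\opt{G}$ is \Cref{lem:EX}. The gap is in the step you yourself flag as the heart of the argument. Your plan charges each mistake to a bad triangle \emph{at or near the mistake}, with the charge weighted by the probability that a designated vertex is late, and you then need that probability to be $O(1/(2r-1))$ ``independently of the degrees.'' That bound is false, and \citet{YoshidaYI-STOC09} does not supply it: their result is an \emph{averaging} statement about the expected query complexity of a random vertex or edge, not a per-vertex tail bound on the number of parallel rounds. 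The paper's own \Cref{lem:sizeofpivotsourpiv} (\Cref{sec:ourpiv-size}) constructs graphs in which designated vertices remain unsettled after $r$ rounds with probability $1-o(1)$, and \Cref{sec:det-apx-ourpiv} (\Cref{lem:worstcasepermutation}) explains why any such \emph{local} charging is doomed in principle: it would yield a deterministic $O(n)\cdot\opt{G}$ guarantee that \ourpivot{} violates by a factor $\Omega(n^2/r)$ on some permutations.

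The actual mechanism producing the $\frac{1}{2r-1}$ is different and is the paper's main new idea. A mistake $\{u,v\}\in X$ is charged to a bad triangle \emph{far away}: one runs the pair-query oracle on $\{u,v\}$, picks $\ell\in[2,2r]$ uniformly at random, takes the stack of recursive vertex-oracle calls at the \emph{first} moment it has $\ell$ elements (this exists by \Cref{cl:stack-gets-large}), and charges the triangle formed by the last three vertices of the resulting path, which is shown to be bad (\Cref{cl:last-3-BT}). The decay then comes not from lateness being unlikely but from a congestion bound: for every fixed vertex pair $\{a,b\}$, a YYI-style rotation of the ranks along the charging path (\Cref{lem:querypathplus,lem:querypathminus}, via \Cref{cl:mapinjectiveplus,cl:mapinjectiveminus}) shows the total expected number of (mistake, $\ell$) charges landing on triangles through $a,b$ is at most $8$, so averaging over the $2r-1$ choices of $\ell$ gives width $\frac{8}{2r-1}$ (\Cref{lem:width}), and the fractional triangle-packing argument (\Cref{lem:width-gives-apx}) converts width into the claimed bound. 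Without this ``spread the charge over $\Theta(r)$ distant triangles and bound the load on each pair'' step — and with the per-vertex lateness probability claim being untrue — your outline does not close.
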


Our proof of \cref{thm:main} is fundamentally different from how the original \pivot{} algorithm was analyzed, and is based on a new connection to sublinear time algorithms. We elaborate more on the key intuitions behind our analysis in \cref{sec:techniques}.

Combined with the 3-approximation guarantee of \cite{AilonCN-JACM08} for algorithm \pivot{}, \cref{thm:main} implies:

\begin{corollary}\label{cor:3-apx}
For any graph $G$ and any $r \geq 1$,
    $$
    \E_\pi[\costourpiv{G, \pi}] \leq \left(3+\frac{8}{2r-1}\right) \cdot \opt{G}.
    $$
\end{corollary}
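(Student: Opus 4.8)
The plan is to obtain \cref{cor:3-apx} immediately by composing \cref{thm:main} with the classical expected $3$-approximation guarantee of the \pivot{} algorithm. First I would invoke the result of \citet{AilonCN-JACM08} for \pivot{} (a.k.a.\ KwikCluster) on the MinDisagree objective in our graph model: for every graph $G$,
$$
\E_\pi[\costpiv{G, \pi}] \;\le\; 3 \cdot \opt{G}.
$$
Then, substituting this into the bound of \cref{thm:main}, we get for every $r \ge 1$
$$
\E_\pi[\costourpiv{G, \pi}] \;\le\; \E_\pi[\costpiv{G, \pi}] + \frac{8}{2r-1}\cdot\opt{G} \;\le\; 3 \cdot \opt{G} + \frac{8}{2r-1}\cdot\opt{G} \;=\; \left(3 + \frac{8}{2r-1}\right)\cdot\opt{G},
$$
which is exactly the claimed inequality.

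The only point that requires a sanity check is that the \citet{AilonCN-JACM08} bound applies verbatim here: it is stated for the same minimization objective (disagreements, i.e.\ edges cut across clusters plus non-adjacent pairs placed together), the expectation there is over precisely the same uniformly random vertex permutation $\pi$ that both \pivot{} and \ourpivot{} draw, and it holds for arbitrary graphs with no structural assumptions. Since all of these match our notation and setting, no adaptation of their analysis is needed, and the two displayed inequalities compose directly.

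Strictly speaking there is no real obstacle in the corollary itself — the substantive work has been quarantined into \cref{thm:main}, whose proof is the new contribution and rests on the connection to \citet*{YoshidaYI-STOC09}-style query-complexity bounds for randomized greedy maximal independent set. In the composed statement, the leading constant $3$ is inherited entirely from the external \pivot{} analysis, while the additive term $\tfrac{8}{2r-1}\cdot\opt{G}$, which captures the loss from truncating the pivot process after $r$ rounds and is the quantity controlled by \cref{thm:main}, vanishes as $r \to \infty$; so one should simply double-check that this is the only slack introduced and that it is genuinely additive against $\opt{G}$ (as opposed to multiplicative), both of which are already guaranteed by the form of \cref{thm:main}.
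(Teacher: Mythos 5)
Your proposal is correct and is essentially identical to the paper's own derivation: the paper also obtains \cref{cor:3-apx} by plugging the $\E_\pi[\costpiv{G, \pi}] \le 3\cdot\opt{G}$ guarantee of \citet{AilonCN-JACM08} into the bound of \cref{thm:main}. No further commentary is needed.
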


\paragraph{Implications:} \Cref{cor:3-apx} implies that running \ourpivot{} for $r = O(1/\epsilon)$ suffices for a $(3+\epsilon)$-approximation. This is useful because \ourpivot{} can be implemented in $O(r)$ rounds of various models. In particular, we obtain the following results; see \cref{sec:implementations} for models/implementations.

\begin{corollary}[MPC] \label{cor:MPC}
    For any $\epsilon > 0$, there is a randomized $O(1/\epsilon)$-round MPC algorithm that obtains a $(3+\epsilon)$-approximation of correlation clustering. The algorithm requires $O(n^\delta)$ space per machine where constant $\delta > 0$ can be made arbitrarily small and requires $O(m)$ total space.
\end{corollary}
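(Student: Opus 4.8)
The plan is to observe that the purely combinatorial algorithm \ourpivot{} with parameter $r=\Theta(1/\epsilon)$ can be simulated in $O(r)$ rounds of low-space \MPC{}, after which \cref{cor:3-apx} immediately yields the claimed $(3+\epsilon)$-approximation. The key observation is that every step of \ourpivot{} decomposes into a constant number of \emph{neighborhood aggregations}: each vertex $v$ needs to compute a semigroup aggregate --- a minimum of priorities, or a logical OR of a boolean --- over the subset of its incident edges picked out by a simple local predicate (``the other endpoint is unsettled'', or ``the other endpoint is a pivot''). So it suffices to recall the standard fact that such an aggregation runs in $O(\log_{n^\delta} n)=O(1/\delta)=O(1)$ \MPC{} rounds with $O(n^\delta)$ space per machine and $O(m+n)$ total space: sort the edge list across machines, and for a vertex of degree $d$ spread its incident edges over $O(d/n^\delta)$ machines organized as a constant-depth tree, aggregating up the tree and broadcasting the result back down.

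Granting this primitive, the simulation is mechanical. First, each vertex $v$ independently samples $\Theta(\log n)$ random bits and breaks ties by $\ID(v)$; w.h.p. the resulting priorities are distinct and realize a uniform random permutation $\pi$, and $v$ keeps this priority for the whole run. Every vertex also stores one \emph{settled} bit. Round $i\in\{1,\dots,r\}$ is then: (i) each unsettled $v$ aggregates $\min\{\pi(u):u\in N(v),\,u\text{ unsettled}\}$ and marks itself a pivot iff $\pi(v)$ is smaller; (ii) each vertex aggregates the OR of ``$u$ is a new pivot'' over $u\in N(v)$ and sets its settled bit if this OR is true or it is itself a pivot. This is $O(1/\delta)$ \MPC{} rounds per round of \ourpivot{}, for $O(r/\delta)=O(1/\epsilon)$ rounds total so far.

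For the final clustering step, with $\pi$ and the pivot/settled bits now fixed, each non-pivot $u$ performs two more aggregations to learn $p^\star(u)=\min\{\pi(w):w\in N(u),\,w\text{ pivot}\}$ (together with the identifier of the attaining pivot) and $q^\star(u)=\min\{\pi(w):w\in N(u),\,w\text{ unsettled}\}$. If $u$ has no pivot neighbor, or $q^\star(u)<p^\star(u)$, then $u$ outputs the singleton $\{u\}$; otherwise $u$ joins the cluster labeled by the identifier attaining $p^\star(u)$. Pivots output their own identifier. This is again $O(1)$ \MPC{} rounds, so the whole algorithm runs in $O(r)=O(1/\epsilon)$ rounds with $O(n^\delta)$ space per machine; the total space is $O(m+n)=O(m)$ (isolated vertices are output as singletons for free). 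Correctness of the clustering is exactly the combinatorial description of \ourpivot{}, and the approximation bound is \cref{cor:3-apx}.

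The only place real care is needed is the high-degree regime: a vertex with $\deg(v)>n^\delta$ cannot see its neighborhood on one machine, so the tree-aggregation layout --- and the accompanying bookkeeping that pins the total space at $O(m)$ rather than blowing it up by a $\log n$ or $1/\delta$ factor --- is where the details sit. Everything else is a direct transcription of \ourpivot{} into aggregation calls over a fixed set of random priorities.
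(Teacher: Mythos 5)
Your implementation plan is essentially the paper's: the paper also replaces the global permutation by independent random ranks per vertex (its variant \ourpivotvar{}, with ties broken by ID), and it also reduces each round of \ourpivot{} to per-vertex min/OR aggregations over incident edges carried out by MPC sorting in $O(1/\delta)$ rounds with $O(n^\delta)$ space per machine and $O(m)$ total space, followed by the same two final aggregations (minimum-rank pivot neighbor vs.\ minimum-rank unsettled neighbor). So the simulation part is fine and matches.

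The genuine gap is in the last step, where you write that ``the approximation bound is \cref{cor:3-apx}.'' That corollary is a statement about a \emph{uniformly random permutation} $\pi$, and it bounds an \emph{expected} cost; your algorithm instead uses $\Theta(\log n)$-bit random priorities with ID tie-breaking, which coincide with a uniform permutation only on the event that all priorities are distinct. To conclude an expected $(3+\epsilon)$-approximation you must charge the complementary (collision) event, on which the output can cost up to $\binom{n}{2}$, against $\opt{G}$. This works when $\opt{G}\geq 1$ provided the priority range is $n^c$ for a sufficiently large constant $c$ (so the failure probability $n^{2-c}$ beats the $n^2$ worst-case cost), but it fails outright when $\opt{G}=0$: there, any nonzero probability of a nonzero cost destroys any multiplicative guarantee. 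The paper closes this by observing separately that when every component is a clique, the tie-broken variant outputs the optimal clustering with probability $1$ (each clique's minimum-priority vertex becomes a pivot in round one and all other clique vertices join it, collisions or not), and it then states the guarantee as $\bigl(3+\tfrac{8}{2r-1}+n^{-\Omega(1)}\bigr)$, absorbing the extra term into $\epsilon$. Your proposal asserts ``w.h.p.\ the priorities realize a uniform random permutation'' and stops there; you need this extra case analysis (and an explicit choice of the priority range) for the expectation bound to go through.
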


\begin{corollary}[Streaming]\label{cor:streaming}
    For any $\epsilon > 0$, there is a randomized $O(1/\epsilon)$-pass streaming algorithm using $O(n \log n)$ bits of space that obtains a $(3+\epsilon)$-approximation of correlation clustering.
\end{corollary}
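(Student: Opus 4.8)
\textbf{Proof proposal for \Cref{cor:streaming}.}

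The plan is to show that \ourpivot{} with $r = O(1/\epsilon)$ can be implemented in the semi-streaming model with $O(r)$ passes and $O(n \log n)$ bits of space. The correctness and approximation guarantee come directly from \Cref{cor:3-apx}, so the whole task is an implementation argument. First I would fix, at the outset, a uniformly random permutation $\pi$ of $V$; equivalently, assign each vertex an i.i.d.\ uniform priority from a range of size $\poly(n)$, which requires $O(n \log n)$ bits and is done before reading the stream. The key data structure maintained across passes is, for each vertex $v$, a single ``status'' word indicating whether $v$ is currently \emph{unsettled}, a \emph{pivot}, or \emph{settled-but-not-pivot}; this is $O(n)$ bits total. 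Each pass of the streaming algorithm will simulate one round of the \emph{pivot-selection} loop of \ourpivot{}.

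Concretely, I would argue as follows. In pass $i$ (for $i = 1, \dots, r$), we want to mark as a pivot every unsettled $v$ such that $\pi(v) < \pi(u)$ for all unsettled $u \in N(v)$. To do this with $O(n \log n)$ space, maintain for each unsettled vertex $v$ a register $\mathrm{minNbr}(v)$ holding the smallest priority seen so far among its currently-unsettled neighbors (initialized to $+\infty$ at the start of the pass). As each edge $\{u,v\}$ arrives, if both $u$ and $v$ are unsettled we update $\mathrm{minNbr}(v) \leftarrow \min(\mathrm{minNbr}(v), \pi(u))$ and symmetrically for $u$; edges with at least one settled endpoint are simply skipped. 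At the end of the pass, an unsettled $v$ becomes a pivot iff $\pi(v) < \mathrm{minNbr}(v)$. This is exactly the local rule in the algorithm's pivot-marking step, so the set of pivots produced matches the one defined by \ourpivot{} on this $\pi$. Updating the settled/unsettled flags after identifying the new pivots requires one more piece of bookkeeping: a vertex becomes settled if it is a new pivot or adjacent to one. We can fold this into the \emph{next} pass (i.e., when scanning edges in pass $i+1$, an endpoint that is adjacent to a pivot marked in pass $i$ is treated as settled), or equivalently spend a dedicated half-pass; either way it is $O(1)$ extra passes and no extra asymptotic space. After $r$ such passes the status of every vertex is fixed, and this takes $O(r) = O(1/\epsilon)$ passes and $O(n \log n)$ bits throughout.

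It remains to output the clustering, i.e.\ to implement the final cluster-assignment rule within the same budget. For each non-pivot $u$ we must decide: is there a pivot in $N(u)$, and if so, is there an unsettled $w \in N(u)$ whose rank is smaller than the minimum-rank pivot in $N(u)$? Both questions are answered by one more pass in which, for each non-pivot $u$, we track two quantities as its incident edges arrive: $p(u)$, the minimum $\pi$-value over pivot neighbors of $u$ (or $+\infty$ if none), and $q(u)$, the minimum $\pi$-value over \emph{unsettled} neighbors of $u$ (or $+\infty$ if none). Crucially, ``unsettled'' here refers to the final status after all $r$ rounds, which we already know. At the end of this pass, $u$ joins the cluster of the pivot attaining $p(u)$ if $p(u) < \infty$ and $p(u) < q(u)$; otherwise $u$ is a singleton. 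Each cluster is named by the ID of its pivot, so the output (one cluster-ID per vertex) is $O(n \log n)$ bits. This is again $O(n \log n)$ space and one extra pass.

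The only subtlety — and the step I would be most careful about — is that the ``settled'' status is determined by adjacency to a pivot, which in a stream is not known until after the pass that reads the relevant edge; the remedy (deferring the settled-flag update to the following pass, or using a dedicated scan) costs only a constant factor in passes, which is absorbed into the $O(1/\epsilon)$ bound. Everything else is a routine translation of the purely local rules of \ourpivot{} into per-vertex aggregates computable in a single scan, so the total cost is $O(1/\epsilon)$ passes and $O(n \log n)$ bits of space, as claimed; combined with \Cref{cor:3-apx} this yields the stated $(3+\epsilon)$-approximation.
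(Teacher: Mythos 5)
Your proposal is correct and follows essentially the same route as the paper: store a random permutation, simulate each of the $r$ pivot-selection rounds with per-vertex min-rank aggregates over unsettled neighbors (one pass), settle neighbors of new pivots in a separate scan (a second pass per round, as the paper does), and use one final pass computing the minimum-rank pivot neighbor and minimum-rank unsettled neighbor of each non-pivot to apply the cluster-assignment rule, giving $2r+1 = O(1/\epsilon)$ passes and $O(n\log n)$ bits, with the approximation from \Cref{cor:3-apx}. The only caveat is your suggested alternative of ``folding'' the settled-flag update into the next pass: as stated it does not work, since the min-aggregates of that next pass would be polluted by neighbors whose settledness (adjacency to a round-$i$ pivot) is only revealed later in the stream, so the dedicated extra pass you also offer (and which the paper uses) is the right choice.
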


\begin{corollary}[Local]\label{cor:local}
    For any $\epsilon > 0$, there is a randomized $O(1/\epsilon)$-round local algorithm that obtains a $(3+\epsilon)$-approximation of correlation clustering using $O(\log n)$-bit messages.
\end{corollary}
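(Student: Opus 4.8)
\textbf{Proof proposal for \Cref{cor:local}.}

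The plan is to give a \LOCAL{} implementation of the \ourpivot{} algorithm that runs in $O(r)$ rounds with $O(\log n)$-bit messages, and then invoke \Cref{cor:3-apx} with $r = \Theta(1/\epsilon)$ to get the $(3+\epsilon)$ approximation. First I would set up the standard \LOCAL{} model conventions: each vertex is a processor, knows its incident edges, and can use its $\Theta(\log n)$-bit $\ID$ (or a freshly drawn $\Theta(\log n)$-bit random string that is unique w.h.p.) to realize the random permutation $\pi$; formally $\pi$ is the ordering induced by these random labels, and ties (which occur with probability $1/\poly(n)$) can be broken by $\ID$ or simply absorbed into the failure probability. Each vertex broadcasts its own label to all neighbors in the first round, so after one round every vertex knows $\pi$ restricted to its closed neighborhood.

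Next I would simulate the $r$ iterations of the pivot-selection loop. Maintain at each vertex a bit indicating whether it is \emph{unsettled}. In iteration $i$: every unsettled vertex $v$ checks whether $\pi(v) < \pi(u)$ for all \emph{unsettled} neighbors $u$; this requires $v$ to know which of its neighbors are currently unsettled, so at the start of each iteration every vertex sends its one-bit status to all neighbors. If $v$ wins, it marks itself a pivot; it then sends a one-bit ``I am a pivot'' message to all neighbors. Each vertex receiving such a message marks itself settled, and pivots mark themselves settled too. Thus one iteration costs $O(1)$ rounds and $O(1)$-bit messages, so the whole loop is $O(r)$ rounds. After the loop, I would handle the cluster-formation step: a non-pivot $u$ needs to know (i) whether $N(u)$ contains a pivot, (ii) the minimum rank among pivots in $N(u)$, and (iii) whether there is an unsettled $w \in N(u)$ with $\pi(w)$ smaller than that minimum pivot rank. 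All three are computable from information $u$ already has — it knows $\pi$ on $N(u)$, it learned which neighbors became pivots (in the last iteration they sent the pivot bit; to be safe, have every vertex that is ever marked a pivot send a pivot bit, and every vertex that is ever marked settled send a settled bit, retaining the latest values), so after one more round of exchanging final status bits $u$ can locally decide whether to join the min-rank pivot's cluster or form a singleton. The output cluster identifier of $u$ is either $u$'s own $\ID$ (singleton) or the $\ID$ of the chosen pivot; in either case it is $O(\log n)$ bits, which the pivot can send to $u$ in a final round, or $u$ can just output the $\ID$ it already holds.

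The correctness claim is that this simulation produces exactly the clustering that the \ourpivot{} definition prescribes for the same $\pi$: the pivot set after $r$ iterations is identical because each vertex's local rule (win against currently-unsettled neighbors) is evaluated on the correct status bits, which propagate with one-round delay matching the iteration structure; and the cluster-assignment rule is evaluated verbatim. Hence by \Cref{cor:3-apx}, with $r = \lceil 8/\epsilon \rceil + 1$ so that $8/(2r-1) \le \epsilon$, the expected cost is at most $(3+\epsilon)\opt{G}$, and the round count is $O(r) = O(1/\epsilon)$ with $O(\log n)$-bit messages as claimed. The only real subtlety — and the step I expect to require the most care — is the handling of the random permutation: we must argue that $\Theta(\log n)$-bit random labels suffice to break ties among all $\binom{n}{2}$ pairs with high probability (a union bound gives $n^{-c}$ for labels of $c'\log n$ bits), and that conditioning on the no-tie event does not distort the distribution of $\pi$ enough to affect the expectation bound — or, more cleanly, note that \Cref{cor:3-apx} holds for the exact uniform permutation and that the total-variation distance between the label-induced order and a uniform permutation is $1/\poly(n)$, which changes the expected cost (trivially at most $n^2$) by a negligible additive term that can be folded into $\epsilon$. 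I would also remark that the same simulation, ported to \MPC{} and semi-streaming via the standard reductions referenced in \cref{sec:implementations}, yields \Cref{cor:MPC,cor:streaming}.
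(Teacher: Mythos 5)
Your implementation is essentially the paper's: random $O(\log n)$-bit ranks in place of a global permutation, two \LOCAL{} rounds per iteration of \ourpivot{} (status-bit exchange, then pivot announcements), and one final round for cluster formation, giving $2r+1$ rounds with $O(\log n)$-bit messages and then invoking \Cref{cor:3-apx} with $r=\Theta(1/\epsilon)$; this is exactly the \ourpivotvar{} simulation in \cref{sec:implementations}.

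There is, however, one genuine gap in your approximation accounting for the discretized ranks. You argue that the tie event (or the $1/\poly(n)$ total-variation distance to a uniform permutation) changes the expected cost by an additive $\leq n^{2}\cdot n^{-\Omega(1)}$ term ``which can be folded into $\epsilon$.'' An additive error cannot in general be absorbed into a \emph{multiplicative} $(3+\epsilon)$-guarantee: if $G$ is a disjoint union of cliques then $\opt{G}=0$, and any nonzero expected extra cost violates the guarantee outright. The paper closes exactly this case: it first observes that every clique component is clustered perfectly by the variant with probability $1$ (ties or not), so in the remaining case $\opt{G}\geq 1$, and only then does the bad event contribute $n^{2-c}\cdot\binom{n}{2}\leq n^{-\Omega(1)}\cdot\opt{G}$, i.e.\ a \emph{multiplicative} $n^{-\Omega(1)}$ loss that is then folded into $\epsilon$. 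Your alternative suggestion, that ``conditioning on the no-tie event does not distort the distribution enough,'' needs the same quantitative treatment, since conditioning does change the distribution and the conditional cost bound must still be compared against $\opt{G}$. Adding this two-case argument (cliques handled exactly; otherwise $\opt{G}\geq1$) makes your proof complete; the simulation and correctness argument are otherwise fine.
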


\begin{corollary}[LCA]\label{cor:LCA}
    For any $\epsilon > 0$, there is a randomized local computation algorithm (LCA) that obtains a $(3+\epsilon)$-approximation of correlation clustering in $\Delta^{O(1/\epsilon)} \cdot \poly\log n$ time/space.
\end{corollary}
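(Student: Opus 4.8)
The plan is to show that \ourpivot{} is a \emph{local} algorithm — the cluster label of any queried vertex is determined by a radius-$O(r)$ neighborhood of that vertex — and then to simulate it on demand from a single shared source of randomness. Fix $r = \Theta(1/\epsilon)$ large enough that $\frac{8}{2r-1} \le \epsilon$; by \Cref{cor:3-apx}, \ourpivot{} with this $r$ has expected cost at most $(3+\epsilon)\cdot\opt{G}$. For the LCA it suffices, on a query $v \in V$, to return the label of $v$'s cluster, which we take to be the $\ID$ of the pivot that $v$ is assigned to, or $\ID(v)$ itself if $v$ forms a singleton; we must do this consistently across queries and within $\Delta^{O(1/\epsilon)} \cdot \poly\log n$ time and space.

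The main structural claim is: for every vertex $v$, both whether $v$ is a pivot (and in which round it becomes one) and, when $v$ is not a pivot, which cluster it joins, are deterministic functions of the ranks $\pi(\cdot)$ restricted to the ball of radius $O(r)$ around $v$. I would prove this by induction on the round index $i \le r$. The event ``$u$ is unsettled at the start of round $i$'' is a function of the events ``$w$ is marked a pivot in round $j$'' over $w \in N[u]$ and $j < i$; the event ``$u$ is marked a pivot in round $i$'' is a function of the ranks of $N[u]$ together with the unsettled-status of the vertices in $N[u]$ at the start of round $i$. Unwinding this recursion, each level adds one hop, so the round-$i$ pivot/settled status of $u$ depends only on $\pi$ restricted to $B(u, O(i))$. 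The final cluster-formation step for a non-pivot $v$ reads only $N(v)$, the pivot status of each neighbor (radius $O(r)$ around $v$), the settled status of each neighbor (radius $O(r)$), and the ranks in $N(v)$; hence $v$'s label depends only on $\pi$ restricted to the ball of radius $O(r)$ around $v$, a set of at most $\Delta^{O(r)} = \Delta^{O(1/\epsilon)}$ vertices.

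Given the structural claim, the LCA follows by the standard paradigm of simulating a bounded-round parallel algorithm locally. Sample once a hash function $h$ from a $\Delta^{\Theta(r)}$-wise independent family with range $\poly(n)$, breaking ties by $\ID$, and use $h(\ID(u))$ as the rank of $u$; the seed of $h$ occupies $\Delta^{O(r)} \cdot \poly\log n$ bits. To answer a query $v$, run BFS from $v$ to depth $O(r)$ through the adjacency-list oracle — visiting $\Delta^{O(r)}$ vertices with $\poly\log n$ overhead each for IDs, neighbor lists, and hash evaluations — then simulate the $r$ peeling rounds and the cluster-formation rule on the explored subgraph and output $v$'s label, which is well defined by the structural claim. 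Since every query uses the same $h$, the answers are mutually consistent and coincide with the labels of a single global execution of \ourpivot{}. The total time and space are $\Delta^{O(r)} \cdot \poly\log n = \Delta^{O(1/\epsilon)} \cdot \poly\log n$.

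Finally, I would check that the approximation guarantee of \Cref{cor:3-apx} survives replacing a uniform random permutation by $h$. Writing $\costourpiv{G,h}$ as the sum over pairs $\{u,w\}$ of the indicator that $\{u,w\}$ is a disagreement, each indicator is a function of $h$ on $B(u,O(r)) \cup B(w,O(r))$ — and a non-edge can be a within-cluster disagreement only when $u$ and $w$ share a common neighbor, in which case the two balls are close — so every indicator depends on at most $\Delta^{O(r)}$ hash values. By linearity of expectation, $\E_h[\costourpiv{G,h}]$ is determined by the $\Delta^{O(r)}$-wise marginals of $h$, which coincide with those of a uniform random permutation up to a negligible additive error from hash collisions (absorbed by taking the range large); hence $\E_h[\costourpiv{G,h}] \le (3+\epsilon)\cdot\opt{G}$. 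The main obstacle is the bookkeeping behind the structural claim — carefully tracking how the closed-neighborhood dependencies compound over the $r$ peeling rounds to yield a clean radius-$O(r)$ bound — and, secondarily, making the bounded-independence reduction fully rigorous.
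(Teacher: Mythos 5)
Your proposal is correct and follows essentially the same route as the paper: the paper also chooses $r = O(1/\epsilon)$, notes that \ourpivot{} (in its variant with independent random ranks and ID tie-breaking) runs in $O(r)$ rounds of the local model, and converts it to an LCA by collecting the $O(r)$-hop neighborhood of the queried vertex, citing the standard LOCAL-to-LCA transformations for the bookkeeping you carry out by hand. The only difference is one of detail: you make the shared-randomness issue explicit via a $\Delta^{\Theta(r)}$-wise independent hash (a fine way to do it), whereas the paper absorbs this into the cited standard reductions and handles rank collisions exactly as you do, through the $n^{-\Omega(1)}$ additive slack in the choice of $r$.
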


\paragraph{Instance-wise Guarantee:} It is worth emphasizing that our approximation guarantee of \ourpivot{} in \cref{thm:main} is {\em instance-wise} close to what \pivot{} achieves. That is, if for some input $G$ the \pivot{} algorithm obtains an $\alpha$-approximation where $\alpha < 3$, then the 3-factors of all corollaries above for \ourpivot{} also improve to $\alpha$ for this instance $G$. This is appealing for two main reasons:
\begin{itemize}[topsep=5pt,leftmargin=15pt]
    \item \textbf{Practical purposes:} If \pivot{} performs better than its worst-case guarantee on certain input distributions, then so does \ourpivot{}.
    \item \textbf{Rounding the natural LP in $O(1)$ rounds:} As discussed, \cite{ChawlaMSY14} obtain a  2.06-approximation by first (locally) modifying the graph based on the LP solution, and then running \pivot{} on the modified graph. Thus if we are given this LP solution in any of the models above, we can first modify the graph (this step is simple and local, so can be done efficiently in all these models) and then instead of  \pivot{} run \ourpivot{} on it. Because of the instance-wise guarantee, this also leads to an (almost) 2.06-approximation, but now in only $O(1)$ rounds.
\end{itemize}

\paragraph{Future Work:} Our work leaves several interesting questions especially in big data models where there is no direct notion of rounds. See \cref{sec:conclusion} for some of these open problems.

\clearpage

\section{A High-Level Overview of Our Techniques}\label{sec:techniques}

In this section, we give some high-level intuitions behind both our algorithm and its analysis.

It would be useful to first compare the outputs of \pivot{} and \ourpivot{} when both algorithms are run on the same permutation $\pi$. \Cref{fig:compare} provides such a comparision over an example for $r=1$. We write \clustersalg{} and \clusterspivot{} to denote the output clusters of \ourpivot{} and \pivot{} respectively, and use \pivotspivot{} and \pivotsourpivot{} to denote their pivots respectively.

\begin{figure}[h]
    \centering
    \includegraphics{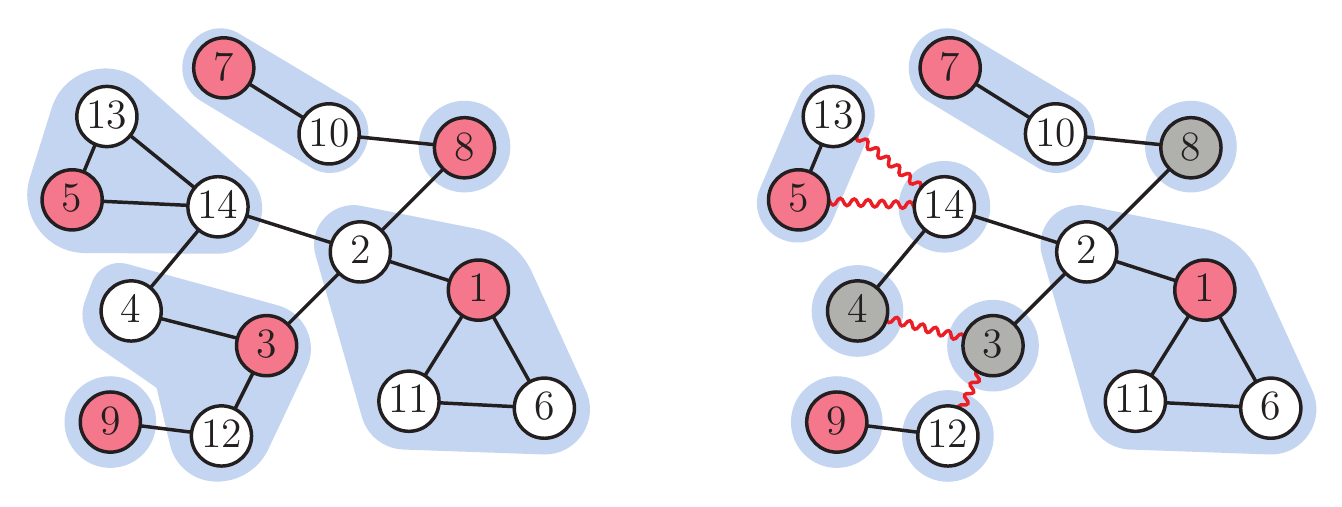}
    \caption{\small Comparison of \pivot{} (left) with \ourpivot{} (right) for $r=1$ over the same permutation $\pi$ whose ranks are shown on the vertices. The red vertices are pivots, the blue areas are the clusters, and the gray vertices are those marked as unsettled by \ourpivot{}. The red zigzagged edges are those that \ourpivot{} makes a mistake for but \pivot{} does not.}
    \label{fig:compare}
\end{figure}

\vspace{-0.2cm}
\paragraph{Some Intuition Behind {\normalfont \ourpivot{}}:} The final step of \ourpivot{}, where we form the clusters, is specifically designed to ensure that \clustersalg{} is a {\em refinement} of \clusterspivot{}. That is, each cluster in \clustersalg{} is completely inside a  cluster in \clusterspivot{} (see \Cref{fig:compare}). To achieve this, if some vertex $u$ joins the cluster of a pivot $v \in \pivotsourpivot{}$ in \ourpivot{}, we make sure that $u$ joins the cluster of $v$ in \pivot{} too. The unsettled vertices in \ourpivot{} are precisely defined to guarantee this property. Intuitively, while any settled vertex has the same pivot status in both \pivot{} and \ourpivot{}, unsettled vertices in \ourpivot{} may or may not be pivots in \pivot{}. Now if a vertex $u$ joins the cluster of a neighboring pivot $v$ in \ourpivot{}, we make sure that not only $v$ is the lowest rank pivot neighboring $u$, but that $v$'s rank is smaller than all unsettled neighbors of $u$ too. In \Cref{fig:compare}, e.g., even though vertex 12 has an adjacent pivot 9, it  decides not to join 9's cluster because of its unsettled neighbor 3. Note that 3 indeed ends up being a pivot in \pivot{}, so this decision was crucial for \clustersalg{} to be a refinement of \clusterspivot{}. On the other hand, vertex 14 does not join the cluster of 5 because of its unsettled neighbor 4, but this time 4 is not a pivot in \pivot{} which clusters 14 and 5 together. In our analysis, we have to make sure that our criteria which rather aggressively puts the vertices into singleton clusters does not hurt the approximation ratio of \ourpivot{} much, compared to \pivot{}.

\vspace{-0.1cm}
\paragraph{Analysis of {\normalfont \ourpivot{}}:} To analyze \ourpivot{}, we couple it with \pivot{} (over the same $\pi$) and  bound the number of vertex pairs that are mistakenly clustered by \ourpivot{} but correctly clustered by \pivot{}. Let $X$ be the set of such pairs. Our key contribution is to show that $\E|X| = O(\opt{G}/r)$ (stated as \Cref{lem:EX}). This is useful because if we run \ourpivot{} for only $r=O(1/\epsilon)$ steps, then we only pay an expected extra cost of $\E|X| \leq \epsilon \cdot \opt{G}$ compared to what \pivot{} pays, which is essentially the guarantee of our main \Cref{thm:main}. Below, we present the key ideas behind how we prove this upper bound on $\E|X|$.

Our first insight is that no pair in $X$ can be a non-edge. Note that the endpoints of any non-edge in $X$, by definition of $X$, should be clustered together in \clustersalg{} but separated in \clusterspivot{}. However, this would contradict our earlier discussion that \clustersalg{} is a refinement of \clusterspivot{}. Therefore, $X$ is essentially the set of edges in $E$ whose endpoints are separated by \clustersalg{}, but clustered together in \clusterspivot{}. In \Cref{fig:compare}, the set $X$ is illustarted by zigzagged red edges. 

To show how we relate $X$ to \opt{G}, let us first recall a standard framework of the literature in {\em charging bad triangles}. A bad triangle is a triplet of vertices $\{u, v, w\}$ such that two of the pairs $\{u, v\}, \{u, w\}, \{v, w\}$ are edges and one is a non-edge. Observe that no matter how the vertices of a bad triangle are clustered, at least one pair must be clustered incorrectly. Thus, if one finds $\beta$ edge disjoint bad triangles in $G$, then $\opt{G} \geq \beta$. This implies that to show $|X| = O(\opt{G}/r)$, it suffices to charge $\alpha = \Omega(r)$ bad triangles for every edge in $X$, and guarantee that these $|X| \cdot \alpha$ bad triangles are all edge disjoint. Instead of a deterministic charging scheme, it would be more convenient to pick the charged bad triangles randomly. Doing so and by generalizing the same argument, one can show that instead of full edge disjointness of bad triangles, it suffices to prove that each vertex pair belongs, in expectation, to at most one charged bad triangle.

\vspace{-0.1cm}
\paragraph{Key Idea I -- Charging Triangles Far Away:} Our charging scheme differs from those in the literature in that a mistake and the triangle that we charge it to may be far (at distance $\Omega(r)$) from each other. Previous charging schemes were all {\em highly local}, in that they charge any mistake to a bad triangle involving it. For instance, the 3-approximate analysis \cite{AilonCN-JACM08} of \pivot{} charges any mistake $\{u, v\}$ to the bad triangle $\{u, v, w\}$ where $w$ is the lowest rank pivot in $N(u) \cup N(v)$. It is impossible to analyze \ourpivot{} with a local charging schemes. The reason is that a local charging scheme can be shown to imply a deterministic upper bound of $O(n \cdot \opt{G})$ on the clustering cost which, for instance, holds for $\pivot{}$ and {\em every} $\pi$. However, this deterministic upper bound does not hold for \ourpivot{} which for some pathological permutation may have cost $\Omega(n^2/r)$ times $\opt{G}$. For more details about this, see \cref{sec:det-apx-ourpiv}.

\vspace{-0.1cm}
\paragraph{Key Idea II -- Connections to Sublinear Algorithms:} As discussed above, in our analysis we charge the mistakes $\{u, v\} \in X$ to triangles that are far away from $u, v$. To pick these triangles, we build on a query process developed originally for sublinear time algorithms \cite{NguyenO08,YoshidaYI-STOC09}. It is not hard to see that algorithm \pivot{} marks a vertex $v$ as a pivot iff there is no vertex $u$ adjacent to $v$ such that $\pi(u) < \pi(v)$ and $u$ is a pivot. Therefore, to determine whether a vertex $v$ is a pivot, it suffices to recursively query whether any of its lower rank neighbors are pivots. The total number of vertices that are (recursively) explored to answer such queries is known as the {\em query complexity} of random greedy maximal independent set (RGMIS), which is equivalent to the \pivot{} algorithm. In a beautiful result, \citet*{YoshidaYI-STOC09} showed that in any $n$-vertex $m$-edge graph, an average vertex $v$ has expected query complexity $O(m/n)$. This has been an influential work in the area of sublinear-time algorithms, where the goal is to explore a small part of the graph and estimate various global properties of it. In this work, we apply this method to a completely different context, and use it to analyze the approximation ratio of \ourpivot{}.

We first propose a natural analog of the vertex query process for pairs of vertices (\cref{sec:oracles}). Then to charge a mistake $\{u, v\} \in X$ to a bad triangle, we run this pair query process on $\{u, v\}$. We use $\{u, v\} \in X$ to show that there will be a moment when the stack of recursive calls by the query process, which will be a path in the graph, has size $\Theta(r)$. We then take the {\em first} moment that the stack gets to this size, and charge its last three vertices, which we prove must form a bad triangle. We then generalize the analysis of \cite{YoshidaYI-STOC09} in a non-trivial way (particularly by using several structural properties of the mistakes in $X$) to bound the number of times each vertex pair is charged, arriving at our final approximation guarantee of \cref{thm:main}. We give a more detailed high-level comparison between our analysis and \cite{YoshidaYI-STOC09} later in \cref{sec:YYI} after formalizing our charging scheme.

\vspace{-0.1cm}
\paragraph{Correlation Clustering vs MIS:} We finish this section with a brief comparison of correlation clustering with maximal independent set (MIS). Recall that the set \pivotspivot{} of the pivots formed by the \pivot{} algorithm is an MIS of $G$. When we truncate parallel \pivot{} after $r$ rounds, we still have an independent set \pivotsourpivot{} as our pivot-set but it is not necessarily maximal. In light of \cref{thm:main} which shows the clusters started by \pivotsourpivot{}, for some $r = O(1/\epsilon)$, are almost as good as the clusters started by \pivotspivot{}, it could be interesting to see how \pivotsourpivot{} and \pivotspivot{} compare in size.  It might be natural to go as far as conjecturing that $\E|\pivotsourpivot| \geq (1-\epsilon)\E|\pivotspivot|$ for $r = O(1/\epsilon)$. But this is far from the truth. In fact, there are graphs for which $\E|\pivotsourpivot|$ is $n^{\Theta(1/r)}/r$ times smaller than $\E|\pivotspivot|$; see \cref{sec:ourpiv-size}. Therefore, it is perhaps surprising that \pivotsourpivot{}, while being significantly smaller than \pivotspivot{}, is almost as good for starting clusters and approximating correlation clustering.

\section{The Analysis}

In this section we prove \Cref{thm:main} by analyzing the approximation ratio of \ourpivot{}.

We use \clusterspivot{} and \clustersalg{} to respectively denote the set of clusters returned by \pivot{} and \ourpivot{}. Let $X$ be the set of pairs of vertices $\{u, v\}$ such that \clustersalg{} disagrees with the label of $\{u, v\}$ but  \clusterspivot{} agrees with it. In other words, $X$ is the set of pairs for which \clustersalg{} pays an \underline{ex}tra cost for, compared to \clusterspivot{}. We have
\begin{equation}\label{eq:alg-cost}
    \costourpiv{G, \pi} \leq \costpiv{G, \pi} + |X|. 
\end{equation}

Our plan for proving \Cref{thm:main} is to show that the set $X$ has a small expected size. In particular, the core of our proof is the following bound on $X$ which immediately proves \Cref{thm:main}.

\begin{lemma}\label{lem:EX}
    For any $r \geq 1$, $\E_\pi|X| \leq \frac{8}{2r-1} \cdot \opt{G}$.
\end{lemma}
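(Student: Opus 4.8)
The plan is to charge every pair $\{u,v\}\in X$ to a carefully chosen bad triangle, in a randomized way, so that (i) every $\{u,v\}\in X$ gets charged to $\Omega(r)$ bad triangles in expectation, and (ii) every vertex pair lies in at most $O(1)$ charged bad triangles in expectation. As sketched in \cref{sec:techniques}, (i) and (ii) together give $\E_\pi|X| = O(\opt{G}/r)$, since a weighted collection of bad triangles in which each pair is covered at most once (in expectation) lower-bounds $\opt{G}$. First I would record the structural facts that make the charging possible: since \clustersalg{} refines \clusterspivot{}, no non-edge can be in $X$, so $X$ consists of edges $\{u,v\}$ that are separated by \ourpivot{} but clustered together by \pivot{}. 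For such an edge, \pivot{} puts both $u,v$ into the cluster of some pivot $p$ (the lowest-rank pivot in $N(u)\cup N(v)$, which is then adjacent to both), yet \ourpivot{} fails to do the same — which, by the cluster-formation rule, must be because along the way \ourpivot{} truncated: some vertex on the dependency chain witnessing ``is $p$ a pivot / is $u$ adjacent to a surviving pivot'' was still unsettled after $r$ rounds. This is the hook that forces a long dependency path.

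Next I would define the pair-query process (the analog of the vertex RGMIS query oracle of \citet{YoshidaYI-STOC09}): to decide the fate of $\{u,v\}$, recursively query the lowest-rank unresolved neighbor, maintaining a stack of recursive calls which is always a path in $G$ with increasing $\pi$-ranks. The key claim is that for $\{u,v\}\in X$ this stack must, at some point, reach depth $\Theta(r)$ — intuitively because the reason \ourpivot{} behaved differently from \pivot{} is precisely that resolving $\{u,v\}$ requires peeling $\Omega(r)$ rounds of the random greedy process. Take the \emph{first} moment the stack reaches this depth and charge the bad triangle formed by its last three vertices; I would verify these three vertices indeed form a bad triangle (two consecutive edges of the path plus a non-edge across, using that the path has monotone ranks and the greedy structure). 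Because the stack at depth $\Theta(r)$ has $\Theta(r)$ consecutive triples, iterating this gives $\Omega(r)$ charged triangles per mistake, giving (i).

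The main obstacle is (ii): bounding the expected number of charged bad triangles through a fixed pair. Here I would generalize the \citet{YoshidaYI-STOC09} amortized analysis, which shows that in the vertex RGMIS query process the expected total number of vertices explored over a uniformly random start is $O(m/n)$. The generalization is nontrivial for two reasons: we start from \emph{pairs} in $X$, not uniformly random vertices, and we charge only triangles appearing at the first time the stack hits a \emph{prescribed} depth, not all explored vertices. I expect to exploit several structural properties of pairs in $X$ (e.g.\ that each such pair is an edge both of whose endpoints ``should'' have joined a common pivot cluster, pinning down the relevant pivot and hence limiting how many mistakes can route their query path through a given pair) to carry a YYI-style potential/backward-analysis argument: fix a candidate charged triangle's inducing path and bound, over the randomness of $\pi$, how many source pairs could have generated it and with what probability. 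Assembling the constants from (i) and (ii) should yield exactly the factor $\tfrac{8}{2r-1}$; I would not expect the constant to be delicate, but getting the clean $2r-1$ denominator will require being careful that the ``first time the stack reaches depth $d$'' event is analyzed for $d$ as large as possible given only $r$ rounds of truncation.
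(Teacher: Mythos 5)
Your plan follows the same architecture as the paper's proof: use the refinement property to get $X\subseteq E$, run a pair analogue of the \citet{NguyenO08,YoshidaYI-STOC09} query process on each mistake, charge the bad triangle formed by the last three vertices of the recursion stack at the first moment it reaches a prescribed depth, and bound how often any fixed vertex pair is hit (the conversion of such a bound into $\E_\pi|X|\le w\cdot\opt{G}$ is the standard fractional-packing argument of \cref{lem:width-gives-apx}). However, two steps that you treat as routine are where the actual content lies, and as written they are gaps. First, the depth claim and the ``$\Omega(r)$ triangles per mistake'' step: the intuition that the stack must reach depth $\Theta(r)$ ``precisely because \ourpivot{} truncated'' is not immediate, since the query process concerns \pivot{}, not \ourpivot{}; the paper needs the case analysis of \cref{lem:X-pair-classification} (either the common pivot is still unsettled, or a singleton endpoint has an unsettled \emph{non-pivot} neighbor of even smaller rank) together with the induction of \cref{cl:stack-vertex} showing that a vertex unsettled after $t$ rounds forces stack depth at least $2t$. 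Moreover, if ``iterating this gives $\Omega(r)$ charged triangles'' means charging the $\Theta(r)$ consecutive triples of the single depth-$2r$ stack, that step fails: only the \emph{last} triple of a stack at the first time it reaches a given size is guaranteed to be a bad triangle (this is exactly where the first-time property is used in \cref{cl:last-3-BT}); middle triples of a deep stack can be genuine triangles, and the first-time stacks for smaller depths are in general not prefixes of the depth-$2r$ stack. The correct scheme, which the paper uses, is to define for each $\ell\in[2,2r]$ the first-time-depth-$\ell$ stack separately and charge its last three vertices (equivalently, charge one of them for a uniformly random $\ell$).

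Second, and more importantly, your item (ii) — that every vertex pair is covered $O(1)$ times in expectation — is the heart of the lemma and of the paper, and your proposal only promises ``a YYI-style potential/backward analysis'' without supplying one. The paper proves it by an explicit rank-rotation injection: for an adjacent pair $\{a,b\}$, rotate the ranks along the prefix of the charged path ending at the directed edge $(a,b)$ and show every resulting permutation has at most four preimages (\cref{lem:querypathplus,cl:mapinjectiveplus}); the injectivity analysis uses both the common-pivot structure of pairs in $X$ and the first-time property to rule out the bad configurations. In addition, since every charged triangle contains a non-edge, coverage of non-adjacent pairs $\{a,b\}$ must be bounded as well, and this requires a genuinely modified rotation that skips the middle vertex of the triangle, with its own case analysis (\cref{lem:querypathminus,cl:mapinjectiveminus}); your sketch does not address the non-edge case at all. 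Without these constructions and their preimage bounds, the $\frac{8}{2r-1}$ (or even any $O(1/r)$) bound is not established.
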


\begin{proof}[Proof of \Cref{thm:main} via \Cref{lem:EX}]
    By taking expectation over $\pi$ on both sides of \Cref{eq:alg-cost} and applying \Cref{lem:EX}, we get
    \begin{align*}
        \E_\pi[\costourpiv{G, \pi}] & \leq  \E_\pi[\costpiv{G, \pi}] + \E_\pi|X|\\
        & \leq \E_\pi[\costpiv{G, \pi}] + \frac{8}{2r-1} \cdot \opt{G}.\qedhere
    \end{align*}
\end{proof}

We prove \Cref{lem:EX} in the rest of this section.

\subsection{Charging Schemes}
We first recall a standard framework of the literature in {\em charging bad triangles} (see e.g. \cite{AilonCN-JACM08}). We say three distinct vertices $\{a, b, c\}$ in $V$ form a {\em bad triangle} if exactly two of the pairs $\{a, b\}, \{a, c\}, \{b, c\}$ belongs to $E$.

We say an algorithm $\mc{S}$ is a {\em charging scheme} for $X$, if $\mc{S}$ charges every pair in $X$ to a bad triangle of the input. We say $\mc{S}$ has {\em width} $w$ if for every pair of distinct vertices $a, b \in V$, the expected number of charges to bad triangles involving both $a$ and $b$ is upper bounded by $w$, where the expectation is taken over $\pi$ and the randomization of $\mc{S}$. We note that a triangle can be charged multiple times for different mistakes, but all of these charges must be counted in analyzing the width of the charging scheme. The following lemma shows that one can bound the expected size of $X$ in terms of the width of a charging scheme:

\begin{lemma}\label{lem:width-gives-apx}
If there exists a charging scheme $\mc{S}$ for $X$ that has width $w$, then
$$
    \E_\pi[|X|] \leq w \cdot \opt{G}.
$$
\end{lemma}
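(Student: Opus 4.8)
The plan is to relate the charging scheme to a fractional packing of edge-disjoint (in expectation) bad triangles, which is a standard lower bound for $\opt{G}$. First I would recall why bad triangles lower-bound the optimum: in any clustering, each bad triangle $\{a,b,c\}$ must have at least one of its three pairs misclassified (if all three of $ab, ac, bc$ were correctly handled, the two edges would force $a,b,c$ into one cluster, and then the non-edge among them would be a disagreement — contradiction). Hence if $T_1, \dots, T_\beta$ are edge-disjoint bad triangles, then $\opt{G} \ge \beta$, since the misclassified pairs they force are distinct. More generally, if we assign nonnegative weights $y_T$ to bad triangles $T$ such that for every pair $\{a,b\}$ we have $\sum_{T \ni \{a,b\}} y_T \le 1$, then $\sum_T y_T \le \opt{G}$: summing over any fixed clustering, each $T$ contributes $y_T$ to the weight-count of at least one misclassified pair, and each misclassified pair absorbs total weight at most $1$, so $\sum_T y_T \le (\text{cost of the clustering})$, and taking the optimal clustering gives the bound.

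Next I would apply this with the weights coming from the charging scheme. Fix the input graph $G$. For a fixed permutation $\pi$ and fixed internal randomness of $\mc{S}$, the scheme produces, for each pair in $X$, one charged bad triangle; let $c_T$ denote the number of pairs in $X$ charged to bad triangle $T$ in this run, so $|X| = \sum_T c_T$. Now take expectations over $\pi$ and over $\mc{S}$'s randomness, and set $y_T := \E[c_T]$. By linearity of expectation, $\E[|X|] = \sum_T y_T$. The width hypothesis says exactly that for every pair $\{a,b\}$, $\sum_{T \ni \{a,b\}} y_T = \E\big[\sum_{T \ni \{a,b\}} c_T\big] \le w$; here I am using that the number of charges to bad triangles containing both $a$ and $b$ is, in every run, $\sum_{T \ni \{a,b\}} c_T$, and the definition of width bounds its expectation by $w$. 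Rescaling by $1/w$, the weights $y_T / w$ form a valid fractional triangle packing, so by the packing bound above $\sum_T (y_T/w) \le \opt{G}$, i.e. $\E[|X|] = \sum_T y_T \le w \cdot \opt{G}$, as claimed.

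The only real subtlety — and the step I would be most careful about — is the fractional packing inequality $\sum_T y_T \le \opt{G}$ when the weights need not be integral and a single triangle may be "charged" multiple times (the remark before the lemma explicitly warns that repeated charges all count). This is handled cleanly by the LP-duality / averaging argument in the previous paragraph: fix the optimal clustering $\mathcal{C}^\ast$, and for each bad triangle $T$ pick one pair $p(T)$ among its three that $\mathcal{C}^\ast$ misclassifies (one exists by the bad-triangle observation). Then
\begin{align*}
\sum_T \frac{y_T}{w} &= \sum_{\{a,b\}\,\text{misclassified by }\mathcal{C}^\ast} \ \sum_{T:\, p(T)=\{a,b\}} \frac{y_T}{w} \ \le\ \sum_{\{a,b\}\,\text{misclassified by }\mathcal{C}^\ast} \ \sum_{T \ni \{a,b\}} \frac{y_T}{w} \ \le \ \sum_{\{a,b\}\,\text{misclassified by }\mathcal{C}^\ast} 1 \ = \ \opt{G},
\end{align*}
where the middle inequality drops the constraint $p(T)=\{a,b\}$ in favor of just $T \ni \{a,b\}$, and the last inequality is the width bound divided by $w$. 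Multiplying through by $w$ finishes the proof. No step here needs anything beyond the definitions already in the excerpt, so I expect the write-up to be short; the main thing to get right is phrasing the multiplicity/expectation bookkeeping so that "width $w$" is used exactly as defined.
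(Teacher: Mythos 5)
Your proposal is correct and follows essentially the same route as the paper: it reduces the width hypothesis to a fractional packing of bad triangles (weights $\E[\text{charges to } T]/w$) and invokes the packing lower bound on $\opt{G}$, exactly as the paper does via \Cref{cl:fractional-packing}. The only cosmetic difference is that you prove the packing inequality by a direct averaging argument over the optimal clustering, whereas the paper writes down the covering LP and its dual and appeals to weak duality --- the same argument in substance.
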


\Cref{lem:width-gives-apx} is a standard result in the framework of charging costs to bad triangles. For completeness, we provide a proof in \Cref{sec:width-proof}.

Therefore, to prove \Cref{lem:EX}, our plan is to design a charging scheme for $X$ that has width $\frac{8}{2r-1}$ for any $r \geq 1$. We define and analyze our charging scheme in \Cref{sec:our-CS}. To that end, we first give a characterization of the pairs in $X$ by comparing the clusterings \clusterspivot{} and \clustersalg{} in \Cref{sec:relate-clusters}. Then in \Cref{sec:oracles}, we associate bad triangles to the pairs in $X$ to be charged.

\subsection{Characterization of Pairs in $X$}\label{sec:relate-clusters}
In this subsection, we introduce some notations for the clusterings \clusterspivot{} and \clustersalg{} returned by \pivot{} and \ourpivot{} respectively. We show that \clustersalg{} is a certain refinement of \clusterspivot{} (\Cref{cl:cluster-refine}), and that all pairs in $X$ are edges in the graph (\Cref{cl:X-plus-pair}). This leads to our characterization of the pairs in $X$ (\Cref{lem:X-pair-classification}), which singles out an unsettled vertex with small rank in the neighborhood of each pair in $X$.

Throughout this subsection, we fix a permutation $\pi$ over $V$. 

\begin{definition}[Pivot sets]\label{def:pivotset}
Let \pivotspivot{} and \pivotsourpivot{} to respectively denote the set of pivots marked by \pivot{} and \ourpivot{} both run on the same permutation $\pi$.
\end{definition}

Clearly, $\pivotsourpivot{} \subseteq \pivotspivot{}$. Moreover, every cluster in \clusterspivot{} contains a unique pivot, and every cluster in \clustersalg{} that is not singleton contains a unique pivot.

\begin{definition}[Pivot of a vertex]\label{def:pivot}
For every $v \in V$, let $p_v \in \pivotspivot{}$ denote the pivot of the cluster in \clusterspivot{} that contains $v$. We say that $p_v$ is the pivot of $v$ in \pivot{}.
\end{definition}

Note that $p_v = v$ if and only if $v \in \pivotspivot{}$. Moreover, $\pi(p_v) \leq \pi(v)$ and $p_v$ is the vertex in $(N(v) \cup \{v\}) \cap \pivotspivot{}$ with minimum rank under $\pi$. Our next claim shows that the clustering \clustersalg{} refines \clusterspivot{} in a particular way. See \Cref{fig:compare}.

\begin{claim}\label{cl:cluster-refine}
For any cluster $C$ in \clusterspivot{}, if $v$ is the pivot of $C$, then $C$ is partitioned by distinct clusters $C_1', \dots, C_k'$ in \clustersalg{}, i.e.
\[
    C = C_1' \cup C_2' \cup \cdots C_k',
\]
such that $v \in C_1'$ and the remaining clusters $C_2', \dots, C_k'$ are all singletons. In particular, any cluster $C'$ in \clustersalg{} is contained in a cluster in \clusterspivot{}.
\end{claim}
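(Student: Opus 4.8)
The plan is to show that every non-pivot vertex $u$ that joins a non-singleton cluster in \clustersalg{} joins the cluster of a pivot $v'$ with $v' = p_u$, i.e.\ the same cluster it belongs to in \clusterspivot{}. Once this is established, the claim follows immediately: fix a cluster $C \in \clusterspivot{}$ with pivot $v$. Every vertex of $C$ has pivot $v$ in \pivot{}. In \clustersalg{}, the pivot $v$ starts its own cluster, which I will call $C_1'$, and any other vertex $u \in C$ either forms a singleton or joins the cluster of a pivot $v'$; by the key fact, if it joins a pivot's cluster then $v' = p_u = v$, so it lies in $C_1'$. Hence $C$ is the disjoint union of $C_1'$ together with singleton clusters $\{u\}$ for the remaining $u \in C$, which is exactly the stated partition $C = C_1' \cup C_2' \cup \dots \cup C_k'$. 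The final sentence ("any cluster $C'$ in \clustersalg{} is contained in a cluster in \clusterspivot{}") is then a restatement: every vertex lies in exactly one cluster of \clusterspivot{}, and the partition we just exhibited shows each \clustersalg{}-cluster sits inside one \clusterspivot{}-cluster.

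So the heart of the argument is the key fact: \emph{if a non-pivot $u$ joins the cluster of pivot $v'$ in \ourpivot{}, then $p_u = v'$.} First I would recall that, by the clustering rule of \ourpivot{}, $v'$ is the minimum-rank pivot in $N(u)$ (over $\pivotsourpivot{}$), and moreover $\pi(v')$ is smaller than $\pi(w)$ for every \emph{unsettled} $w \in N(u)$. Now recall from the earlier discussion that $p_u$ is the minimum-rank vertex in $(N(u) \cup \{u\}) \cap \pivotspivot{}$; since $u$ is not a pivot (in either algorithm — note $\pivotsourpivot{} \subseteq \pivotspivot{}$, and $u$ being a non-pivot in \ourpivot{} together with it joining a cluster forces it to be a non-pivot of \pivot{} as well, which I would double-check), we have $p_u = \argmin_{\pi}\{ z \in N(u) \cap \pivotspivot{} \}$. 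Since $\pivotsourpivot{} \subseteq \pivotspivot{}$, we get $\pi(p_u) \le \pi(v')$. To get equality I must rule out $\pi(p_u) < \pi(v')$. Suppose $p_u \in \pivotspivot{} \setminus \pivotsourpivot{}$ with $\pi(p_u) < \pi(v')$. The point is that a pivot of \pivot{} that is \emph{not} a pivot of \ourpivot{} within the first $r$ rounds must still be unsettled at the end of \ourpivot{}: if it had been settled by round $r$ it would either have been marked a pivot (contradiction) or be adjacent to an earlier \ourpivot{}-pivot — but then $p_u$ would have an adjacent pivot of smaller rank, contradicting $p_u \in \pivotspivot{}$ (by minimality of $p_u$ among its neighborhood, or directly by the MIS property). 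Hence $p_u$ is an unsettled neighbor of $u$ with $\pi(p_u) < \pi(v')$, contradicting the clustering rule that made $u$ join $v'$. Therefore $p_u = v'$, proving the key fact.

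The step I expect to be the main obstacle is precisely the sub-argument that $p_u \in \pivotspivot{} \setminus \pivotsourpivot{}$ implies $p_u$ is unsettled at termination of \ourpivot{} — i.e.\ reconciling the "truncated greedy MIS after $r$ rounds" dynamics of \ourpivot{} with the full greedy MIS \pivotspivot{}. This requires a careful induction on the rounds showing that through round $r$ the \emph{settled} vertices of \ourpivot{} and their pivot-status agree exactly with what \pivot{} would do restricted to those vertices (equivalently, a vertex is settled in \ourpivot{} by round $i$ iff it is "decided" within $i$ rounds of the parallel randomized greedy MIS on $\pi$), so that any vertex still unsettled after $r$ rounds has no adjacent \ourpivot{}-pivot at all, and thus cannot be the minimum-rank element of $(N(u)\cup\{u\})\cap \pivotspivot{}$ while also being already excluded. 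I would state this agreement as a small auxiliary observation (possibly it is implicit in the paper's setup of \ourpivot{}) and prove it by induction on the round index before assembling the pieces above. Everything else — the disjointness of the $C_i'$, the "only $C_1'$ is non-singleton" part, and the containment restatement — is bookkeeping once the key fact is in hand.
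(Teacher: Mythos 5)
Your central ``key fact'' --- that a non-pivot $u$ which joins the cluster of an \ourpivot{}-pivot $v'$ must satisfy $p_u=v'$ --- is correct, and its proof is essentially the paper's own argument in different packaging: the paper also compares the rank of $p_u$ with that of the joined pivot and rules out $\pi(p_u)<\pi(v')$ by splitting on whether $p_u$ is an \ourpivot{}-pivot (then $u$ would prefer it) or unsettled (then $u$ would be a singleton), with the ``settled non-pivot'' possibility excluded exactly as you do, via $\pivotsourpivot{}\subseteq\pivotspivot{}$ and the fact that $\pivotspivot{}$ is an independent set. The two points you flag for double-checking are fine for the reasons you sketch, and the round-by-round induction you anticipate as the ``main obstacle'' is not needed beyond the containment $\pivotsourpivot{}\subseteq\pivotspivot{}$, which the paper takes as given; your own one-line independence argument already shows that a vertex of $\pivotspivot{}\setminus\pivotsourpivot{}$ cannot be settled.

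The gap is in the assembly. You assert ``in \clustersalg{} the pivot $v$ starts its own cluster,'' but $v\in\pivotspivot{}$ need not be a pivot of \ourpivot{}: after only $r$ rounds $v$ may still be unsettled (this is a central case, not a corner case --- see the gray vertices in \Cref{fig:compare} and case (i) of \Cref{lem:X-pair-classification}), and then $v$ starts no cluster. The claim still holds there, and your key fact gives it in one line: since $p_v=v$, the key fact shows $v$ cannot join any \ourpivot{}-pivot's cluster, so $v$ (and likewise every $u\in C$, since $p_u=v\notin\pivotsourpivot{}$) ends up in a singleton cluster, and one takes $C_1'=\{v\}$. Relatedly, your displayed conclusion ``$C$ is the disjoint union of $C_1'$ and singletons'' already encodes $C_1'\subseteq C$, i.e.\ that no vertex outside $C$ joins $v$'s cluster; this is not a restatement of what you proved about vertices of $C$, but it is again immediate from the key fact applied to the non-pivot members of $C_1'$ and should be stated. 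With these two additions your proof is complete and matches the paper's, which proves the same two facts as structural statements about clusters (\Cref{cl:cluster-refine}'s two parts) rather than as a single vertex-level lemma.
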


\begin{myproof}
We first show that any cluster $C' \in \clustersalg{}$ that intersects $C$ but does not contain the pivot $v$ of $C$ must be singleton. Suppose otherwise that $C'$ is not singleton. Then $C'$ contains a unique pivot $v' \in \pivotsourpivot{}$, which must be different from $v$ since $v \not \in C'$. In particular, $v'$ cannot be contained in $C$ since otherwise $C$ would contain two different pivots $v$ and $v'$ in \pivotspivot{}. Now take a vertex $u \in C \cap C'$, which must be different from both $v$ and $v'$. Note that $v, v' \in N(u)$ and $v = p_u$. Thus we have $\pi(v) < \pi(v')$. Then, in \ourpivot{}, either $v$ is identified as a pivot, in which case $u$ would prefer to join the cluster of $v$ in the last step, or $v$ is unsettled, in which case $u$ would form a singleton cluster. Either contradicts that $u$ joins the cluster of $v'$ in \clustersalg{}.

It remains to show that the cluster $C_1' \in \clustersalg{}$ that contains $v$ is contained in $C$. Suppose otherwise that there exists a vertex $u \in C_1' \setminus C$. Then $u$ is contained in a cluster $C'' \in \clusterspivot{}$ other than $C$, and the pivot $v''$ of $C''$ is different from $v$. Since $C_1'$ already contains the pivot $v$, we have $v'' \not \in C_1'$. By the first part of the proof applied to $C''$, we see that $C_1'$ must be a singleton cluster that only contains $u$, which is a contradiction. Therefore, no such $u$ exists and $C_1' \subseteq C$.
\end{myproof}

As a consequence, we show that all pairs in $X$ are edges in the graph:

\begin{claim}\label{cl:X-plus-pair}
    $X \subseteq E$.
\end{claim}
\begin{myproof}
Suppose otherwise that there exists a pair $\{u,v\} \in X \setminus E$. Since $\clustersalg{}$ disagrees with the label of $\{u,v\}$, $u$ and $v$ must belong to the same cluster $C' \in \clustersalg{}$. But by \Cref{cl:cluster-refine}, $C'$ must be contained in a cluster $C \in \clusterspivot{}$, which implies that $\clusterspivot{}$ also disagrees with the minus label of $\{u,v\}$. This is a contradiction.
\end{myproof}

Therefore, for $\{u,v\} \in X$, since \clusterspivot{} agrees with the plus label of $\{u,v\}$, we have $p_u = p_v$.

\begin{definition}[Common pivot of a pair in $X$]\label{def:pivot-pair}
For $\{u,v\} \in X$, let $p_{\{u,v\}} = p_u = p_v \in \pivotspivot{}$ denote the common pivot of $u$ and $v$ in \pivot{}.
\end{definition}

Note that for every $\{u,v\} \in X$, $\pi(p_{\{u,v\}}) \leq \min\{\pi(u), \pi(v)\}$ and $p_{\{u,v\}}$ is the vertex in $(N(u) \cup N(v)) \cap \pivotspivot{}$ with minimum rank under $\pi$. Based on the last step of \ourpivot{}, we can characterize the pairs in $X$ as follows:

\begin{lemma}\label{lem:X-pair-classification}
For every $\{u,v\} \in X$, one of the following holds after $r$ rounds in \ourpivot{}:
\begin{enumerate}[label=\rm{(\roman*)}]

    \item $p_{\{u,v\}}$ is unsettled. \label{item:case-i}
    
    \item $p_{\{u,v\}}$ is settled. Moreover, at least one of $u$ and $v$ is singleton and has an unsettled, non-pivot neighbor $w_{\{u,v\}} \not \in \pivotspivot{}$ with $\pi(w_{\{u,v\}}) < \pi(p_{\{u,v\}})$. \label{item:case-ii}
    
\end{enumerate}
\end{lemma}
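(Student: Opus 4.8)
The plan is to do a careful case analysis on the last step of \ourpivot{}, using the characterization of $X$ from the previous subsection. The key facts to exploit are: (a) by \Cref{cl:X-plus-pair}, $\{u,v\}\in E$; (b) since \clusterspivot{} agrees with the plus label, $p_u = p_v = p_{\{u,v\}}$, and $p_{\{u,v\}}$ is the minimum-rank vertex of $(N(u)\cup N(v))\cap \pivotspivot{}$; (c) since \clustersalg{} disagrees with the plus label of $\{u,v\}$, $u$ and $v$ end up in different clusters of \clustersalg{}, and by \Cref{cl:cluster-refine} at least one of them is a singleton in \clustersalg{} (the non-singleton cluster of \clustersalg{} containing, say, $u$ would have to contain $p_{\{u,v\}}=p_u$, and then $v$, being in the same \clusterspivot{}-cluster but a different \clustersalg{}-cluster, is forced to be a singleton). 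So WLOG $v$ is a singleton in \clustersalg{}.

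Now I split on whether $p_{\{u,v\}}$ is settled after $r$ rounds. If it is unsettled, we are in case~\ref{item:case-i} and there is nothing more to prove. So assume $p_{\{u,v\}}$ is settled. I claim this forces case~\ref{item:case-ii}; in particular I must produce the unsettled non-pivot neighbor $w_{\{u,v\}}$ of $v$ with $\pi(w_{\{u,v\}}) < \pi(p_{\{u,v\}})$. Here is the logic. Since $v$ is a non-pivot singleton in \clustersalg{} while $p_{\{u,v\}}\in N(v)\cup\{v\}$ is a pivot (note $p_{\{u,v\}}\neq v$ since $v$ is not a pivot, so $p_{\{u,v\}}\in N(v)$) and $p_{\{u,v\}}$ is settled, the only reason the last step of \ourpivot{} makes $v$ a singleton rather than joining it to the minimum-rank pivot in $N(v)$ is that the ``singleton'' condition is triggered: there exists an unsettled vertex $w\in N(v)$ whose rank is smaller than all pivots in $N(v)$. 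Take $w = w_{\{u,v\}}$. Then $w$ is unsettled; since all pivots end the round settled, $w\notin\pivotspivot{}$; and $\pi(w) < \pi(p')$ for every pivot $p'\in N(v)$, in particular $\pi(w) < \pi(p_{\{u,v\}})$ once we check $p_{\{u,v\}}\in N(v)\cap\pivotspivot{}$, which we have. That is exactly the assertion of case~\ref{item:case-ii}.

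The one subtlety to get right — and the place I'd expect to spend the most care — is the WLOG reduction and the claim that \emph{at least one} of $u,v$ is a singleton in \clustersalg{}, together with the fact that it is that \emph{same} vertex whose ``singleton'' condition must be the one producing $w_{\{u,v\}}$. This is where \Cref{cl:cluster-refine} does the real work: $u$ and $v$ lie in a common cluster of \clusterspivot{} but distinct clusters of \clustersalg{}, and \Cref{cl:cluster-refine} says that among the \clustersalg{}-pieces of a \clusterspivot{}-cluster, all but the one containing the pivot are singletons; since $u,v$ cannot both be in the pivot's piece, at least one of them sits in a singleton piece, and I rename it $v$. One should also double-check the easy degenerate possibilities — e.g. that $v$ is not itself in $\pivotsourpivot{}$ (impossible, since pivots of \ourpivot{} start non-singleton clusters containing themselves, contradicting $v$ being a singleton), and that $p_{\{u,v\}}\in N(v)$ rather than equal to $v$ — both of which follow immediately from the definitions. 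Everything else is bookkeeping on the four bullet-conditions in the last step of \ourpivot{}.
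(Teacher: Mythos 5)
Your overall route is the same as the paper's: use \Cref{cl:X-plus-pair} and the refinement property of \Cref{cl:cluster-refine} to place both $u$ and $v$ inside the \clusterspivot{}-cluster of $p_{\{u,v\}}$ and force at least one of them, say $v$, into a singleton cluster of \clustersalg{}; then, when $p_{\{u,v\}}$ is settled, note that the singleton branch of the last step of \ourpivot{} must have fired for $v$, producing an unsettled $w_{\{u,v\}} \in N(v)$ with $\pi(w_{\{u,v\}}) < \pi(p_{\{u,v\}})$. That part is correct, including the implicit step (also left implicit in the paper) that a settled vertex of \pivotspivot{} must lie in \pivotsourpivot{}, because \pivotsourpivot{} $\subseteq$ \pivotspivot{} and \pivotspivot{} is an independent set, so a settled \pivot{}-pivot cannot merely be adjacent to an \ourpivot{}-pivot.

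There is, however, a genuine gap in how you justify $w_{\{u,v\}} \notin \pivotspivot{}$. You write ``since all pivots end the round settled, $w \notin \pivotspivot{}$.'' The vertices guaranteed to be settled after $r$ rounds are the pivots of \ourpivot{} and their neighbors; an unsettled vertex can perfectly well belong to \pivotspivot{} --- indeed, case~\ref{item:case-i} of the very lemma you are proving is about an unsettled vertex of \pivotspivot{}, and the entire point of the truncation is that unsettled vertices may or may not become pivots of the full \pivot{} process. So your argument only yields the weaker conclusion $w_{\{u,v\}} \notin \pivotsourpivot{}$, not $w_{\{u,v\}} \notin \pivotspivot{}$ as the lemma requires (and as the charging argument later needs). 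The paper instead derives non-pivotness from minimality of rank: $p_{\{u,v\}} = p_v$ is the minimum-rank vertex of $(N(v) \cup \{v\}) \cap \pivotspivot{}$, and since $w_{\{u,v\}} \in N(v)$ has $\pi(w_{\{u,v\}}) < \pi(p_v)$, it cannot lie in \pivotspivot{}. This is a one-line repair, but as written your step conflates \pivotspivot{} with \pivotsourpivot{} and does not establish the stated conclusion.
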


\begin{myproof}
It suffices to show that in Case (ii), at least one of $u$ and $v$ is singleton and has an unsettled neighbor $w_{\{u,v\}} \not \in \pivotspivot{}$ with $\pi(w_{\{u,v\}}) < \pi(p_{\{u,v\}})$. Let $C_u'$ (resp. $C_v'$) be the cluster in \clustersalg{} that contains $u$ (resp. $v$). Note that $C_u' \neq C_v'$.  By \Cref{cl:cluster-refine}, both $C_u'$ and $C_v'$ are contained in the cluster $C$ in \clusterspivot{} started by the pivot $p_{\{u,v\}}$. Then at least one of $C_u', C_v'$ cannot contain $p_{\{u,v\}}$, which must thus be singleton again by \Cref{cl:cluster-refine}. Say $C_v' = \{v\}$ is singleton. Since $p_{\{u,v\}}$ is settled, by the last step of \ourpivot{}, $v$ must have an unsettled neighbor $w_{\{u,v\}} \in N(v)$ with $\pi(w_{\{u,v\}}) < \pi(p_{\{u,v\}})$. In addition, $p_{\{u,v\}} = p_v$ is the pivot that $v$ joins in \ourpivot{}, which is the pivot neighbor of $v$ with minimum rank under $\pi$.  It follows that $w_{\{u,v\}} \not \in \pivotspivot{}$. 
\end{myproof}

\subsection{Vertex and Pair Oracles}\label{sec:oracles}
In this subsection, we define the following vertex and pair oracles. The vertex oracle locally determines whether a given vertex $v$ is part of the greedy MIS over permutation $\pi$, or equivalently in our context, whether $v$ is a pivot in the \pivot{} algorithm. The vertex oracle was first defined by \citet*{NguyenO08} and was further analyzed by \citet*{YoshidaYI-STOC09}. To analyze the pairs in $X$, we introduce a counterpart of the vertex oracle for pairs.

Throughout this subsection, we again fix a permutation $\pi$ over $V$.

\begin{whitetbox}
\begin{algorithm}[H]
    \DontPrintSemicolon
    \SetKwFunction{FnVx}{Vertex}
    \SetKwProg{Fn}{Function}{:}{}
    \Fn{\FnVx{$v$}}{
        Let $w_1, \ldots, w_d$ be vertices in $N(v)$ s.t. $\pi(w_1) < \ldots < \pi(w_d) < \pi(v)$.
	
        \For{$i$ in $1 \ldots d$}{
        	\lIf{\FnVx{$w_i$} $= 1$}{\Return 0}
        }
        \Return 1
    }
    
    \SetKwFunction{FnPr}{Pair}
    \SetKwProg{Fn}{Function}{:}{}
    \Fn{\FnPr{$u, v$}}{
        Let $w_1, \ldots, w_d$ be vertices in $N(u) \cup N(v)$ s.t. $\pi(w_1) < \ldots < \pi(w_d) < \min\{\pi(u), \pi(v) \}$.
	
        \For{$i$ in $1 \ldots d$}{
        	\lIf{\FnVx{$w_i$} $= 1$}{\Return 0}
        }
        \Return 1
    }
\end{algorithm}
\end{whitetbox}

\newcommand{\stack}[2]{\ensuremath{S_{#1}(#2)}}
\newcommand{\stackpath}[2]{\ensuremath{P_{#1}(#2)}}

For a vertex $v \in V$, \FnVx{$v$} returns $1$ if and only if $v$ is identified as a pivot by \pivot{}. As for a pair $\{u, v\}$ in $X$, it is straightforward to see that \FnPr{$u,v$} returns $1$ if and only if the common pivot $p_{\{u,v\}}$ of $u$ and $v$ in \pivot{} turns out to be one of $u, v$.

\begin{definition}\label{def:queryset}
We say that a vertex $v \in V$ (resp. pair $\{u,v\} \in X$) \emph{directly queries} a vertex $z \in V$ if \FnVx{$v$} (resp. \FnPr{$u,v$}) directly calls \FnVx{$z$}.
\end{definition}

We will be interested in the set of vertices directly queried by a vertex or a pair in $X$:

\begin{observation}\label{obs:vertex-query}
It holds that:
\begin{enumerate}[label=\rm{(\alph*)}]
    \item  \label{item:query-vertex-pivot}
    For a pivot $v \in \pivotspivot{}$, it directly queries all neighbors $z \in N(v)$ with rank $\pi(z) < \pi(v)$ and no other vertex. In particular, $v$ does not directly query any pivots in \pivotspivot{}.
    
    
    \item  \label{item:query-vertex-nonpivot}
    For a non-pivot $v \not \in \pivotspivot{}$, it directly queries all neighbors $z \in N(v)$ with rank $\pi(z) \leq \pi(p_v)$  and no other vertex. In particular, $p_v$ is the only pivot in \pivotspivot{} that $v$ directly queries.
    

    \item  \label{item:query-pair-pivot}
    For a pair $\{u,v\} \in X$ such that $p_{\{u,v\}} \in \{u,v\}$, it directly queries all neighbors $z \in N(u) \cup N(v)$ with rank $\pi(z) < \pi(p_{\{u,v\}})$ and no other vertex. In particular, $\{u,v\}$ does not directly query any pivots in \pivotspivot{}.
    

    \item  \label{item:query-pair-nonpivot}
    For a pair $\{u,v\} \in X$ such that $p_{\{u,v\}} \not \in \{u,v\}$, it directly queries all neighbors $z \in N(u) \cup N(v)$ with rank $\pi(z) \leq \pi(p_{\{u,v\}})$ and no other vertex. In particular, $p_{\{u,v\}}$ is the only pivot in \pivotspivot{} that $\{u,v\}$ directly queries.

    
\end{enumerate}
\end{observation}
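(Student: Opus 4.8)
The plan is to read off all four parts from one simple fact about how each oracle walks its candidate list. Recall (from the characterization stated just before the observation) that the vertex oracle on input $z$ returns $1$ exactly when $z \in \pivotspivot{}$, and note that the \texttt{for}-loop inside each oracle processes its candidate vertices $w_1, \dots, w_d$ in \emph{strictly increasing} rank order, calling the vertex oracle on each and halting (returning $0$) at the first $w_i$ that is a pivot. Hence the set of \emph{directly queried} vertices is exactly the prefix of the candidate list up to and including the lowest-rank pivot among the candidates, or the entire candidate list if no candidate is a pivot. So for each part it suffices to describe the candidate list and locate its lowest-rank pivot; the ``in particular'' clauses then follow from the minimality characterizations of $p_v$ and $p_{\{u,v\}}$.

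For part~\ref{item:query-vertex-pivot}, the candidates for the vertex oracle on $v$ are the $z \in N(v)$ with $\pi(z) < \pi(v)$. Since $v \in \pivotspivot{}$ and a vertex is a pivot if and only if it has no lower-rank pivot neighbor, no candidate is a pivot, so the loop runs to completion, directly querying all such $z$ and no pivot. For part~\ref{item:query-vertex-nonpivot}, $v \notin \pivotspivot{}$ forces $p_v \neq v$ (as $p_v = v$ iff $v \in \pivotspivot{}$), so $p_v \in N(v)$ with $\pi(p_v) < \pi(v)$, i.e.\ $p_v$ is a candidate; and since $p_v$ is the minimum-rank vertex of $(N(v) \cup \{v\}) \cap \pivotspivot{}$, no neighbor of $v$ of rank below $\pi(p_v)$ is a pivot, so $p_v$ is the lowest-rank pivot candidate. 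The loop therefore directly queries exactly the candidates of rank $< \pi(p_v)$ together with $p_v$ itself, i.e.\ all $z \in N(v)$ with $\pi(z) \le \pi(p_v)$ (ranks being distinct), and $p_v$ is the only pivot among them.

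For parts~\ref{item:query-pair-pivot} and~\ref{item:query-pair-nonpivot}, first recall that $\{u,v\} \in X \subseteq E$ by \Cref{cl:X-plus-pair}, so $u \in N(v)$ and $v \in N(u)$; this makes it consistent that $p_{\{u,v\}}$, the minimum-rank vertex of $(N(u) \cup N(v)) \cap \pivotspivot{}$, can equal $u$ or $v$. The candidates for the pair oracle on $\{u,v\}$ are the $z \in N(u) \cup N(v)$ with $\pi(z) < \min\{\pi(u),\pi(v)\}$. If $p_{\{u,v\}} \in \{u,v\}$, then $\pi(p_{\{u,v\}}) \in \{\pi(u),\pi(v)\}$ together with $\pi(p_{\{u,v\}}) \le \min\{\pi(u),\pi(v)\}$ gives $\pi(p_{\{u,v\}}) = \min\{\pi(u),\pi(v)\}$, so every candidate has rank strictly below $\pi(p_{\{u,v\}})$; none is a pivot by minimality of $p_{\{u,v\}}$, and the loop directly queries exactly the $z \in N(u) \cup N(v)$ with $\pi(z) < \pi(p_{\{u,v\}})$ and no pivot. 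If $p_{\{u,v\}} \notin \{u,v\}$, then (ranks being distinct) $\pi(p_{\{u,v\}}) < \min\{\pi(u),\pi(v)\}$, so $p_{\{u,v\}}$ is itself a candidate and, by the same minimality, the lowest-rank pivot candidate; the loop then directly queries exactly the $z \in N(u) \cup N(v)$ with $\pi(z) \le \pi(p_{\{u,v\}})$, of which only $p_{\{u,v\}}$ is a pivot.

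I do not expect a real obstacle here: once the increasing-rank traversal is understood, the rest is bookkeeping. The only points to get right are (i) whether the cutoff on queried ranks is ``$< \pi(p)$'' or ``$\le \pi(p)$'', which is determined by whether the relevant pivot $p$ is itself in the candidate list (it is in the non-pivot cases and is not in the pivot cases), and (ii) invoking $X \subseteq E$ so that $p_{\{u,v\}} \in N(u) \cup N(v)$ and the pair oracle's candidate list behaves as claimed.
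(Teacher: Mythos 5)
Your proposal is correct and matches the paper's intended reasoning: the paper states this as an immediate observation (no proof is given), relying on exactly the facts you use — that \texttt{Vertex}$(z)=1$ iff $z \in \pivotspivot{}$, that the loop scans candidates in increasing rank and halts at the first pivot, and the minimality characterizations of $p_v$ and $p_{\{u,v\}}$ together with $X \subseteq E$. Your write-up simply spells out this bookkeeping, including the correct handling of the $<$ versus $\leq$ cutoff.
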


In our analysis, we will focus on the stack of {\em recursive} calls to function \FnVx when we call \FnPr{$u, v$} for a pair $\{u, v\} \in X$. Our first insight is that, there is a moment when this stack includes $\Theta(r)$ elements.

\begin{claim}\label{cl:stack-gets-large}
    Let $\{u, v\} \in X$ and $\ell \leq 2r$. When we call \FnPr{$u,v$}, at some point the stack of recursive calls
    to \FnVx includes exactly $\ell$ elements.
\end{claim}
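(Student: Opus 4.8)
\textbf{Proof proposal for Claim~\ref{cl:stack-gets-large}.}

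The plan is to track the sequence of vertices that appear on the recursion stack when \FnPr{$u,v$} is executed, and show it must grow to size $\ell$ before the call finishes. When \FnPr{$u,v$} runs, it directly queries a set of vertices in $N(u)\cup N(v)$ of rank below $\min\{\pi(u),\pi(v)\}$, and for each such call \FnVx{$w$} recursively queries lower-rank neighbors of $w$. Thus at any point the stack corresponds to a path $w_1, w_2, \dots, w_k$ in $G$ (each $w_{j+1}$ a neighbor of $w_j$) with strictly decreasing ranks $\pi(w_1) > \pi(w_2) > \cdots > \pi(w_k)$, rooted just below the pair $\{u,v\}$. The stack shrinks only when a call \FnVx{$w_k$} returns: it returns $1$ precisely when $w_k$ is a pivot (none of its lower-rank neighbors is a pivot), and it returns $0$ as soon as one of its recursive child queries returns $1$, i.e. discovers a pivot. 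The key point is that the whole computation terminates, so if the stack never reaches size $\ell$ we will derive a contradiction with $\{u,v\}\in X$ and $\ell \le 2r$.

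First I would set up the correspondence between stack depth and the round structure of \ourpivot{}. Observe that a vertex $z$ is marked as a pivot in round $t$ of \ourpivot{} iff $z$ is a pivot in \pivot{} and the ``pivot-dependency chain'' from $z$ down through decreasing ranks to the global minimum has length related to $t$; more precisely, settled-by-round-$t$ status is exactly captured by how deep one must recurse. Concretely, I would prove by induction on the round index $t$ that: \FnVx{$z$} reaches a stack depth (below $z$) of at most $2t$ before returning iff $z$ is settled after $t$ rounds of \ourpivot{}; and symmetrically that if \FnVx{$z$} is still ``in progress'' with the whole subtree below it not yet resolved by depth $2t$, then $z$ is unsettled after $t$ rounds. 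This is the heart of the matching: each round of \ourpivot{} either marks $z$ a pivot (resolved, return $1$) or finds a pivot neighbor of $z$ (resolved, return $0$) or neither, and each such ``step'' costs two levels on the stack — one to recurse into a neighbor $w$ and one for $w$ to recurse into its own relevant neighbor.

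Given that dictionary, the claim follows from \Cref{lem:X-pair-classification}. In Case~\ref{item:case-i}, $p_{\{u,v\}}$ is unsettled after $r$ rounds, so by the induction above the subtree of the recursion that eventually certifies $p_{\{u,v\}}$ as a pivot (it is a pivot in \pivot{}) must pass through a stack of depth at least $2r \ge \ell$ at some point — since $p_{\{u,v\}}$ is reached from $\{u,v\}$ along a decreasing-rank path, and it being unsettled means the certification chain beneath it has not resolved within $r$ rounds, hence within depth $2(r-1)$, forcing depth $\ge 2r$. In Case~\ref{item:case-ii}, $p_{\{u,v\}}$ is settled but one of $u,v$, say $v$, is singleton with an unsettled non-pivot neighbor $w_{\{u,v\}}$ of rank below $\pi(p_{\{u,v\}})$; since $w_{\{u,v\}} \in N(v)$ is directly queried by $\{u,v\}$ and is unsettled after $r$ rounds, the recursion beneath \FnVx{$w_{\{u,v\}}$} does not resolve within depth $2(r-1)$, so again the stack reaches depth $1 + 2(r-1)+1 = 2r \ge \ell$. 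Finally, since the stack depth increases by one per recursive call and the target depth $\ell \le 2r$ is reached at some point while starting from depth $0$, it attains \emph{exactly} $\ell$ elements at some intermediate moment.

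The main obstacle I anticipate is making the ``round $t$ $\leftrightarrow$ depth $2t$'' correspondence precise and airtight, especially the subtle point that \ourpivot{} truncates after $r$ rounds while the recursion runs to completion. One has to be careful that ``unsettled after $r$ rounds'' genuinely forces the recursion to still be exploring at depth $2r$, rather than having resolved via some shortcut (e.g. a pivot found along a shorter path). The resolution is that a shorter certifying path in the recursion would correspond to $p_{\{u,v\}}$ (or $w_{\{u,v\}}$) being settled within fewer than $r$ rounds, contradicting \Cref{lem:X-pair-classification}; so I would phrase the induction as an exact equivalence between recursion depth needed and round needed, and invoke it contrapositively. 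A secondary nuisance is bookkeeping the off-by-one/off-by-two constants (the $+1$ for stepping from the pair to a neighbor, the pairing of levels per round), which is why the bound is $\ell \le 2r$ rather than $\ell \le r$; I would track these carefully but not belabor them.
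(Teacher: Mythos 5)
Your high-level strategy coincides with the paper's. The paper first proves a helper claim (its Claim~\ref{cl:stack-vertex}) by induction on the round index: if a vertex $w$ is still unsettled after $t$ rounds of \ourpivot{}, then the stack of recursive calls beneath \FnVx{$w$} reaches size at least $2t$. It then invokes \Cref{lem:X-pair-classification} to produce a distinguished low-rank vertex ($p_{\{u,v\}}$ in Case~\ref{item:case-i}, $w_{\{u,v\}}$ in Case~\ref{item:case-ii}) that is unsettled after all $r$ rounds and is directly queried by $\{u,v\}$, applies the helper claim to it, and finishes by observing that the stack changes one frame at a time, so every size up to $\ell$ is attained. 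You identify all of these ingredients.

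The one genuine issue is that the inductive invariant you propose to prove --- ``\FnVx{$z$} reaches depth at most $2t$ \emph{iff} $z$ is settled after $t$ rounds'' --- is false in both directions, so if you actually ran that induction it would jam. For a counterexample to ``settled $\Rightarrow$ shallow,'' take $z$ with a pendant neighbor $v$ of rank $k+2$ and another neighbor $v'$ of rank $k+1$ that heads a descending chain $u_1,\dots,u_k$ of ranks $k,\dots,1$, with $\pi(z)=k+3$; then $v$ is a local minimum, so $z$ is settled in round~$1$, yet \FnVx{$z$} recurses into $v'$ first and the stack reaches depth $k+1$. For a counterexample to ``shallow $\Rightarrow$ settled,'' take a path $z-w-w'$ with $\pi(w')<\pi(w)<\pi(z)$: $z$ is unsettled after round~$1$, yet the stack under \FnVx{$z$} tops out at exactly $2$. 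Only the one-sided implication ``unsettled after $t$ rounds $\Rightarrow$ stack reaches at least $2t$'' is true, and that is both what the paper proves and what your argument actually uses; it also cleanly absorbs the off-by-one slack you were worried about (your sketch as written only yields $\geq 2r-1$, whereas the paper's one-sided ``$\geq 2t$, excluding $w$'' bound gives $\geq 2r$ directly). So: same approach and correct plan, but the key lemma must be stated one-sidedly, not as an equivalence.
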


To prevent interruptions to the flow of this part, we defer the proof of \cref{cl:stack-gets-large} to \cref{sec:deferred}.

Of particular importance to our analysis, is the {\em first} moment that the stack of recursive calls to \FnVx when we call \FnPr{$u,v$} for $\{u,v\} \in X$ reaches a certain size.

\begin{definition}
Let $\{u,v\} \in X$ and consider the stack of recursive calls to \FnVx when we call \FnPr{$u,v$}. For each $\ell \in [2,2r]$, we denote by \stack{\ell}{u, v} the ordered list of the elements in the stack the first time that it includes $\ell$ elements.
\end{definition}

For example $\stack{4}{u, v} = (w_1, w_2, w_3, w_4)$ implies that $\{u, v\}$ directly queries $w_1$, $w_1$ directly queries $w_2$, $w_2$ directly queries $w_3$, $w_3$ directly queries $w_4$, and the first time after calling \FnPr{u, v} that the stack has 4 elements, only $w_1, w_2, w_3$, and $w_4$ are in it. \Cref{cl:stack-gets-large} ensures that \stack{\ell}{u, v} is defined for any $\{u,v\} \in X$ and $\ell \leq 2r$.

We now construct a path based on \stack{\ell}{u, v} by attaching $u$ and $v$ to the front in a particular order:

\begin{definition}
Let $\{u,v\} \in X$, $\ell \in [2,2r]$, and $\stack{\ell}{u, v} = (w_1, \dots, w_\ell)$. We define \stackpath{\ell}{u, v} to be the ordered list of $\ell+2$ vertices defined as:
\[
    \stackpath{\ell}{u,v} = \begin{cases}
        (v, u, w_1, \dots, w_\ell) & \text{if $w_1$ is directly queried by $u$ but not $v$,}\\
        (u, v, w_1, \dots, w_\ell) & \text{if $w_1$ is directly queried by $v$ but not $u$,}\\
        (v, u, w_1, \dots, w_\ell) & \text{if $w_1$ is directly queried by both $u$ and $v$, $\pi(u) < \pi(v)$.}
    \end{cases}
\]
\end{definition}

In other words, we choose the order of $u,v$ to ensure that $w_1$ is directly queried by the vertex that precedes it. If both $u$ and $v$ directly query $w_1$, we choose the order in a way that the ranks of the vertices in \stackpath{\ell}{u,v} are in descending order. Note that the first vertex $w_1$ in the stack \stack{\ell}{u, v} is directly queried by $\{u,v\}$, and it follows from \Cref{obs:vertex-query} that $w_1$ is directly queried by either $u$ or $v$. Thus \stackpath{\ell}{u,v} is always defined. Moreover, any two consecutive vertices in \stackpath{\ell}{u,v} form an edge in $E$, so it indeed specifies a path in $G$. The following observation is immediate from the construction of \stack{\ell}{u, v} and \stackpath{\ell}{u, v}:

\begin{observation}\label{obs:query-path}
Let $\{u,v\} \in X$, $\ell \in [2,2r]$, and write
\[
    \stackpath{\ell}{u,v} = (w_1, \dots, w_{\ell+2}).
\]
Then for each $i \in [3,\ell+2]$, we have $\pi(w_i) < \pi(w_{i-1})$, and $w_{i-1}$ directly queries $w_i$.
\end{observation}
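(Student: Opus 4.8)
The plan is to simply unwind the two definitions and read off the required facts from the pseudocode of the \textsc{Vertex} and \textsc{Pair} oracles (or, equivalently, from \Cref{obs:vertex-query}). Write $\stack{\ell}{u,v} = (s_1,\dots,s_\ell)$. By the definition of the stack of recursive calls, the pair $\{u,v\}$ directly queries $s_1$, and for each $j \in [2,\ell]$ the vertex $s_{j-1}$ directly queries $s_j$ (consecutive stack entries stand in the caller--callee relation, since $s_j$ is pushed onto the stack by the call $\textsc{Vertex}(s_{j-1})$). By the definition of $\stackpath{\ell}{u,v}$, the path is $(w_1,w_2,s_1,\dots,s_\ell)$ with $\{w_1,w_2\}=\{u,v\}$; equivalently $w_i = s_{i-2}$ for every $i \in [3,\ell+2]$. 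It then suffices to verify the two claimed properties for $i=3$ and for $i\ge 4$ separately.

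For $i \ge 4$ both $w_{i-1} = s_{i-3}$ and $w_i = s_{i-2}$ lie in the stack, so $w_{i-1}$ directly queries $w_i$ by the previous paragraph. Inspecting \textsc{Vertex}, the only vertices on which a call $\textsc{Vertex}(z)$ makes recursive calls are neighbors of $z$ of rank strictly less than $\pi(z)$; taking $z = w_{i-1}$ yields $\pi(w_i) < \pi(w_{i-1})$. For $i=3$ we have $w_3 = s_1$ and $w_2 \in \{u,v\}$. In each of the three cases of the definition of $\stackpath{\ell}{u,v}$, the vertex placed in the $w_2$ slot is exactly a member of $\{u,v\}$ that directly queries $s_1$: in the first two cases this is precisely the stated condition, and in the third case $u$ and $v$ both directly query $s_1$ and the definition sets $w_2 = u$. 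Hence $w_2$ directly queries $w_3$. For the rank inequality, $\textsc{Pair}(u,v)$ only calls \textsc{Vertex} on vertices of $N(u)\cup N(v)$ of rank strictly less than $\min\{\pi(u),\pi(v)\}$, so $\pi(w_3) = \pi(s_1) < \min\{\pi(u),\pi(v)\} \le \pi(w_2)$. Combining the two ranges gives the observation.

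The argument is entirely routine; the one place that needs a moment's care is the three-case split in the definition of $\stackpath{\ell}{u,v}$, where one must check that the vertex chosen to precede $s_1$ on the path is indeed a vertex that issues a direct query to $s_1$, so that the caller--callee relation and the descending-rank property extend all the way to the front of the path. Everything else is an immediate consequence of the rank conditions appearing in the loops of the two oracles.
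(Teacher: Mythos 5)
Your proof is correct and matches the paper's treatment: the paper states this observation as immediate from the construction of \stack{\ell}{u,v} and \stackpath{\ell}{u,v}, and your argument is exactly that unwinding — consecutive stack entries are in the caller--callee relation with strictly decreasing ranks by the \textsc{Vertex} loop condition, and the case analysis in the definition of \stackpath{\ell}{u,v} together with the rank condition in \textsc{Pair} handles the step from the second vertex to the first stack element.
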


Our charging scheme for $X$ relies crucially on the following claim:

\begin{claim}\label{cl:last-3-BT}
    For $\{u,v\} \in X$ and $\ell \in [2,2r]$, the last three vertices in \stackpath{\ell}{u, v} form a bad triangle.
\end{claim}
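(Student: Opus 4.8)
\textbf{Proof plan for \Cref{cl:last-3-BT}.}

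Write $\stackpath{\ell}{u,v} = (w_1, \dots, w_{\ell+2})$, so the last three vertices are $w_\ell, w_{\ell+1}, w_{\ell+2}$ (using the indexing that $\stack{\ell}{u,v} = (w_3,\dots,w_{\ell+2})$ after prepending the two endpoints). By \Cref{obs:query-path}, consecutive vertices form edges, so $\{w_\ell, w_{\ell+1}\} \in E$ and $\{w_{\ell+1}, w_{\ell+2}\} \in E$; hence to show these three form a bad triangle it suffices to show that $\{w_\ell, w_{\ell+2}\}$ is a \emph{non-edge}. The plan is to argue this by contradiction: if $\{w_\ell, w_{\ell+2}\} \in E$, I will show that the stack of recursive calls would have reached size $\ell$ strictly earlier than the moment recorded by $\stack{\ell}{u,v}$, contradicting the \emph{first-moment} definition of $\stack{\ell}{u,v}$.

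First I would set up the key structural facts. By \Cref{obs:query-path}, the ranks along $\stackpath{\ell}{u,v}$ are strictly decreasing from $w_3$ onward: $\pi(w_3) > \pi(w_4) > \cdots > \pi(w_{\ell+2})$, and each $w_{i-1}$ directly queries $w_i$. In particular $w_{\ell+1}$ directly queries $w_{\ell+2}$, and $w_\ell$ directly queries $w_{\ell+1}$. Recall how the stack evolves when we call $\FnVx{w_\ell}$: it iterates over the lower-rank neighbors of $w_\ell$ in increasing rank order, recursing into each. The frame for $w_{\ell+1}$ is pushed at the moment $\FnVx{w_\ell}$ reaches $w_{\ell+1}$ in this iteration; at that instant the stack is $(w_1,\dots,w_\ell,w_{\ell+1})$ — but note that when we recurse into $w_{\ell+1}$ we may first descend into \emph{its} lower-rank neighbors before $w_{\ell+2}$ is ever pushed. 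The moment recorded as $\stack{\ell}{u,v}$ is the \emph{first} time the stack has exactly $\ell$ elements; I need to locate when this occurs relative to these events. Since the stack grows and shrinks by one at a time and $\stackpath{\ell}{u,v}$ is a genuine stack configuration of size $\ell+2$, at some earlier point the stack passed through size $\ell$, and $\stack{\ell}{u,v}$ is that earliest such configuration.

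The heart of the argument: suppose $\{w_\ell, w_{\ell+2}\} \in E$. Since $\pi(w_{\ell+2}) < \pi(w_{\ell+1}) < \pi(w_\ell)$, the vertex $w_{\ell+2}$ is a lower-rank neighbor of $w_\ell$ with rank strictly smaller than $w_{\ell+1}$'s. Therefore, in the execution of $\FnVx{w_\ell}$, the recursive call $\FnVx{w_{\ell+2}}$ is made \emph{before} the recursive call $\FnVx{w_{\ell+1}}$ (the loop processes neighbors in increasing rank order). At the moment $\FnVx{w_{\ell+2}}$ is invoked from within $\FnVx{w_\ell}$, the stack consists of $(w_1, \dots, w_{\ell-1}, w_\ell, w_{\ell+2})$ — exactly $\ell+1$ elements — wait, I need to recount: the prefix $(w_1,\dots,w_\ell)$ sitting on the stack when we are inside $\FnVx{w_\ell}$ has $\ell$ elements (here $w_1, w_2$ are the two prepended endpoints $u,v$ and $w_3,\dots,w_\ell$ is the recorded stack of size $\ell-2$, so $w_\ell$ is the $\ell$-th frame). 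Actually the frame $\FnVx{w_\ell}$ is itself the $\ell$-th frame, so at the instant just before pushing any child of $w_\ell$, the stack already has size $\ell$; this is a valid size-$\ell$ configuration occurring strictly before the push of $w_{\ell+1}$, i.e. strictly before the configuration $\stackpath{\ell}{u,v}$ of size $\ell+2$ is realized. But $\stack{\ell}{u,v}$ is defined as the \emph{first} time the stack has $\ell$ elements, so $\stack{\ell}{u,v}$ must equal this earlier configuration $(w_1,\dots,w_\ell)$ — which is consistent, not yet a contradiction. The real point is subtler, so I would instead track it this way: let me reconsider which "first moment" is at play.

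Let me restate the clean version of the core step. Consider the instant that $\stack{\ell}{u,v}$ is recorded; call its contents $(w_1,\dots,w_\ell)$, so $w_{\ell-1}$ directly queries $w_\ell$ and we are inside the frame $\FnVx{w_\ell}$ having just pushed it. After this, the stack grows to $(w_1,\dots,w_\ell,w_{\ell+1})$ when $\FnVx{w_\ell}$ recurses into its lowest-rank unprocessed neighbor $w_{\ell+1}$, and then grows to size $\ell+2$ with $w_{\ell+2}$ being the lowest-rank neighbor of $w_{\ell+1}$. Here is the contradiction I will derive: because $\stack{\ell}{u,v}$ is the \emph{first} time size $\ell$ is reached, the frame $\FnVx{w_\ell}$ must have been pushed onto a stack that had never before had size $\ell$; in particular $w_{\ell+1}$ is the \emph{first} child of $w_\ell$ to be recursed into (if an earlier child $z$ with $\pi(z) < \pi(w_{\ell+1})$ existed and $\FnVx{z}$ returned $0$, the stack would have reached size $\ell$ — actually size grows to $\ell$ just by pushing $w_\ell$, so this is automatic; the constraint is rather that $w_{\ell+1}$ is the \emph{lowest-rank} neighbor of $w_\ell$, since the loop in $\FnVx{w_\ell}$ goes in increasing rank order and $w_{\ell+1}$ is the one that got pushed). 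So $w_{\ell+1} = \arg\min\{\pi(z) : z \in N(w_\ell),\ \pi(z) < \pi(w_\ell)\}$. Now if $\{w_\ell,w_{\ell+2}\} \in E$, then $w_{\ell+2} \in N(w_\ell)$ with $\pi(w_{\ell+2}) < \pi(w_{\ell+1})$, contradicting that $w_{\ell+1}$ is the minimum-rank lower neighbor of $w_\ell$. Hence $\{w_\ell, w_{\ell+2}\} \notin E$, and $\{w_\ell, w_{\ell+1}, w_{\ell+2}\}$ is a bad triangle.

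\textbf{Main obstacle.} The delicate part is the claim that $w_{\ell+1}$ is the minimum-rank lower-rank neighbor of $w_\ell$ — i.e., that when $\stack{\ell}{u,v}$ is recorded and then the stack next grows, it grows through $w_\ell$'s very first child in rank order. This requires carefully reasoning that the "first moment the stack has $\ell$ elements" coincides with the moment $\FnVx{w_\ell}$ is freshly pushed (before processing any of its children), which in turn uses that the parent $w_{\ell-1}$ directly queried $w_\ell$ and the monotone structure of $\stackpath{\ell}{u,v}$ from \Cref{obs:query-path}, together with the boundary behaviour at $\ell = 2$ where $w_3 = w_1$ of the stack is directly queried by the pair endpoint. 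I would handle the $\ell=2$ base case separately (using \Cref{obs:vertex-query}\ref{item:query-pair-pivot}--\ref{item:query-pair-nonpivot} to see the pair $\{u,v\}$ queries $w_3$ and then $w_3$ queries its lowest-rank neighbor $w_4$), and for $\ell \geq 3$ lean on the inductive/monotone description of how $\stack{\ell}{u,v}$ extends $\stack{\ell-1}{u,v}$.
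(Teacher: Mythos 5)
Your write-up has a genuine gap, and it stems from a miscount of the stack. You treat \stackpath{\ell}{u,v} as if it were a stack configuration of size $\ell+2$, with $w_{\ell+1},w_{\ell+2}$ pushed \emph{after} the recorded moment; in fact \stack{\ell}{u,v} itself already has $\ell$ elements, its last two entries \emph{are} $w_{\ell+1},w_{\ell+2}$, and the two extra vertices of \stackpath{\ell}{u,v} are the pair endpoints prepended at the front (they are never stack frames). Consequently your remark that ``size grows to $\ell$ just by pushing $w_\ell$'' is false -- at that instant the stack has $\ell-2$ elements -- and your ``clean version'' reasons about two future pushes that are not the vertices of the triangle in the claim. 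More importantly, the step your contradiction actually rests on, namely that $w_{\ell+1}$ is the minimum-rank (directly queried) neighbor of $w_\ell$, is only justified by ``the loop goes in increasing rank order and $w_{\ell+1}$ is the one that got pushed,'' which is circular: a lower-rank child could have been pushed, returned, and popped before $w_{\ell+1}$ was ever reached, without this being ruled out by anything you state. You flag this as the ``main obstacle'' but the resolution you sketch (that the first moment of size $\ell$ coincides with the fresh push of $\FnVx{w_\ell}$) is based on the same miscount and is not true.

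The good news is that your minimality claim is correct and can be proved, giving a route genuinely different from the paper's. With the correct count: suppose $z$ is a neighbor of $w_\ell$ (or, when $\ell=2$, a vertex of $N(u)\cup N(v)$ handled through the \FnPr loop) that is directly queried and has $\pi(z)<\pi(w_{\ell+1})$; then $z$ is processed before $w_{\ell+1}$ in the same incarnation of the loop. If $\FnVx{z}$ returns $1$, the caller returns $0$ and $\FnVx{w_{\ell+1}}$ is never invoked, contradicting the recorded stack; if it returns $0$, it must itself have made at least one recursive call, at which instant the stack has $(\ell-2)+1+1=\ell$ elements, strictly before the recorded moment -- contradicting the first-moment definition of \stack{\ell}{u,v}. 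Since the set of directly queried neighbors is downward closed in rank, an edge $\{w_\ell,w_{\ell+2}\}$ would make $w_{\ell+2}$ exactly such a $z$, and the claim follows. This argument replaces the paper's case split on whether $w_{\ell+2}\in\pivotspivot{}$ (which uses the rank of $p_{w_\ell}$ in one case and exhibits the earlier size-$\ell$ stack $(w_3,\dots,w_\ell,w_{\ell+2},p_{w_{\ell+2}})$ in the other) by a dichotomy on the return value of an earlier child, which is arguably cleaner; but as written your proposal does not contain this step, so the proof is incomplete.
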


\begin{myproof}
We write
\[
    \stackpath{\ell}{u, v} = (w_1, w_2, \dots, w_{\ell+2}).
\]
To show that $\{w_\ell, w_{\ell+1}, w_{\ell+2}\}$ form a bad triangle, it suffices to show that $(w_\ell, w_{\ell+2}) \not \in E$. We assume otherwise that $(w_\ell, w_{\ell+2}) \in E$ and argue by contradiction. Note by \Cref{obs:query-path} that 
\[
    \pi(w_{\ell}) > \pi(w_{\ell+1}) > \pi(w_{\ell+2}),
\]
$w_\ell$ directly queries $w_{\ell+1}$, and $w_{\ell+1}$ directly queries $w_{\ell+2}$.

Suppose first that $w_{\ell+2} \in \pivotspivot{}$ is a pivot. Since $w_{\ell+2}$ is also a neighbor of $w_\ell$, its rank under $\pi$ must be at least that of $p_{w_\ell}$. But then $\pi(w_{\ell+1}) > \pi(w_{\ell+2}) \geq \pi(p_{w_\ell})$, which by \Cref{obs:vertex-query} \ref{item:query-vertex-nonpivot} implies $w_\ell$ does not directly query $w_{\ell+1}$. This is a contradiction.

Suppose on the other hand that $w_{\ell+2} \not \in \pivotspivot{}$. Then by \Cref{obs:vertex-query} \ref{item:query-vertex-nonpivot}, $w_{\ell+2}$ directly queries $p_{w_{\ell+2}}$. Moreover, since $w_\ell$ directly queries $w_{\ell+1}$ and $w_{\ell+2}$ is a neighbor of $w_\ell$ with rank smaller than $w_{\ell+1}$, $w_\ell$ also directly queries $w_{\ell+2}$. Therefore, when we call \FnPr{$u,v$}, at some point the stack of recursive calls to \FnVx consists of the following $\ell$ elements:
\[
    (w_3, \dots, w_{\ell}, w_{\ell+2}, p_{w_{\ell+2}}),
\]
which happens before the stack consists of $\stack{\ell}{u,v} = (w_3, \dots,w_{\ell}, w_{\ell+1}, w_{\ell+2})$. This contradicts that \stack{\ell}{u,v} is the list of elements in the stack when it {\em first} reaches $\ell$ elements.
\end{myproof}

\subsection{Our Charging Scheme}\label{sec:our-CS}
We can now define our charging scheme for $X$.

\begin{whitetbox}
\textbf{The Charging Scheme}\textbf{:}

\vspace{0.2cm}

\begin{itemize}[topsep=0pt, itemsep=0pt,leftmargin=13pt]
    \item Pick $\ell$ from $[2, 2r]$ uniformly at random.
    
    \item For any $(u, v) \in X$ charge the bad triangle formed by the last three vertices of \stackpath{\ell}{u,v}.
    
    (Here the existence of \stackpath{\ell}{u, v} (or \stack{\ell}{u, v}) follows from \Cref{cl:stack-gets-large} and its last three vertices form a bad triangle by \Cref{cl:last-3-BT}.)
\end{itemize}
\end{whitetbox}

Our main result is the following bound on the width of our charging scheme:

\begin{lemma}\label{lem:width}
Our charging scheme for $X$ has width $\frac{8}{2r-1}$ for any $r \geq 1$.

\end{lemma}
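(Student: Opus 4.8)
The plan is to fix a pair of distinct vertices $a, b \in V$ and bound the expected number of charges to bad triangles involving both $a$ and $b$. A charge to such a triangle is created when, for some $\{u,v\} \in X$ and some choice of $\ell \in [2,2r]$, the last three vertices of $\stackpath{\ell}{u,v}$ contain both $a$ and $b$. By \Cref{obs:query-path}, those last three vertices $w_\ell, w_{\ell+1}, w_{\ell+2}$ satisfy $\pi(w_\ell) > \pi(w_{\ell+1}) > \pi(w_{\ell+2})$ and form a descending segment of the query path, with $w_{\ell+2}$ queried (recursively) from $w_{\ell+1}$ and $w_{\ell+1}$ from $w_\ell$. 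So if $\{a,b\}$ lies among these three vertices, WLOG $\pi(a) > \pi(b)$, then either $(a,b) = (w_\ell, w_{\ell+1})$, or $(a,b) = (w_{\ell+1}, w_{\ell+2})$, or $(a,b) = (w_\ell, w_{\ell+2})$ — in every case $b$ is directly queried (within the \FnPr-recursion) by a vertex of higher rank along the query path, and $a$ appears just before or two before $b$. I would organize the count by which of these three configurations occurs; the crux is that in each configuration the triangle's third vertex and the index $\ell$ are essentially determined by the state of the query process once we condition on $b$ being reached at a prescribed stack depth.

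The key quantitative input is the \citet{YoshidaYI-STOC09}-style bound on the vertex query process, adapted to our setting: for the random permutation $\pi$, summing over all ordered pairs $(z, z')$ the probability that $z'$ is directly queried by $z$ while $z$ itself sits in the recursion stack, one gets a total of $O(m)$, i.e. an average of $O(m/n)$ per vertex. More precisely, I would use the refined statement that for each \emph{edge} $\{z,z'\}$, the probability that the call \FnVx$(z)$ recursively invokes \FnVx$(z')$ (equivalently, that $z'$ is an ancestor-descendant pair in the recursion forest rooted appropriately) is controlled, and that the expected number of edges appearing on any single root-to-node path of the recursion is bounded. Translated through the pair oracle: for a fixed $\{u,v\} \in X$, the path $\stackpath{\ell}{u,v}$ has its last two edges $\{w_\ell, w_{\ell+1}\}$, $\{w_{\ell+1}, w_{\ell+2}\}$ pinned down by the "first time the stack reaches $\ell$" rule, and the third triangle edge is the non-edge $\{w_\ell, w_{\ell+2}\}$ guaranteed by \Cref{cl:last-3-BT}. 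Averaging over $\ell \in [2,2r]$ — which is where the factor $\frac{1}{2r-1}$ enters, since there are $2r-1$ choices — and summing the per-edge query probabilities over the graph gives a bound of the form $\frac{c}{2r-1}$ times a quantity that telescopes to $O(1)$ per vertex pair, with the constant working out to $8$.

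Concretely the steps are: (1) Fix $a,b$ with $\pi(a)>\pi(b)$ and expand $\E[\#\text{charges to }\{a,b,\cdot\}]$ as a sum over $\{u,v\}\in X$ and $\ell\in[2,2r]$ of $\Pr[\text{last three of }\stackpath{\ell}{u,v}\text{ involve }a,b]$, pulling out the $\frac{1}{2r-1}$ from the uniform choice of $\ell$. (2) Use \Cref{obs:query-path} to argue that in any such event, $b$ is reached by the \FnPr-recursion at some stack position $j\in\{\ell,\ell+1,\ell+2\}$ and $a$ is its predecessor or grand-predecessor on the query path; hence the event is contained in the event "$\{u,v\}$ directly-or-recursively queries $a$, and then $a$ queries $b$" (or "$b$ queries $a$", depending on configuration). (3) Sum over $\{u,v\}\in X$ first: since distinct pairs in $X$ start the recursion from distinct starting frontiers but the recursion trees are determined by $\pi$, I would bound $\sum_{\{u,v\}\in X}\Pr[\cdots]$ using the property that the total query volume over all starts is $O(m)$ — more carefully, using the Yoshida–Yamamoto–Ito accounting where each edge of $G$ is charged $O(1)$ times in expectation across the whole forest — together with the structural facts from \Cref{lem:X-pair-classification} that constrain where unsettled low-rank vertices sit relative to pairs in $X$. (4) Collect the three configurations, each contributing at most a bounded multiple of $\frac{1}{2r-1}$, and verify the constants add to $8$.

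I expect the main obstacle to be step (3): unlike the original Yoshida–Yamamoto–Ito setting where one bounds query complexity averaged over a single random query vertex, here we must bound the \emph{total} query contribution summed over all pairs $\{u,v\}\in X$ — a set that itself depends on $\pi$ — and argue this is still $O(\opt{G})$ rather than something like $O(n)\cdot\opt{G}$. Making this rigorous requires exploiting the characterization in \Cref{lem:X-pair-classification} (that each pair in $X$ has an unsettled, small-rank witness vertex in its neighborhood) to re-root the counting: instead of charging to pairs in $X$, one charges to the bad triangle endpoints and uses the descending-rank structure of $\stackpath{\ell}{}$ to ensure each bad triangle edge is counted $O(1)$ times per value of $\ell$. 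The non-triviality flagged in \Cref{sec:techniques} ("we generalize the analysis of \cite{YoshidaYI-STOC09} in a non-trivial way ... by using several structural properties of the mistakes in $X$") is precisely this re-rooting, and getting the clean constant $8$ out of it is the delicate part.
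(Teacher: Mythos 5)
There is a genuine gap: your proposal sets up the right outer framework but never supplies the argument that actually produces the constant. You correctly pull out the factor $\frac{1}{2r-1}$ from the uniform choice of $\ell$, and you correctly reduce the problem to showing that, for each fixed pair $a,b$, the expected number (over $\pi$) of pairs $(\{u,v\},\ell)$ whose path $\stackpath{\ell}{u,v}$ ends at a bad triangle containing both $a$ and $b$ is $O(1)$. But your step (3) replaces this per-target bound with an appeal to the Yoshida--Yamamoto--Ito statement that the \emph{total} query volume, averaged over a random start, is $O(m)$ (equivalently $O(m/n)$ per start). That statement is of the wrong shape here: you must sum over all starts $\{u,v\}\in X$, a set that depends on $\pi$ and can be large, and you need the resulting count to be a constant \emph{for every fixed pair} $a,b$ (including non-adjacent pairs $\{a,b\}\notin E$, which the width definition also covers), not an average over targets. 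You flag this yourself as the main obstacle and describe it as a ``re-rooting,'' but no mechanism for the re-rooting is given, and the constant $8$ is asserted rather than derived.

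The missing mechanism in the paper is a rank-rotation map with constant-size preimages. Conditioning on the ranks $i<j$ of the initiating pair, one maps each $(\pi,\ell)$ whose path ends at $a,b$ to a new permutation $\tilde{\pi}$ obtained by rotating the ranks of the vertices along the prefix of $\stackpath{\ell}{\pi_i,\pi_j}$ ending at $(a,b)$ (when $\{a,b\}\in E$), or along the whole path while skipping the middle vertex of the final triangle (when $\{a,b\}\notin E$); the target permutations place $\{a,b\}$ at ranks $\{i,j\}$, so there are $2(n-2)!$ of them. The heart of the proof (\Cref{cl:mapinjectiveplus} and \Cref{cl:mapinjectiveminus}) is that any $\tilde{\pi}$ has at most $4$ (edge case) or $2$ (non-edge case) preimages, which is established by a delicate case analysis exploiting exactly the two structural properties you mention only in passing: that each pair in $X$ is an edge whose endpoints share a common pivot under \pivot{}, and that $\stackpath{\ell}{u,v}$ is defined by the \emph{first} time the recursion stack reaches size $\ell$ (so an alternative shorter recursion branch would contradict minimality). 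Summing $|T^{i,j}(a,b)|\le 8(n-2)!$ (resp.\ $4(n-2)!$) over the $\binom{n}{2}$ choices of $i<j$ gives $\E_\pi[|R_\pi(a,b)|]\le 4$ (resp.\ $2$), and adding the two orientations of $(a,b)$ yields the width $\frac{8}{2r-1}$. Without constructing this map (or an equivalent injection-style counting) and carrying out the preimage analysis, your outline does not yield a constant per-pair bound, and in particular gives no route to the factor $8$.
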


Note that by \Cref{lem:width-gives-apx}, \Cref{lem:width} immediately proves \Cref{lem:EX}.

Next, we focus on proving \Cref{lem:width}. 

\subsection{The High Level Approach and Relation to \cite{YoshidaYI-STOC09}}\label{sec:YYI}

We first give a high-level summary of our plan. To bound the number of charges to bad triangles involving two fixed vertices $a,b \in V$, we are led to bound the number of pairs of a permutation $\pi$ of $V$ and a path \stackpath{\ell}{u,v}, initiated by the pair oracle \FnPr called on a mistake $\{u,v\}$ in $X$ under $\pi$, that ends at a bad triangle involving both $a$ and $b$. To do this, we construct a new permutation $\tilde{\pi}$ from $\pi$ by rotating the ranks of vertices on the path \stackpath{\ell}{u,v} in a certain way while fixing the ranks of the other vertices. Let us temporarily denote $\tilde{\pi} = \phi(\pi, \stackpath{\ell}{u,v})$.

In the case $\{a,b\}$ is an edge in $E$, the map $\phi$ (illustrated as \Cref{fig:plus-rotation}) is the same as a construction by \citet*{YoshidaYI-STOC09}, which they used to bound the number of times the recursive query process of the vertex oracle \FnVx passes through a fixed edge $e \in E$, and thereby bound the query complexity of the vertex oracle. The key property of this construction is that the preimage of any permutation $\tilde{\pi}$ under the map $\phi$ can only have a small constant size. For \cite{YoshidaYI-STOC09}, this implies that in expectation over $\pi$ there are only $O(1)$ recursive queries that pass through $e$. For us, this implies that in expectation over $\pi$ there are only $O(1)$ possibilities for \stackpath{\ell}{u,v}. However, in our case, we need to consider the pair oracle in order to charge the pairs in $X$, and a path \stackpath{\ell}{u,v} does not necessarily specify a valid sequence of recursive queries initiated by the vertex oracle. There in fact may be many more than $O(1)$ such paths ending at a single edge. To get around this, we rely crucially on two additional properties in our case. First, each path \stackpath{\ell}{u,v} starts at a extra mistake $\{u,v\} \in X$ made by \ourpivot{} as compared to \pivot{}, which in particular is an {\em edge} whose two endpoints share a {\em common pivot} in \pivot. This provides new incidence relations among the vertices that help us rule out certain possibilities for \stackpath{\ell}{u,v}. Second, each path \stackpath{\ell}{u,v} is given by the {\em first} moment when the corresponding stack of recursive calls to \FnVx reaches a certain size. This places additional constraints on the preimages of $\phi$.

An additional challenge in our case as compared to \cite{YoshidaYI-STOC09} is that, we also need to consider pairs $\{a,b\}$ that are non-edges. In that case, we propose a modified construction of the rotation map $\phi$ (illustrated as \Cref{fig:minus-rotation}), which we show preserves the key property that the preimage of any permutation has a small constant size. Our argument for this property builds on that for the case where $\{a,b\} \in E$ but also carefully rules out several new edge cases, again by using the two additional properties in our case.

\subsection{Proof of \cref{lem:width}}

Now we start the proof. Given distinct vertices $a, b \in V$ and a permutation $\pi$ over $V$, we introduce a set $R_\pi(a,b)$ that accounts for the charges under $\pi$ to bad triangles involving both $a$ and $b$ that come from paths passing through $a$ before $b$. Formally, we define $R_{\pi}(a,b)$ to be the set of $(\{u,v\}, \ell)$, where $\{u,v\} \in X$ and $\ell \in [2, 2r]$ is an integer, such that under $\pi$:
\begin{itemize}
    \item the last three vertices in the ordered list \stackpath{\ell}{u,v} are $(a,b,c)$ or $(c,a,b)$ for some $c \in V$, if $\{a,b\} \in E$;
    
    \item the last three vertices in the ordered list \stackpath{\ell}{u,v} are $(a,c,b)$ for some $c \in V$, if $\{a,b\} \not \in E$.
\end{itemize}
To bound the number of charges to bad triangles involving both $a$ and $b$ and prove \Cref{lem:width}, we bound the size of $R_{\pi}(a,b)$ in the following two lemmas:

\begin{lemma}\label{lem:querypathplus}
Let $a, b \in V$ be distinct vertices such that $\{a,b\} \in E$. Then, for any $r \geq 1$, we have
\[
    \E_\pi[|R_\pi(a,b)|] \leq 4.
\]
\end{lemma}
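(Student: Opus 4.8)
The plan is to bound $|R_\pi(a,b)|$ by exhibiting an injection (up to a small constant-size blow-up) from $R_\pi(a,b)$ into the set of permutations of $V$. Fix the edge $\{a,b\}\in E$. For each pair $(\{u,v\},\ell)\in R_\pi(a,b)$ the path $\stackpath{\ell}{u,v}$ ends at $(a,b,c)$ or $(c,a,b)$; by Observation~\ref{obs:query-path} the ranks along the path are descending from the third vertex onward, so the portion of the path from $u,v$ down to $a$ (or down to $b$, in the $(c,a,b)$ case) is a decreasing sequence of ranks, and in fact the whole path is a simple path in $G$ determined by $\pi$. I would define the rotation map $\phi(\pi,\stackpath{\ell}{u,v})=\tilde\pi$ exactly as in \citet{YoshidaYI-STOC09}: cyclically shift the ranks assigned to the vertices lying on $\stackpath{\ell}{u,v}$ (pushing the smallest rank on the path up to the top vertex of the path and shifting the others down by one position along the path), leaving the ranks of all vertices outside the path untouched. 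The key claim to establish is that the preimage $\phi^{-1}(\tilde\pi)$ has size at most $4$ for every $\tilde\pi$; summing over $\tilde\pi$ then gives $\E_\pi[|R_\pi(a,b)|]=\sum_{\tilde\pi}|\phi^{-1}(\tilde\pi)|/|V|!\le 4$.

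The main work, and the main obstacle, is proving the preimage bound. Here is how I would structure it. Suppose $(\{u_1,v_1\},\ell_1)$ and $(\{u_2,v_2\},\ell_2)$ both map to the same $\tilde\pi$. First, recover structural information from $\tilde\pi$ alone: in $\tilde\pi$ the vertex $a$ (the lowest-rank vertex among the last three of the path, since ranks descend along the path) has received the rank that was originally at the \emph{top} of the path, so $a$'s new rank tells us where the top of the path sat; conversely, following the query process of $\pivot$/$\FnVx$ under $\tilde\pi$ starting from $a$ we can reconstruct a canonical candidate path. The point of the \citet{YoshidaYI-STOC09} argument is that the recursive query structure is rigid enough that only $O(1)$ original paths are consistent with a given image. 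In our setting I would additionally invoke the two special features highlighted in Section~\ref{sec:YYI}: (1) the path \emph{starts} at a pair $\{u,v\}\in X$, which by Claim~\ref{cl:X-plus-pair} is an edge and by Definition~\ref{def:pivot-pair} has a common pivot in $\pivot$ — this pins down how $u,v$ attach to $w_1$ and rules out spurious extensions of the path; and (2) $\stackpath{\ell}{u,v}$ is taken at the \emph{first} time the stack reaches size $\ell$ (Claim~\ref{cl:last-3-BT}'s proof shows this first-moment property forbids certain detours), which constrains $\ell$ and the identity of the last three vertices given the rest of the path. Combining these, the number of consistent $(\{u,v\},\ell)$ is at most $4$: roughly, a factor $2$ for the two orderings $(a,b,c)$ vs.\ $(c,a,b)$ of the terminal triple (equivalently, which of $a,b$ is the deeper query), and a factor $2$ for the two ways $\{u,v\}$ can sit at the head of the path.

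Concretely the steps, in order, are: (i) show $\phi$ is well-defined — i.e.\ the vertices of $\stackpath{\ell}{u,v}$ are distinct, which follows since ranks are strictly descending along the path by Observation~\ref{obs:query-path} together with the head vertices having the two largest ranks; (ii) describe $\tilde\pi$ and observe which ranks moved where; (iii) given $\tilde\pi$, reconstruct from the $\FnVx$ recursion under $\tilde\pi$ the set of possible paths, arguing as in \cite{YoshidaYI-STOC09} that the "undo" is essentially unique along the descending-rank portion; (iv) use $\{u,v\}\in X\subseteq E$ and the common-pivot property to bound the choices for the head $\{u,v\}$; (v) use the first-moment definition of $\stack{\ell}{u,v}$ to bound the choices for $\ell$ and the terminal triple; (vi) multiply the bounds to get $|\phi^{-1}(\tilde\pi)|\le 4$ and conclude by averaging over $\pi$. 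I expect step~(iii) — making the \cite{YoshidaYI-STOC09} rigidity argument precise in the pair-oracle setting, where a path need not correspond to a genuine vertex-oracle recursion — to be where essentially all the difficulty lies; steps (iv) and (v) are exactly where the two "additional properties" of our setting are spent to close the gap.
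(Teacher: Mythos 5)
Your overall architecture coincides with the paper's (a rank-rotation map in the spirit of \citet{YoshidaYI-STOC09} with $O(1)$-size preimages, then an averaging over permutations), but the step that constitutes essentially all of the mathematical content of the lemma is left unproved. In the paper this is \Cref{cl:mapinjectiveplus}: if two pairs $(\pi,\ell)$ and $(\pi',\ell')$ have the same image under the rotation, then either the rotated prefixes coincide and $\pi=\pi'$, or one prefix contains the other as a subpath with exactly one fewer vertex. Its proof is a delicate case analysis (the two cases \ref{item:case-plus-A}/\ref{item:case-plus-B}, together with \Cref{obs:vertex-query}, \Cref{obs:query-path} and \Cref{obs:small-rank-same}) in which the two ``additional properties'' you cite --- that a mistake in $X$ is an edge whose endpoints share a common pivot, and that $\stack{\ell}{u,v}$ is taken at the \emph{first} moment the stack reaches size $\ell$ --- are deployed at specific, nontrivial points (to show $x\in\pivotspivot'$ and $x=p'_y$, to show $z\in\pivotspivot$, and to force $\{x,y\}$ to be the first edge of $P$). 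Naming these properties and saying they ``rule out spurious extensions'' is not a proof; your step (iii), which you yourself flag as containing all the difficulty, is exactly the missing argument, so the preimage bound $|\phi^{-1}(\tilde\pi)|\le 4$ is an assertion rather than a consequence of anything you establish.

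Separately, the rotation you specify is not the one that makes the bookkeeping work, and your accounting of the factor $4$ does not match what has to be proved. Because the initiating object is a \emph{pair} (two ranks $i<j$), the paper rotates ranks by \emph{two} positions along the prefix $P$ of $\stackpath{\ell}{\pi_i,\pi_j}$ truncated at the directed edge $(a,b)$, so that both head ranks $i,j$ land on $\{a,b\}$; the image then lies in the set $U^{i,j}_{a,b}$ of size $2(n-2)!$ and the ranks of the initiating pair are recoverable from the image. Your one-position cyclic shift of the full path places the pair's ranks nowhere identifiable, so even the setup for inverting the map is unclear. Moreover, the paper's factor $4$ per image permutation arises as (at most two possible prefixes, one extending the other by one vertex --- the hard claim) times (at most two values of $\ell$ per fixed prefix and permutation, since $(a,b)$ may be the last or the second-to-last edge of the full path); your proposed ``$2\times 2$'' --- terminal ordering times the two ways $\{u,v\}$ sits at the head --- is a different decomposition, and the second factor is not a genuine degree of freedom once the prefix and permutation are fixed, while the first only accounts for the $\ell$-ambiguity and not for the two-prefix ambiguity that \Cref{cl:mapinjectiveplus} controls. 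So there is a genuine gap: the rigidity claim is missing, and the map as described would need to be modified (to the truncated, two-position rotation) before that claim could even be stated and proved cleanly.
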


\begin{lemma}\label{lem:querypathminus}
Let $a, b \in V$ be distinct vertices such that $\{a,b\} \not \in E$. Then, for any $r \geq 1$, we have
\[
    \E_\pi[|R_\pi(a,b)|] \leq 2.
\]
\end{lemma}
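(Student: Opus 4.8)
The plan is to adapt the rotation-map argument that establishes \Cref{lem:querypathplus} to the case where $\{a,b\}\notin E$. Following the high-level outline in \Cref{sec:YYI}, I would fix the non-edge pair $\{a,b\}$ and, for each permutation $\pi$, each $(\{u,v\},\ell)\in R_\pi(a,b)$ witnesses a path $\stackpath{\ell}{u,v}=(w_1,\dots,w_{\ell+2})$ whose last three vertices are $(a,c,b)=(w_\ell,w_{\ell+1},w_{\ell+2})$ for some $c=w_{\ell+1}\in V$. The goal is to define a map $\phi$ sending $(\pi,\{u,v\},\ell)$ to a new permutation $\tilde\pi=\phi(\pi,\stackpath{\ell}{u,v})$ that agrees with $\pi$ off the path and ``rotates'' the ranks of the path vertices, in such a way that (i) $\phi$ is injective when restricted to witnessing triples with a fixed target $\{a,b\}$ — equivalently, each $\tilde\pi$ has preimage of size at most a small constant, here $2$ — and (ii) $\tilde\pi$ ranges over all permutations, or at least over a set no larger than the full symmetric group. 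Summing over $\pi$ and dividing by $n!$ then yields $\E_\pi[|R_\pi(a,b)|]\le 2$.

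The concrete construction: since $\stackpath{\ell}{u,v}$ has descending ranks from $w_3$ onward by \Cref{obs:query-path}, and $w_1,w_2\in\{u,v\}$ are the mistake endpoints, the natural move (mirroring \cite{YoshidaYI-STOC09}, see \Cref{fig:minus-rotation}) is to cyclically shift the ranks along the path so that the last vertex $b=w_{\ell+2}$, which currently has the smallest rank on the suffix, is moved out of pivot-forcing position while $a=w_\ell$ and $c=w_{\ell+1}$ shift accordingly; the ranks of $w_1,\dots,w_{\ell-1}$ on the path are permuted consistently so that the whole multiset of ranks on the path is preserved and $\tilde\pi$ is a genuine permutation of $V$. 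To invert $\phi$ from $\tilde\pi$ and the known target $\{a,b\}$, I would argue that the positions $a$ and $b$ in $\tilde\pi$, together with the structural constraints, pin down the path $(w_1,\dots,w_{\ell+2})$ and hence $(\{u,v\},\ell)$ up to the claimed factor of $2$: one uses that $w_{\ell+2}=b$ is adjacent to $w_\ell=a$ is impossible (it is a bad triangle, from \Cref{cl:last-3-BT}), that $w_{\ell+1}=c$ is the unique common neighbor playing the pivot-candidate role, and that $\{u,v\}\in X$ forces $u,v$ to be an edge with a common pivot in \pivot{} — the extra incidence relations flagged in \Cref{sec:YYI}. The ``first moment'' minimality in the definition of $\stack{\ell}{u,v}$ rules out the alternative reconstructions, as in the proof of \Cref{cl:last-3-BT}. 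The factor $2$ (versus $4$ in the edge case) should come from the fact that a non-edge target $(a,c,b)$ already fixes the roles of $a$ and $b$ as the two endpoints — there is no ambiguity of the form ``$(a,b,c)$ or $(c,a,b)$'' that inflated the edge case — so only the internal rotation ambiguity (essentially two choices) survives.

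The main obstacle will be step (i): verifying that the modified rotation map for non-edges is still injective up to a constant. In the edge case \cite{YoshidaYI-STOC09} the analysis leans on the fact that the rotated edge is genuinely traversed by the vertex oracle's recursion, giving clean local structure; for a non-edge $\{a,b\}$ there is no edge being traversed, and the reconstruction must instead exploit that $a=w_\ell$ and $b=w_{\ell+2}$ sit at distance two along the path with the common neighbor $c=w_{\ell+1}$ wedged between them, and that $\{a,b\}\notin E$ forbids the ``shortcut'' configuration. I expect the proof to enumerate a handful of edge cases for how $\tilde\pi$ could have arisen — depending on whether $w_1$ was queried by one or both of $u,v$, and on the relative rank of $c$ — and to kill each spurious case using either \Cref{obs:vertex-query}, the minimality of $\stack{\ell}{u,v}$, or the membership $\{u,v\}\in X$ via \Cref{lem:X-pair-classification}. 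Once the preimage bound of $2$ is in hand, the expectation bound is immediate by averaging over the $n!$ permutations.
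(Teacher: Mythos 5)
Your overall strategy is the same as the paper's (a YYI-style rotation map with a constant preimage bound, then averaging over the $n!$ permutations), but as written there are two genuine gaps. First, the rotation map itself is mis-specified. The paper's construction (\Cref{cl:boundpairminus}, \Cref{fig:minus-rotation}) does \emph{not} shift the middle vertex: it leaves the rank of $c=w_{\ell+1}$ unchanged, rotates the remaining path vertices by two positions, and—crucially—transfers the ranks $\{i,j\}=\{\pi(w_1),\pi(w_2)\}$ of the mistake pair onto $a$ and $b$, so that the image lands in $U^{i,j}_{a,b}=\{\tilde\pi \mid \{\tilde\pi_i,\tilde\pi_j\}=\{a,b\}\}$ of size $2(n-2)!$; the bound $\E_\pi|R_\pi(a,b)|\le 2$ then comes from $\binom{n}{2}\cdot 2\cdot 2(n-2)! = 2\,n!$. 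Your sketch has $a$ and $c$ ``shift accordingly'' and only asks that the multiset of ranks on the path be preserved; if $c$'s rank moves, or if the ranks $i,j$ do not end up exactly on $\{a,b\}$, the reconstruction of $(i,j)$ from $\tilde\pi$ and the size bound on the target set are both lost, and the injectivity argument below has no anchor (the whole point of fixing $c$'s rank is that the low ranks of $\pi$ and $\tilde\pi$ agree below the relevant threshold, which is what drives \Cref{obs:small-rank-same-minus}).

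Second, the heart of the lemma—the preimage-size-$2$ statement (\Cref{cl:mapinjectiveminus})—is only asserted, with a promise to ``enumerate a handful of edge cases.'' This is where essentially all the work lies: one must compare two colliding witnesses $(\pi,\ell)$ and $(\pi',\ell')$ and split into the cases $c=c'$ with $w_{\ell-1}=w'_{\ell'-1}$ (which reduces to the suffix/subpath argument of the edge case and yields the ``one path contains the other with one fewer vertex'' conclusion), $c=c'$ with $w_{\ell-1}\neq w'_{\ell'-1}$, and $c\neq c'$ (further split according to whether $c$ lies on the other path), killing the last three configurations using the agreement of low ranks, \Cref{obs:vertex-query}, and the first-moment minimality of $\stack{\ell}{u,v}$. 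Your intuition about why the constant drops from $4$ to $2$ (no last-edge/second-to-last-edge ambiguity for a non-edge target) is correct, but without the precise map and the case analysis the bound $\E_\pi[|R_\pi(a,b)|]\le 2$ is not yet established.
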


We first show that \Cref{lem:querypathplus} and \Cref{lem:querypathminus} together imply \Cref{lem:width}:

\begin{proof}[Proof of \Cref{lem:width} via \Cref{lem:querypathplus} and \Cref{lem:querypathminus}]
Let $a, b \in V$ be distinct vertices. We show that the expected number of charges to bad triangles involving both $a$ and $b$ is upper bounded by $\frac{8}{2r-1}$, where the expectation is taken over $\pi$ and $\ell$. In our charging scheme, a charge to a bad triangle involving both $a$ and $b$ comes from an ordered list \stackpath{\ell}{u,v} under some permutation $\pi$ and for some $\{u,v\} \in X$ and $\ell \in [2, 2r]$, such that the bad triangle consists of the last three vertices in \stackpath{\ell}{u,v}. In the case $\{a,b\} \in E$, $a$ and $b$ are among the last three vertices in \stackpath{\ell}{u,v} and are adjacent, although they can appear in either order. Since $\ell$ is chosen uniformly at random, by \Cref{lem:querypathplus}, the expected number of charges to bad triangles involving both $a$ and $b$ is at most
    \[
        \frac{1}{2r-1}\left(\E_\pi[|R_\pi(a,b)|] + \E_\pi[|R_\pi(b,a)|] \right) \leq \frac{8}{2r-1}.
    \]
In the other case $\{a,b\} \not \in E$, $a$ and $b$, in either order, are the last and third-to-last vertices in \stackpath{\ell}{u,v} respectively. Since $\ell$ is chosen uniformly at random, by \Cref{lem:querypathminus}, the expected number of charges to bad triangles involving both $a$ and $b$ is at most
    \[
        \frac{1}{2r-1}\left(\E_\pi[|R_\pi(a,b)|] + \E_\pi[|R_\pi(b,a)|] \right) \leq \frac{4}{2r-1} < \frac{8}{2r-1}.\qedhere
    \]
\end{proof}

\subsubsection{Proof of \Cref{lem:querypathplus}}\label{sec:proofplus}
Throughout this subsection, we fix distinct vertices $a, b \in V$ such that $\{a,b\} \in E$. Given a permutation $\pi$ over $V$ and $i \in [n]$, we denote by $\pi_i \in V$ the vertex with rank $i$ under $\pi$.

In our analysis, we break down the charges to bad triangles involving both $a$ and $b$ by specifying the {\em ranks} of the pair of vertices in $X$ that initiates the charge. Given $i,j \in [n]$ with $i<j$, we introduce a set $T^{i,j}(a,b)$ that accounts for the charges that come from paths originating at vertices with ranks $i$ and $j$ under some permutation $\pi$ and passing through $a$ before $b$. Formally, we define $T^{i,j}(a,b)$ to be the set of pairs $(\pi, \ell)$ of a permutation $\pi$ and an integer $\ell \in [2, 2r]$ such that under $\pi$, $\{\pi_i, \pi_j\} \in X$ and the last three vertices in the ordered list \stackpath{\ell}{\pi_i, \pi_j} are $(a,b,c)$ or $(c,a,b)$ for some $c \in V$. We will prove the following bound on the size of $T^{i,j}(a,b)$:

\begin{claim}\label{cl:boundpairplus}
For any $i,j \in [n]$ with $i<j$ and any $r \geq 1$, we have
\[
    |T^{i,j}(a,b)| \leq 8(n-2)!.
\]
\end{claim}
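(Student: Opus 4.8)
The plan is to follow the rotation-map strategy of \citet{YoshidaYI-STOC09} adapted to the pair oracle. Fix $i<j$ and the edge $\{a,b\}\in E$. Given $(\pi,\ell)\in T^{i,j}(a,b)$, let $\{u,v\}=\{\pi_i,\pi_j\}\in X$ and write $\stackpath{\ell}{u,v}=(w_1,w_2,\dots,w_{\ell+2})$, so that $(w_{\ell+1},w_{\ell+2})$ is $(a,b)$ or $(b,a)$. I would define a map $\phi$ sending $(\pi,\ell)$ to a new permutation $\tilde\pi$ by \emph{cyclically rotating the ranks along the path} $\stackpath{\ell}{u,v}$: each vertex $w_t$ inherits the rank that $\pi$ assigned to the next vertex $w_{t+1}$ on the path (and $w_{\ell+2}$ inherits $\pi(w_1)$, say — the exact direction of the shift should be chosen so that the rank-descent property of \Cref{obs:query-path} is what makes the rotation ``undoable''), while every vertex off the path keeps its $\pi$-rank. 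The first step of the proof is to check that $\phi$ is well-defined and that $\tilde\pi$ is genuinely a permutation; this is immediate since we only permute ranks among path vertices.

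The heart of the argument is to bound the size of the fiber $\phi^{-1}(\tilde\pi)$ by a constant. Given $\tilde\pi$, I would show that one can essentially reconstruct $\pi$, the path, and $\ell$. The two endpoints $a,b$ are fixed in advance, so the \emph{last} two vertices of the path are known up to their order; the rank-descent property (\Cref{obs:query-path}) pins down which of $(a,b)$, $(b,a)$ is the tail, given $\tilde\pi$. From the tail one walks the path \emph{backwards}: the rotation was designed so that in $\tilde\pi$ the vertex $w_t$ carries $w_{t+1}$'s original rank, and using \Cref{obs:vertex-query} — which says exactly which vertices a given vertex directly queries, in terms of ranks relative to its pivot — one can, at each step, identify the unique candidate for the preceding vertex $w_{t-1}$ among the neighbors. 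Here I expect to use the two structural features highlighted in \Cref{sec:YYI}: (i) $\{u,v\}\in X$ is an edge whose endpoints share a common pivot in \pivot, which constrains the first two vertices $w_1,w_2$ and their attachment in the \stackpath{}{} construction; and (ii) \stack{\ell}{u,v} is the \emph{first} moment the stack reaches size $\ell$, which rules out ``shortcut'' reconstructions (this is precisely the mechanism used in the proof of \Cref{cl:last-3-BT}). The ambiguity that remains — which direction of the rotation was applied, whether the path was attached as $(u,v,\dots)$ or $(v,u,\dots)$, and the small slack in recovering $\ell$ versus the path length — should contribute only a bounded multiplicative factor, and a careful accounting gives at most $8$ preimages. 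Since $|T^{i,j}(a,b)|\le 8\cdot\#\{\text{permutations}\}/(\text{something})$ — more precisely, the image of $\phi$ lies inside the set of all permutations, and the two vertices $a,b$ are fixed, contributing the $(n-2)!$ — we conclude $|T^{i,j}(a,b)|\le 8(n-2)!$.

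The main obstacle, I expect, is exactly the fiber bound: unlike in \cite{YoshidaYI-STOC09}, a path \stackpath{\ell}{u,v} is \emph{not} a legal run of the vertex oracle (it begins with the two-source pair step), so the clean ``undo one recursive call at a time'' inversion does not apply verbatim near the front of the path. The remedy is to treat the prefix $(w_1,w_2)$ and its attachment using \Cref{lem:X-pair-classification} and \Cref{obs:vertex-query}\ref{item:query-pair-pivot}–\ref{item:query-pair-nonpivot} (which describe exactly what a pair in $X$ directly queries), and to absorb the resulting case split into the constant $8$. A secondary subtlety is that the rotation must not accidentally change the \emph{status} (pivot vs.\ non-pivot) of vertices off the path, nor the relative order of any pivot with respect to non-path vertices — one must verify that the recursive query structure, hence the oracle's answers, is preserved well enough that the reconstructed $\pi$ really does regenerate the same path; this is where the rank-descent monotonicity along the path does the work, just as in the original construction.
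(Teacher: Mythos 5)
Your overall plan—a rank-rotation map along \stackpath{\ell}{u,v} with a constant bound on the fiber size, in the spirit of \citet{YoshidaYI-STOC09}—is the paper's approach, but the specific map you propose breaks the final count. The bound $8(n-2)!$ requires the image of the map to lie in a set of permutations of size $O((n-2)!)$. The paper achieves this by shifting the ranks along the path by \emph{two} positions, so that \emph{both} ranks $i=\min\{\pi(w_1),\pi(w_2)\}$ and $j$ of the mistake pair are deposited on the two fixed vertices $a$ and $b$; the image is then contained in $\{\tilde\pi : \{\tilde\pi_i,\tilde\pi_j\}=\{a,b\}\}$, which has size exactly $2(n-2)!$, and a fiber bound of $4$ gives $8(n-2)!$. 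Your cyclic shift by one position (each $w_t$ takes $\pi(w_{t+1})$, and the last vertex takes $\pi(w_1)$) transfers only one of the two ranks $i,j$ to one of $a,b$, while the other of these ranks lands on $w_1$, a vertex that is not fixed in advance. The image is then only constrained to a set of size $\Theta((n-1)!)$, so even a perfect $O(1)$ fiber bound would give $|T^{i,j}(a,b)|=O((n-1)!)$; summing over the $\binom{n}{2}$ choices of $(i,j)$ and dividing by $n!$ yields $\E_\pi[|R_\pi(a,b)|]=\Theta(n)$, i.e.\ width $\Theta(n/r)$ instead of $\Theta(1/r)$. Which ranks get rotated onto $a$ and $b$ is therefore not bookkeeping to absorb into the constant—it is the crux of the counting, and your map does not provide it.

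A secondary error feeds into this: for $\{a,b\}\in E$, membership in $T^{i,j}(a,b)$ means the last three vertices of \stackpath{\ell}{\pi_i,\pi_j} are $(a,b,c)$ or $(c,a,b)$, not that the last two vertices are $a,b$ in some order. The paper handles this by rotating only along the prefix of the path ending at the \emph{directed} edge $(a,b)$ (dropping the final vertex $c$ when $(a,b)$ is the second-to-last edge), and the two possible positions of $(a,b)$ account for one factor of $2$ in the fiber bound of $4$ (two admissible prefixes, and two values of $\ell$ per prefix). Your backward-reconstruction sketch of the fiber bound is morally aligned with the paper's \Cref{cl:mapinjectiveplus}—exact uniqueness indeed fails, and the paper instead shows that any two preimages have prefixes that are equal or differ by exactly one vertex, using the common pivot of the pair in $X$ and the ``first time the stack reaches size $\ell$'' property—but that analysis is carried out for the shift-by-two map on the $(a,b)$-prefix, and would have to be redone (and, as explained above, cannot rescue the counting) for the map as you defined it.
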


We first use \Cref{cl:boundpairplus} to prove \Cref{lem:querypathplus}:

\begin{proof}[Proof of \Cref{lem:querypathplus} via \Cref{cl:boundpairplus}]
Our main observation is that, as we range through all possibilities for the ranks $i$ and $j$ of the pair that initiates the charge, the sets $T^{i,j}(a,b)$ form a partition of the union of the sets $R_{\pi}(a,b)$ as $\pi$ ranges through all permutations. To make this precise, we define
\[
    R(a,b) = \{(\pi, \{u,v\}, \ell) \mid (\{u,v\}, \ell) \in R_\pi(a,b) \},
\]
\[
    T(a,b) = \{(i,j, \pi, \ell) \mid i < j, (\pi, \ell) \in T^{i,j}(a,b)\}.
\]
Then, it is straightforward to see that the map
\[
    R(a,b) \to T(a,b), \quad (\pi, \{u,v\}, \ell) \mapsto (\min\{\pi(u), \pi(v)\}, \max\{\pi(u), \pi(v)\}, \pi, \ell).
\]
is a bijection. By \Cref{cl:boundpairplus}, for any $r \geq 1$,
\[
    \sum_{\pi} |R_\pi(a,b)|  = |R(a,b)| = |T(a,b)| = \sum_{i<j} |T^{i,j}(a,b)|  \leq {n \choose 2} 8(n-2)! = 4n!,
\]
which implies that
\[
    \E_\pi[|R_\pi(a,b)|] \leq \frac{4n!}{n!} = 4.\qedhere
\]
\end{proof}

\begin{proof}[Proof of \Cref{cl:boundpairplus}]
We define a map
\[
    \phi^{i,j}_{a,b}: T^{i,j}(a,b) \to U^{i,j}_{a,b},
\]
where $U^{i,j}_{a,b}$ is a set of permutations $\tilde{\pi}$ over $V$ defined by
\[
    U^{i,j}_{a,b} = \{ \tilde{\pi} \mid \{\tilde{\pi}_i, \tilde{\pi}_j\} = \{a, b\}\}.
\]
Note that $U^{i,j}_{a,b}$ has size $2(n-2)!$. For $(\pi, \ell) \in T^{i,j}(a,b)$, the permutation $\phi^{i,j}_{a,b}(\pi, \ell)$ is defined by rotating the ranks of the vertices along the prefix $P$ of \stackpath{\ell}{\pi_i, \pi_j} ending at the directed edge $(a,b)$. Formally, we write
\[
    P = (w_1, w_2, \dots, w_{\bar{\ell}-2}, w_{\bar{\ell}-1} = a, w_{\bar{\ell}} = b),
\]
where $\{w_1, w_2\} = \{\pi_i, \pi_j\}$. If $(a,b)$ is the last edge on \stackpath{\ell}{\pi_i, \pi_j}, then $P$ is the same as \stackpath{\ell}{\pi_i, \pi_j} and $\bar{\ell} = \ell+2$. Otherwise, $(a,b)$ is the second-to-last edge on \stackpath{\ell}{\pi_i, \pi_j}, in which case $P$ equals \stackpath{\ell}{\pi_i, \pi_j} with the last edge removed and $\bar{\ell} = \ell+1$. Then, we define $\tilde{\pi} = \phi^{i,j}_{a,b}(\pi, \ell)$ by
\[
    \tilde{\pi}(a) = \pi(w_1), \quad \tilde{\pi}(b) = \pi(w_2), \quad \tilde{\pi}(w_i) = \pi(w_{i+2}) \quad \text{for $i \in [1, \bar{\ell}-2]$},
\]
\[
    \tilde{\pi}(v) = \pi(v) \quad \text{for } v \in V \setminus \{w_1, \dots, w_{\bar{\ell}}\}.
\]
See \Cref{fig:plus-rotation}.

\begin{figure}
    \centering
    \includegraphics[scale=0.9]{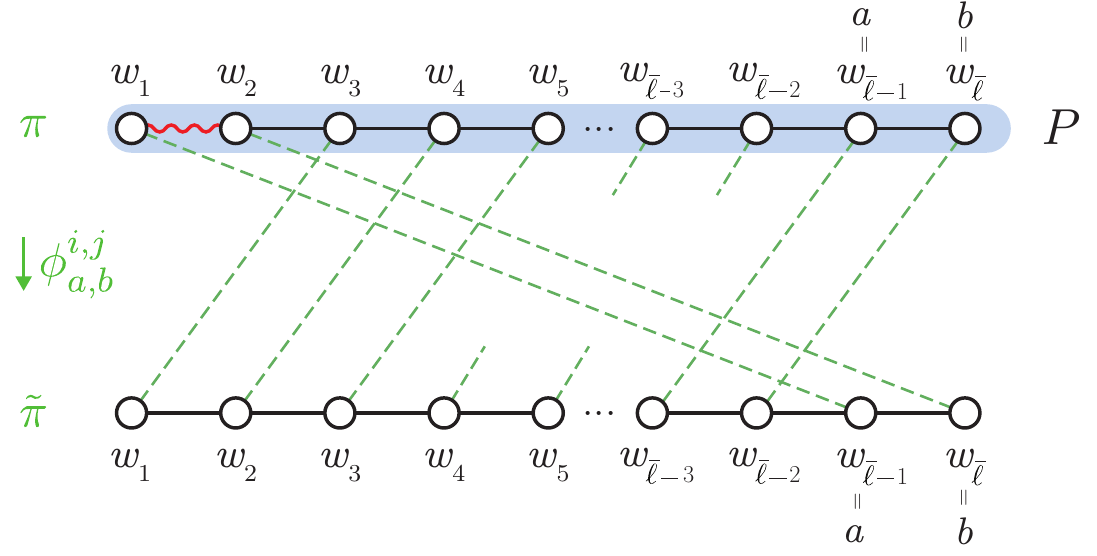}
    \caption{\small The permutation $\tilde{\pi} = \phi^{i,j}_{a,b}(\pi, \ell)$ resulting from rotating the ranks of the vertices on the prefix $P$ of the path \stackpath{\ell}{\pi_i, \pi_j} under $\pi$. The prefix $P$ is highlighted in blue. Dashed green lines connect pairs of vertices with the same rank. The ranks $\{i,j\}$ are rotated from vertices $\{w_1, w_2\}$ under $\pi$ to $\{a,b\}$ under $\tilde{\pi}$. The rank of any vertex not on $P$ is kept unchanged.}
    \label{fig:plus-rotation}
\end{figure}

We make the following claim on the map $\phi^{i,j}_{a,b}$, which will imply that for any $\tilde{\pi} \in U^{i,j}_{a,b}$, the preimage $(\phi^{i,j}_{a,b})^{-1}(\tilde{\pi})$ has size at most a small constant:

\begin{claim}\label{cl:mapinjectiveplus}
Suppose $\phi^{i,j}_{a,b}(\pi, \ell) = \phi^{i,j}_{a,b}(\pi', \ell')$ for two pairs $(\pi, \ell), (\pi', \ell') \in T^{i,j}(a,b)$. Let $P$ (resp. $P'$) be the prefix of the path \stackpath{\ell}{\pi_i, \pi_j} (resp.  \stackpath{\ell'}{\pi_i', \pi_j'}) ending at $(a,b)$. Then one of the following holds:
\begin{itemize}
    \item $P = P'$ and $\pi = \pi'$.
    
    \item One of $P, P'$ contains the other as a subpath with one fewer vertex.
\end{itemize}
\end{claim}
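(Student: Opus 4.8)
### Proof proposal

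The plan is to analyze the map $\phi^{i,j}_{a,b}$ by understanding how one can attempt to invert it. Suppose we are given $\tilde{\pi} = \phi^{i,j}_{a,b}(\pi,\ell) = \phi^{i,j}_{a,b}(\pi',\ell')$. The idea of the \citet{YoshidaYI-STOC09}-style argument is that $\tilde\pi$ together with the knowledge that $\{a,b\}$ received ranks $\{i,j\}$ essentially tells us almost everything about the rotated path $P$ and the original $\pi$: once we fix the final directed edge of the prefix to be $(a,b)$, we can try to recursively reconstruct $P$ backwards. The main point to establish is that $P$ and $P'$ can differ only in the very mild way allowed by the two bullets: either they coincide (and then $\pi=\pi'$ because $\phi$ is injective once $P$ is pinned down), or one is obtained from the other by deleting the starting vertex.

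First I would record the structural facts about $P$ that we are allowed to use. By \Cref{obs:query-path}, along $\stackpath{\ell}{\pi_i,\pi_j} = (w_1,\dots,w_{\ell+2})$ the ranks are strictly decreasing from $w_3$ onward, and each $w_{t-1}$ directly queries $w_t$; moreover $\{w_1,w_2\}=\{\pi_i,\pi_j\}$ is an \emph{edge} of $X$ whose endpoints share a common pivot $p_{\{w_1,w_2\}}$ in \pivot{}, and $w_3$ is directly queried by at least one of $w_1,w_2$. The prefix $P = (w_1,\dots,w_{\bar\ell-1}=a, w_{\bar\ell}=b)$ ends at the directed edge $(a,b)$, which is either the last or second-to-last edge of the full path. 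Crucially, the "direct query" relation $w_{t-1}\to w_t$, by \Cref{obs:vertex-query}, is determined entirely by $\pi$ restricted to the neighborhoods involved, together with which vertices are pivots — and the rotation $\phi$ is designed precisely so that this query structure is "preserved but shifted by two" along $P$. I would make this precise: in $\tilde\pi$, the vertices of $P$ appear with the same set of ranks but cyclically shifted, so that reading off the query relation in $\tilde\pi$ reproduces the tail of the reconstructed path.

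Next I would carry out the backward reconstruction. Given $\tilde\pi$ and the ranks $\{i,j\}$ on $\{a,b\}$: the directed edge $(a,b)$ (with $a$ before $b$, in the orientation forced by which of $a,b$ has which rank — or by the tie-breaking convention when $\{a,b\}$ is the charged edge and $a,b$ are the last two vertices) determines $w_{\bar\ell-1}, w_{\bar\ell}$. Then $w_{\bar\ell-2}$ is the $\tilde\pi$-preimage of $\pi(w_{\bar\ell})$, i.e. it is forced: $\tilde\pi(w_{\bar\ell-2}) = \pi(w_{\bar\ell}) = \tilde\pi(b)$ — wait, more carefully, $\tilde\pi(w_i) = \pi(w_{i+2})$, so the rank that $w_{\bar\ell-2}$ carries in $\tilde\pi$ equals the rank that $w_{\bar\ell}=b$ carries in $\pi$; but $\pi(b) = \tilde\pi(w_{\bar\ell-2})$ is not directly readable. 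The honest statement is: the sequence of ranks $\pi(w_1) > \pi(w_2)$... actually $\pi(w_3)>\pi(w_4)>\cdots>\pi(w_{\bar\ell})$ is a strictly decreasing sequence, and after rotation these become the ranks $\tilde\pi(w_1),\dots$ shifted. So from $\tilde\pi$ one reads the \emph{multiset} of ranks on $P$ and their cyclic order; the strict-decrease property from $w_3$ on, plus the fact that the last moment is the \emph{first} time the stack reaches size $\ell$ (\Cref{cl:last-3-BT}'s argument), pins down how far back the rotation went. I would argue: knowing $\tilde\pi$, the candidate prefixes $P$ are totally determined except for the ambiguity of whether the starting edge $\{w_1,w_2\}$ is included — that is exactly the "one contains the other as a subpath with one fewer vertex" case, arising because $\bar\ell\in\{\ell+1,\ell+2\}$ and because the pair-oracle's first query step can look like an interior vertex-oracle step. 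Ruling out any larger discrepancy is where I expect the real work: I would use (a) the strictly decreasing ranks to show the reconstruction is forced once the endpoint is fixed, (b) the "first time stack reaches size $\ell$" minimality (as in the proof of \Cref{cl:last-3-BT}) to forbid "shortcut" reconstructions, and (c) the fact that $\{w_1,w_2\}\in X\subseteq E$ with a common pivot to exclude the degenerate case where two genuinely different starting pairs both rotate to the same $\tilde\pi$.

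The main obstacle, I expect, is precisely case (b)/(c): showing that the rotated path cannot be "re-parsed" in a different way — i.e., that a given $\tilde\pi$ cannot be the image of two prefixes $P,P'$ that differ by more than the allowed one-vertex truncation. This is the step where \citet{YoshidaYI-STOC09}'s original argument needs to be supplemented, because a path $\stackpath{\ell}{u,v}$ is not itself a valid run of the vertex oracle, so the clean "unique backward reconstruction" of the MIS query process does not apply verbatim. I would handle it by peeling the path from the $(a,b)$ end: each backward step $w_t \leftarrow w_{t+1}$ is forced by "$w_t$ is the unique vertex with a prescribed rank that directly queries $w_{t+1}$," and direct-queries is monotone in the sense of \Cref{obs:vertex-query}, so the only freedom is at the very front, where the pair-oracle semantics (two source vertices $w_1,w_2$ both adjacent to $w_3$, both with ranks exceeding $\pi(w_3)$, forming an edge in $X$) introduce exactly the binary ambiguity described in the claim. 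Once the front is reached, injectivity of $\phi$ on a fixed $P$ is immediate from the explicit formula for $\tilde\pi$, giving $\pi=\pi'$ in the first bullet. I would defer the most delicate edge-case bookkeeping (e.g. when $a$ or $b$ itself is a pivot, or when $a\in\{w_1,w_2\}$) to the detailed proof, flagging it here as the part requiring care.
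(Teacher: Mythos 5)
You have the right frame---compare two preimages $(\pi,\ell),(\pi',\ell')$ with a common image and show that their prefixes $P,P'$ can differ only at the single front vertex---but the step that carries all the weight is asserted rather than proved. Your claim that each backward step $w_t \leftarrow w_{t+1}$ is ``forced because $w_t$ is the unique vertex with a prescribed rank that directly queries $w_{t+1}$'' does not hold up. The direct-query relation is defined with respect to the unknown $\pi$, not the known $\tilde{\pi}$, and the $\pi$-ranks of the path vertices are exactly the data the rotation scrambles: for instance $\pi(w_{\bar{\ell}})=\tilde{\pi}(w_{\bar{\ell}-2})$, so to locate $w_{\bar{\ell}-2}$ from $\tilde{\pi}$ you would already need to know $w_{\bar{\ell}-2}$---the backward reconstruction is circular, and the forward one fares no better since you do not know which vertices carry the ranks $\tilde{\pi}(a),\tilde{\pi}(b)$ under $\pi$. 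Moreover a vertex can be directly queried by several vertices (in the proof of \Cref{cl:last-3-BT}, $w_{\ell+2}$ is directly queried by both $w_{\ell+1}$ and $w_{\ell}$), so there is no uniqueness to appeal to. This is precisely why the paper does not invert $\phi^{i,j}_{a,b}$ by reconstruction at all.

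What the paper actually does at this point is a comparison argument at the first divergence. It takes the maximal common suffix of $P$ and $P'$, lets $y,z$ be its first two vertices and $x$ the vertex of $P$ preceding the suffix, and observes that both rotations leave every rank below $\pi(z)$ untouched, so $\pi$ and $\pi'$ agree on all low-rank vertices and hence on their \pivot{} outcomes (\Cref{obs:small-rank-same}). From this it deduces that $x$ is a pivot under $\pi'$ and in fact $x=p_y'$, with $\pi'(x)=\pi(z)$. If $P$ and $P'$ genuinely diverged (neither a subpath of the other), then $y$ would directly query $z$ under $\pi'$, contradicting $\pi'(p_y')=\pi(z)<\pi'(z)$; so divergence at an interior point is impossible. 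If instead $P'$ is a proper subpath of $P$, then $\{y,z\}$ is a pair of $X$ under $\pi'$ and therefore has a \emph{common} pivot, which must be $x$; this yields the extra edge $\{x,z\}\in E$, forces $z$ to be a pivot under $\pi$, hence $\pi(p_x)\le\pi(z)<\pi(y)$, so $x$ cannot directly query $y$ under $\pi$ and $\{x,y\}$ must be the first edge of $P$, pinning the length difference to exactly one. This pivot bookkeeping---where the low-rank agreement and the common-pivot property of pairs in $X$ are genuinely used (not the stack-minimality you cite, which enters only in the non-edge case)---is absent from your sketch, so the proposal has a real gap at its central step.
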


We defer the proof of \Cref{cl:mapinjectiveplus} to the end of this subsection, and first use it to finish proving \Cref{cl:boundpairplus}. We show that for any $\tilde{\pi} \in U^{i,j}_{a,b}$, the preimage $(\phi^{i,j}_{a,b})^{-1}(\tilde{\pi})$ has size at most $4$. This will imply that
\[
    |T^{i,j}(a,b)| \leq 4|U^{i,j}_{a,b}| = 8(n-2)!.
\]
We write
\[
   (\phi^{i,j}_{a,b})^{-1}(\tilde{\pi}) = \{(\pi^1, \ell_1), \dots, (\pi^m, \ell_m)\}
\]
and show that $m \leq 4$. For each $k = 1, \dots, m$, let $P^k$ denote the prefix of the path \stackpath{\ell_k}{\pi^k_i, \pi^k_j} ending at $(a,b)$. By \Cref{cl:mapinjectiveplus}, there are only two possibilities among the prefixes $P^1, \dots, P^m$. By permuting the indices, we assume that
\[  P^1 = \cdots = P^{m'} \neq P^{m'+1} = \cdots = P^m   \]
for some $0 \leq m' \leq m$. Next, we show that $m' \leq 2$. For the first $m'$ pairs in $(\phi^{i,j}_{a,b})^{-1}(\tilde{\pi})$, we have by \Cref{cl:mapinjectiveplus} that
\[
    \pi^1 = \cdots = \pi^{m'}.
\]
We denote this common permutation by $\pi$. Now by the definition of $T^{i,j}(a,b)$, $(a,b)$ appears as one of the last two edges on each of \stackpath{\ell_1}{\pi_i, \pi_j}, $\dots$, \stackpath{\ell_m}{\pi_i, \pi_j}. This implies that there are only two possibilities among $\ell_1, \dots, \ell_{m
'}$, i.e. $m' \leq 2$. Similarly, we have $m - m' \leq 2$. Thus, $m \leq 4$.
\end{proof}

\newcommand{\pivotspivotprime}[0]{\ensuremath{P_{\text{PIV}}'}}

\begin{proof}[Proof of \Cref{cl:mapinjectiveplus}]
Note that if $P = P'$, then $\phi^{i,j}_{a,b}(\pi, \ell) = \phi^{i,j}_{a,b}(\pi', \ell')$ would imply that $\pi = \pi'$. Thus in what follows, we assume that $P \neq P'$ and show that one must contain the other as a subpath with one fewer vertex.

We start by recalling some notations for $\pi$ and setting up their counterparts for $\pi'$. Let \pivotspivot{} (resp. \pivotspivotprime{}) denote the set of pivots found by \pivot{} under the permutation $\pi$ (resp. $\pi'$) (\Cref{def:pivotset}). For each vertex $v \in V$, let $p_v \in \pivotspivot{}$ (resp. $p_v' \in \pivotspivotprime{}$) denote the pivot of $v$ in \pivot{} under $\pi$ (resp. $\pi'$). (\Cref{def:pivot}).

Now, we write
\[
    P = (w_1, \dots, w_{\bar{\ell}-2}, w_{\bar{\ell}-1} = a, w_{\bar{\ell}} = b), \quad P' = (w_1', \dots, w_{\bar{\ell}'-2}', w_{\bar{\ell}'-1}' = a, w_{\bar{\ell}'}' = b).
\]
Note that $\{\pi(w_1), \pi(w_2)\} = \{\pi'(w_1'), \pi'(w_2')\} = \{i,j\}$. Let $m$ be the largest integer such that
\[
    (w_{\bar{\ell}-m+1}, \dots, w_{\bar{\ell}-1} = a, w_{\bar{\ell}} = b) = (w_{\bar{\ell}'-m+1}', \dots, w_{\bar{\ell}'-1}' = a, w_{\bar{\ell}'}' = b),
\]
which is the maximal common suffix of $P$ and $P'$. Then $m \geq 2$. Observe that $\phi^{i,j}_{a,b}(\pi, \ell) = \phi^{i,j}_{a,b}(\pi', \ell')$ directly implies the following properties:

\begin{observation}\label{obs:pair-paths-properties}
The following properties hold for the permutations $\pi, \pi'$, paths $P,P'$, and integer $m$:
\begin{enumerate}[label=\rm{(\alph*)}]
    \item $\pi(v) = \pi'(v)$ for any vertex $v \in V \setminus (\{w_1, \dots, w_{\bar{\ell}}\} \cup \{w_1', \dots, w_{\bar{\ell}'}'\})$. \label{item:pair-paths-properties-a}
    
    \item $\pi(w_1) = \pi'(w_1')$, $\pi(w_2) = \pi'(w_2')$.  \label{item:pair-paths-properties-b}
    
    \item $\pi(w_{\bar{\ell}-k}) = \pi'(w_{\bar{\ell}'-k})$ for any $0 \leq k \leq m-3$. \label{item:pair-paths-properties-c}
\end{enumerate}
\end{observation}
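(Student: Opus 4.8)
The plan is to simply unfold the definition of the rotation map $\phi^{i,j}_{a,b}$ and read off each identity from the fact that $\tilde{\pi} := \phi^{i,j}_{a,b}(\pi,\ell)$ and $\phi^{i,j}_{a,b}(\pi',\ell')$ are literally the \emph{same} permutation. Recall that this map sends the rank of $w_1$ onto $a$ and the rank of $w_2$ onto $b$, shifts the rank of $w_{t+2}$ onto $w_t$ for $t\in[1,\bar{\ell}-2]$, and fixes the rank of every vertex lying off $P$; the analogous description holds for $\phi^{i,j}_{a,b}(\pi',\ell')$ with $\pi'$, $P'$, $\bar{\ell}'$ in place of $\pi$, $P$, $\bar{\ell}$. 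So for each of the three identities I would pick a vertex $v$ for which \emph{both} descriptions of $\tilde\pi(v)$ are easy to evaluate and then equate the two expressions.

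For part (a) I take $v$ to lie on neither $P$ nor $P'$; then the two descriptions give $\tilde\pi(v)=\pi(v)$ and $\tilde\pi(v)=\pi'(v)$, hence $\pi(v)=\pi'(v)$. For part (b) I evaluate at $v=a$ (getting $\pi(w_1)=\tilde\pi(a)=\pi'(w_1')$) and at $v=b$ (getting $\pi(w_2)=\tilde\pi(b)=\pi'(w_2')$).

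Part (c) is the only one needing a short index computation, because it involves the two-step shift. Using that $(w_{\bar{\ell}-m+1},\dots,w_{\bar{\ell}})=(w'_{\bar{\ell}'-m+1},\dots,w'_{\bar{\ell}'})$ is a common suffix of length $m$, for every $k$ with $0\le k\le m-3$ I would apply the identities at the single vertex $z:=w_{\bar{\ell}-k-2}=w'_{\bar{\ell}'-k-2}$ (these agree since the position $k+2\le m-1$ lies inside the common suffix). Since a common suffix cannot be longer than either path, $m\le\min(\bar{\ell},\bar{\ell}')$, so $\bar{\ell}-k-2\in[1,\bar{\ell}-2]$ and $\bar{\ell}'-k-2\in[1,\bar{\ell}'-2]$, putting $z$ in the "shift" regime of both descriptions: $\tilde\pi(z)=\pi(w_{\bar{\ell}-k})$ and $\tilde\pi(z)=\pi'(w'_{\bar{\ell}'-k})$. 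Equating these yields $\pi(w_{\bar{\ell}-k})=\pi'(w'_{\bar{\ell}'-k})$, which is exactly part (c); the asserted range is vacuous when $m=2$.

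The statement says this all follows "directly," and indeed there is no real obstacle; the one thing to be careful about, in part (c), is checking that every index fed into the shift formula stays in the legal window $[1,\bar{\ell}-2]$ (resp. $[1,\bar{\ell}'-2]$), which is precisely what the trivial bound $m\le\min(\bar{\ell},\bar{\ell}')$ on the common-suffix length guarantees.
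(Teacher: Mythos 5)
Your proposal is correct and matches the paper's intent: the paper asserts the observation follows directly from $\phi^{i,j}_{a,b}(\pi,\ell)=\phi^{i,j}_{a,b}(\pi',\ell')$, and your argument is exactly that unfolding, evaluating the common image permutation at vertices off both paths, at $a$ and $b$, and at common-suffix vertices, with the index check $k\le m-3\le\min(\bar{\ell},\bar{\ell}')-3$ keeping everything in the shift regime. No gaps.
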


Now we show that one of $P, P'$ must contain the other as a subpath with one fewer vertex. We denote
\[
    y = w_{\bar{\ell}-m+1} = w_{\bar{\ell}'-m+1}', \quad z = w_{\bar{\ell}-m+2} = w_{\bar{\ell}'-m+2}'.
\]
That is, $y$ is the starting vertex of the maximal common suffix of $P$ and $P'$, and $z$ is the next vertex. Then there are only two possibilities for the relation between $P$ and $P'$:
\begin{enumerate}[label=\rm{(\Alph*)}]
    \item Neither of $P, P'$ contains the other as a subpath. We assume that $\pi(z)< \pi'(z)$. \label{item:case-plus-A}
    
    \item One of $P, P'$ contains the other as a subpath. We assume that $P$ contains $P'$. Then $m = \bar{\ell}' < \bar{\ell}$.  \label{item:case-plus-B}
\end{enumerate}
See \Cref{fig:plus-cases}. We show that Case \ref{item:case-plus-A} is impossible, and that in Case \ref{item:case-plus-B}, we must have $\bar{\ell} - \bar{\ell}' = 1$.

\begin{figure}[h]
    \centering
    \includegraphics[scale=0.85]{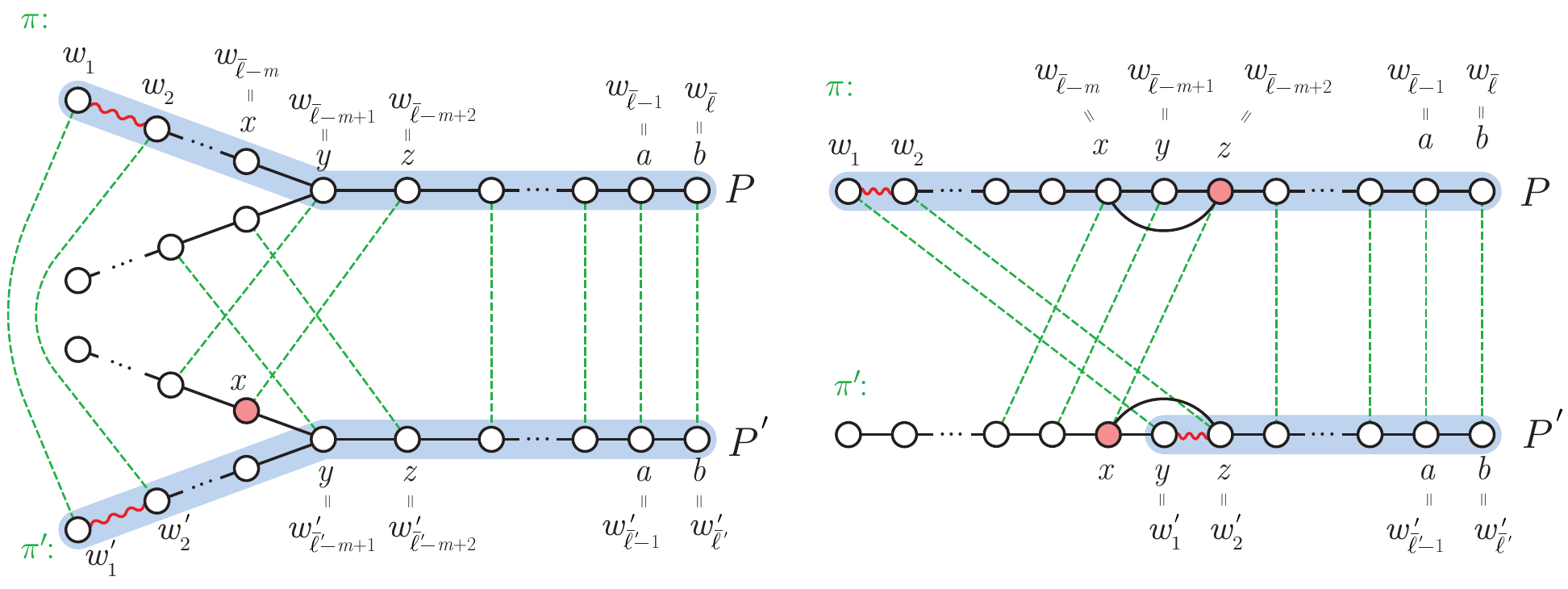}
    \caption{\small Cases \ref{item:case-plus-A} and \ref{item:case-plus-B}. In each case, the paths $P$ and $P'$ under permutations $\pi$ and $\pi'$ respectively are highlighted in blue. Dashed green lines connect pairs of vertices with the same rank. In both cases, the vertex $x = w_{\bar{\ell}-m}$ highlighted in red is the pivot of $y = w_{\bar{\ell}’-m+1}'$ in \pivot{} under $\pi'$. In Case \ref{item:case-plus-B}, where $m = \bar{\ell}'$, $x$ is the pivot of both $y = w_{1}'$ and $z = w_{2}'$ in \pivot{} under $\pi'$, and $z$ is a pivot in \pivot{} under $\pi$, also highlighted in red.}
    \label{fig:plus-cases}
\end{figure}

We note that $\pi(z) < \pi'(z)$ in Case \ref{item:case-plus-B} as well, since $\{y, z\} = \{\pi_i', \pi_j'\}$ in this case, which implies that $\pi'(z) \geq i > \pi(z)$ (recall $i<j$). As a consequence of \Cref{obs:pair-paths-properties} \ref{item:pair-paths-properties-a} \ref{item:pair-paths-properties-c}, we have 
\begin{equation}\label{eq:small-rank-same}
    \pi_k = \pi_k', \quad \text{for all $k < \pi(z)$.}
\end{equation}
This implies that, if $v \in V$ is a vertex such that $\pi(v) < \pi(z)$ or $\pi'(v) < \pi(z)$, then $\pi(v) = \pi'(v)$. Moreover, \FnVx{v} makes the same \emph{recursive} calls to \FnVx under $\pi$ and $\pi'$. In particular, $v$ is a pivot in \pivot{} under $\pi$ (i.e. $v \in \pivotspivot{}$) if and only if $v$ is a pivot under $\pi'$ (i.e. $v \in \pivotspivot{}'$). The following observation directly follows: 

\begin{observation}\label{obs:small-rank-same}
Let $v \in V$ be a vertex with pivot $p_v$ in \pivot{} under $\pi$ and $p_v'$ under $\pi'$. If $\pi(p_v) < \pi(z)$ or $\pi'(p_v')< \pi(z)$, then $p_v = p_v'$ and $\pi(p_v) = \pi'(p_v')$.
\end{observation}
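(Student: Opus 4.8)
The observation will follow quickly from the paragraph preceding it, which shows that every vertex of rank below $\pi(z)$ under $\pi$ (equivalently, under $\pi'$) has the same rank and the same pivot status in \pivot{} under both permutations. The plan is in two steps. First I will argue that, whichever of the two disjuncts in the hypothesis holds, one in fact has \emph{both} $\pi(p_v) < \pi(z)$ and $\pi'(p_v') < \pi(z)$. Then I will conclude $p_v = p_v'$ and $\pi(p_v) = \pi'(p_v')$ by a short chain of rank comparisons.

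For the first step, assume $\pi(p_v) < \pi(z)$; the case $\pi'(p_v') < \pi(z)$ is symmetric, using that the threshold $\pi(z)$ in the hypothesis coincides with the one in \eqref{eq:small-rank-same} and that $\pi(z) \le \pi'(z)$, so $\pi(z)$ is precisely the rank below which $\pi$ and $\pi'$ agree. Since $p_v$ is a pivot under $\pi$ of rank below $\pi(z)$, the preceding paragraph gives $p_v \in \pivotspivotprime{}$ and $\pi'(p_v) = \pi(p_v)$. Because $N(v) \cup \{v\}$ is determined by $G$ alone, $p_v \in (N(v) \cup \{v\}) \cap \pivotspivotprime{}$, and as $p_v'$ is the minimum-rank element of this set under $\pi'$, I get $\pi'(p_v') \le \pi'(p_v) = \pi(p_v) < \pi(z)$. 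Now $p_v'$ has $\pi'$-rank below $\pi(z)$, so the preceding paragraph also gives $p_v' \in \pivotspivot{}$ and $\pi(p_v') = \pi'(p_v')$; hence $p_v' \in (N(v) \cup \{v\}) \cap \pivotspivot{}$ and, by minimality of $p_v$, $\pi(p_v) \le \pi(p_v')$.

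Combining these inequalities gives $\pi(p_v) \le \pi(p_v') = \pi'(p_v') \le \pi'(p_v) = \pi(p_v)$, so all four quantities are equal; in particular $\pi(p_v) = \pi'(p_v')$, and from $\pi(p_v) = \pi(p_v')$ together with injectivity of $\pi$ I conclude $p_v = p_v'$. I do not expect a real obstacle here; the one point that needs care is the bootstrapping in the first step --- transporting the rank bound from one of $p_v, p_v'$ to the other via minimality of the pivot together with the permutation-independence of pivot status below rank $\pi(z)$ --- and keeping straight that $\pi(z)$, not $\pi'(z)$, is the correct common threshold.
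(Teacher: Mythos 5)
Your proof is correct and follows the same route the paper intends: the paper derives the observation directly from \cref{eq:small-rank-same} (agreement of $\pi$ and $\pi'$ below rank $\pi(z)$, hence identical ranks and pivot status there), and your two-step bootstrapping via the characterization of $p_v$ as the minimum-rank vertex of $(N(v)\cup\{v\})\cap \pivotspivot{}$ is exactly the natural way to fill in the paper's ``directly follows.''
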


We set $x = w_{\bar{\ell}-m}$ in both cases. Then
\[
    \pi'(x) = \pi(z).
\]
Note by \Cref{obs:query-path} that $y$ directly queries $z$ under $\pi$. Moreover, $x$ directly queries $y$ under $\pi$ unless $\{x,y\} = \{\pi_i, \pi_j\}$, in which case $\{x,y\}$ directly queries $z$.

We first claim that in both cases, $x$ is a pivot in \pivot{} under $\pi'$, i.e. $x \in \pivotspivot{}'$. Otherwise, $x \neq p_x'$ or equivalently $\pi'(p_x') < \pi'(x) = \pi(z)$. Then \Cref{obs:small-rank-same} implies that $p_x = p_x'$ and $\pi(p_x) = \pi'(p_x')$. Thus, $\pi(p_x) = \pi'(p_x') <\pi(z) < \pi(y)$. From \Cref{obs:vertex-query} \ref{item:query-vertex-nonpivot} \ref{item:query-pair-nonpivot}, this contradicts that $x$ directly queries $y$ under $\pi$ in the case $\{x,y\} \neq \{\pi_i, \pi_j\}$, or $\{x,y\}$ directly queries $z$ in the case $\{x,y\} = \{\pi_i, \pi_j\}$.

Next, we claim that in both cases, $x$ is the pivot of $y$ in \pivot{} under $\pi'$, i.e. $x = p_y'$. Otherwise, we have $\pi'(p_y')< \pi'(x) = \pi(z)$. By \Cref{obs:small-rank-same}, $p_y = p_y'$ and $\pi(p_y) = \pi'(p_y')$. Thus, $\pi(p_y) = \pi'(p_y') < \pi(z)$. From \Cref{obs:vertex-query} \ref{item:query-vertex-nonpivot}, this contradicts that $y$ directly queries $z$ under $\pi$.

At this point, we can already show that Case \ref{item:case-plus-A} is impossible, as follows. In this case, by \Cref{obs:query-path}, $y$ directly queries $z$ under $\pi'$. However, this contradicts that $\pi'(p_y') = \pi'(x) = \pi(z) < \pi'(z)$, again from \Cref{obs:vertex-query} \ref{item:query-vertex-nonpivot}.

It remains to complete the proof in Case \ref{item:case-plus-B}. In this case, since $\{y, z\} = \{\pi_i', \pi_j'\}$ is a pair in $X$ under the permutation $\pi'$ and $x = p_y'$, we also have $x = p_z'$. In particular, $\{x,z\} \in E$.

We now claim that $z$ is a pivot in \pivot{} under $\pi$, i.e. $z \in \pivotspivot{}$. Otherwise, $z \neq p_z$ or equivalently $\pi(p_z) < \pi(z)$. Then \Cref{obs:small-rank-same} implies that $p_z' = p_z$ and $\pi'(p_z') = \pi(p_z)$. But now, $\pi'(p_z') = \pi(p_z) < \pi(z) = \pi'(x)$. This contradicts that $x$ is the pivot in \pivot{} that $z$ joins under $\pi'$.

As a consequence, since $y$ directly queries $z$ under $\pi$, we have by \Cref{obs:vertex-query} \ref{item:query-vertex-nonpivot}
that $z$ must be the pivot of $y$ in \pivot{} under $\pi$, i.e. $z = p_y$. Moreover, since $z$ is a also a neighbor of $x$, we have $\pi(p_x) \leq \pi(z)$. Thus $\pi(y) > \pi(z) \geq \pi(p_x)$. By \Cref{obs:vertex-query} \ref{item:query-vertex-pivot} \ref{item:query-vertex-nonpivot}, $x$ cannot directly query $y$ under $\pi$. Then we must have $\{x,y\} = \{\pi_i, \pi_j\}$, i.e. $\{x, y\}$ is the first edge on $P$ (or \stackpath{\ell}{\pi_i, \pi_j}). This then implies that $\bar{\ell} = \bar{\ell}'+1$, as desired.
\end{proof}

\subsubsection{Proof of \Cref{lem:querypathminus}}\label{sec:proofminus}
Throughout this subsection, we fix distinct vertices $a, b \in V$ such that $\{a,b\} \not \in E$. As in \Cref{sec:proofplus}, given a permutation $\pi$ over $V$ and $i \in [n]$, we denote by $\pi_i \in V$ the vertex with rank $i$ under $\pi$.

Similar to the proof of \Cref{lem:querypathplus}, we break down the charges to bad triangles involving both $a$ and $b$ by specifying the  ranks of the pair of vertices in $X$ that initiates the charge. Given $i,j \in [n]$ with $i<j$, we define $T^{i,j}(a,b)$ to be the set of pairs $(\pi, \ell)$ of a permutation $\pi$ and an integer $\ell \in [2, 2r]$ such that under $\pi$, $\{\pi_i, \pi_j\} \in X$ and the last three vertices in the ordered list \stackpath{\ell}{\pi_i, \pi_j} are $(a,c,b)$ for some $c \in V$. We will prove the following bound on the size of $T^{i,j}(a,b)$, which is analogous to \Cref{cl:boundpairplus}:

\begin{claim}\label{cl:boundpairminus}
For any $i,j \in [n]$ with $i<j$ and any $r \geq 1$, we have
\[
    |T^{i,j}(a,b)| \leq 4(n-2)!.
\]
\end{claim}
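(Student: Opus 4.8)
The plan is to mirror the structure of the proof of \Cref{cl:boundpairplus}, but with a modified rotation map that accounts for $a$ and $b$ being the endpoints of a non-edge rather than an edge. Concretely, I would define a map
\[
    \phi^{i,j}_{a,b}: T^{i,j}(a,b) \to U^{i,j}_{a,b}, \qquad U^{i,j}_{a,b} = \{\tilde{\pi} \mid \{\tilde{\pi}_i, \tilde{\pi}_j\} = \{a,b\}\},
\]
again with $|U^{i,j}_{a,b}| = 2(n-2)!$, and then argue that each preimage $(\phi^{i,j}_{a,b})^{-1}(\tilde{\pi})$ has size at most $2$, which gives $|T^{i,j}(a,b)| \le 2 \cdot 2(n-2)! = 4(n-2)!$ and hence, by the same bijection/counting argument as in the proof of \Cref{lem:querypathplus}, $\E_\pi[|R_\pi(a,b)|] \le 2$.

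For the rotation map itself: given $(\pi, \ell) \in T^{i,j}(a,b)$, the last three vertices of $\stackpath{\ell}{\pi_i,\pi_j}$ are $(a, c, b)$ for some $c$, with $\{a,c\}, \{c,b\} \in E$ and $\{a,b\} \notin E$. Since $(a,b) \notin E$, the directed edge $(a,b)$ does not appear on the path; instead I would rotate ranks along the full path $\stackpath{\ell}{\pi_i,\pi_j} = (w_1, \dots, w_{\ell+2} = b)$ ending at $b$ (equivalently, along the prefix ending at the last edge $(c,b)$), sending the ranks $\{i,j\}$ of $\{w_1, w_2\}$ down to $\{a, b\}$ while keeping $c = w_{\ell+1}$'s rank essentially where it was — the point being that $a$ and $b$ must end up with ranks $i$ and $j$ but $c$ sits between them on the path. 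The construction should be set up so that $\tilde\pi(b) = \pi(w_1)$ or $\pi(w_2)$, $\tilde\pi(a)$ gets the other of $\{i,j\}$, and the remaining path vertices $w_3,\dots,w_{\ell+1}$ (including $c$) are shifted by two positions along the path, with all off-path vertices fixed; this is the analogue of \Cref{fig:minus-rotation} referenced in \Cref{sec:YYI}. I would then state and prove the analogue of \Cref{cl:mapinjectiveplus}: if $\phi^{i,j}_{a,b}(\pi,\ell) = \phi^{i,j}_{a,b}(\pi',\ell')$ then the two paths agree or one is a sub-path of the other with one fewer vertex, and in the latter case the surplus is exactly one vertex, so the preimage has size at most $2$.

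The main obstacle — and the part the excerpt itself flags as the new difficulty relative to \cite{YoshidaYI-STOC09} — is proving injectivity-up-to-small-constant for the non-edge rotation map. The argument for Case~\ref{item:case-plus-A} / \ref{item:case-plus-B} in the proof of \Cref{cl:mapinjectiveplus} leaned heavily on the fact that $a$ directly queries $b$ (so one can read off pivot relationships from $\{a,b\}$ being a queried edge), and that doesn't apply here since $(a,b) \notin E$. I expect the fix to use the intermediate vertex $c$: $a$ directly queries $c$ and $c$ directly queries $b$ along the path, and moreover $\{a,c,b\}$ being a bad triangle means neither $\{a,b\}$ edge exists, so the query structure forces $c = p_a$ or a pivot relationship between $c$, $a$, $b$ analogous to $z, x, y$ in Case~\ref{item:case-plus-B}. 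One would set $x = w_{\bar\ell - m}$ as before (the vertex whose rotated rank equals $\pi(z)$ where $z$ is the start-plus-one of the maximal common suffix), show $x \in \pivotspivot{}'$ and $x = p_y'$ exactly as in the edge case using \Cref{obs:small-rank-same} and \Cref{obs:vertex-query}, rule out the "neither contains the other" case the same way, and in the "one contains the other" case again use that $\{y,z\}$ is the pair in $X$ under $\pi'$ with common pivot $x$ to force $z$ to be a pivot under $\pi$ and $z = p_y$, which in turn forces $\{x,y\}$ to be the first edge of the path and hence $\bar\ell - \bar\ell' = 1$. The genuinely new edge cases to chase down carefully are those where $c$ (the third-from-last, middle vertex) coincides with or interacts with $y$, $z$, or $x$ near the common-suffix boundary — for instance when the maximal common suffix is short ($m = 2$ or $m = 3$) so that $c$ itself is near the rotation seam — and these need the bad-triangle property $\{a,b\}\notin E$ together with the "first moment the stack reaches size $\ell$" minimality (\Cref{cl:last-3-BT}'s proof technique) to exclude. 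Once \Cref{cl:boundpairminus} is in hand, deriving \Cref{lem:querypathminus} is the same routine partition-and-count as in the plus case.
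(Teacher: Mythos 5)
Your proposal matches the paper's proof of this claim: the same rotation map that moves the ranks $i,j$ onto $a,b$, keeps the rank of the middle vertex $c=w_{\ell+1}$ unchanged (it is skipped by the shift rather than shifted with the others---your parenthetical ``including $c$'' is the only slip), and fixes all off-path ranks, mapping into the same set $U^{i,j}_{a,b}$ of size $2(n-2)!$, together with the same containment-with-one-fewer-vertex statement (\Cref{cl:mapinjectiveminus}) giving preimages of size at most $2$ and hence the bound $4(n-2)!$. The configurations you defer---where the two colliding paths do not end in the same three vertices $(a,c,b)$, or agree only on that triple---are precisely the cases the paper's proof of \Cref{cl:mapinjectiveminus} rules out as impossible, using exactly the tools you name (\Cref{obs:vertex-query}, the descending ranks along query paths, and the first-time minimality of the stack), so your outline is sound, though carrying out that exclusion is where the genuinely new work beyond \Cref{cl:mapinjectiveplus} lies.
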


\Cref{cl:boundpairminus} implies \Cref{lem:querypathminus} in the same way as how \Cref{cl:boundpairplus} implies \Cref{lem:querypathplus} (see \Cref{sec:proofplus}), and we omit the proof here.

\begin{proof}[Proof of \Cref{cl:boundpairminus}]
We define a map
\[
    \phi^{i,j}_{a,b}: T^{i,j}(a,b) \to U^{i,j}_{a,b},
\]
where
\[
    U^{i,j}_{a,b} = \{ \tilde{\pi} \mid \{\tilde{\pi}_i, \tilde{\pi}_j\} = \{a, b\}\}
\]
as in the proof of \Cref{cl:boundpairplus}. For $(\pi, \ell) \in T^{i,j}(a,b)$, the permutation $\phi^{i,j}_{a,b}(\pi, \ell)$ is defined by rotating the ranks of the vertices along the path $\stackpath{\ell}{\pi_i, \pi_j}$ while skipping the vertex between $a$ and $b$. Formally, we write
\[
    \stackpath{\ell}{\pi_i, \pi_j} = (w_1, w_2, \dots, w_{\ell} = a, w_{\ell+1}, w_{\ell+2} = b),
\]
where $\{w_1, w_2\} = \{\pi_i, \pi_j\}$. Then, we define $\tilde{\pi} = \phi^{i,j}_{a,b}(\pi, \ell)$ by
\[
    \tilde{\pi}(a) = \pi(w_1), \quad \tilde{\pi}(b) = \pi(w_2), \quad \tilde{\pi}(w_{\ell+1}) = \pi(w_{\ell+1}),
\]
\[
    \quad \tilde{\pi}(w_i) = \pi(w_{i+2}) \quad \text{for $i \in [1, \ell-2 ]$}, \quad \tilde{\pi}(w_{\ell-1}) = \pi(w_{\ell+2}), 
\]
\[
    \tilde{\pi}(v) = \pi(v) \quad \text{for } v \in V \setminus \{w_1, \dots, w_{\ell+2}\}.
\]
See \Cref{fig:minus-rotation}.

\begin{figure}[h]
    \includegraphics[scale=0.85]{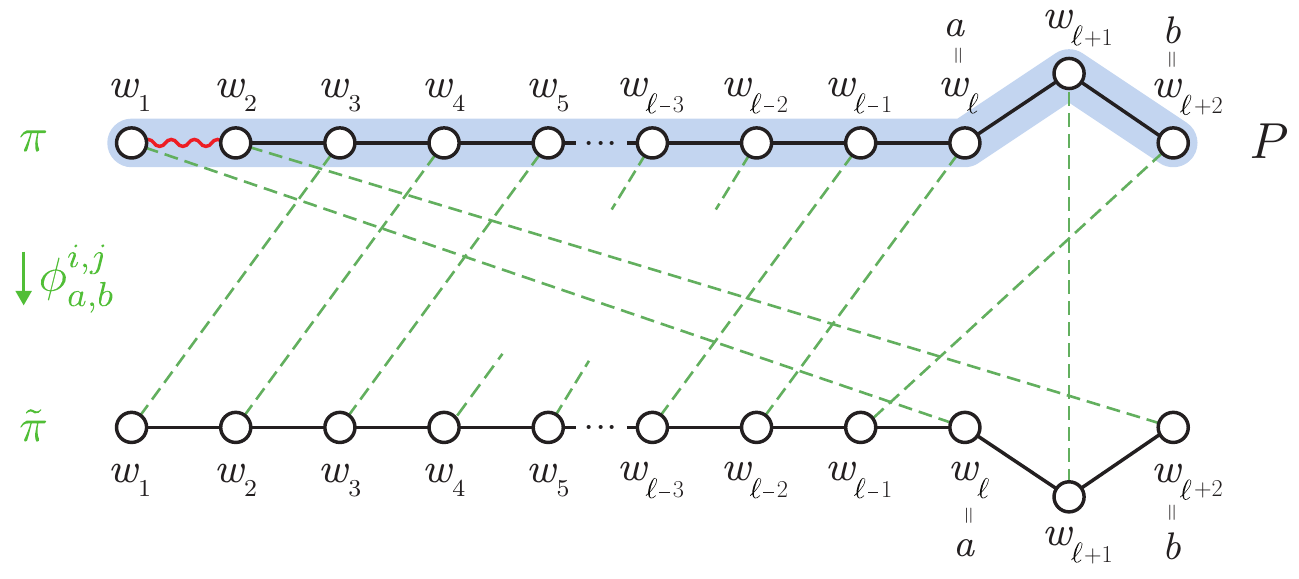}
    \centering
    \caption{\small The permutation $\tilde{\pi} = \phi^{i,j}_{a,b}(\pi, \ell)$ resulting from rotating the ranks of the vertices on the path \stackpath{\ell}{\pi_i, \pi_j} under $\pi$. The path \stackpath{\ell}{\pi_i, \pi_j} is highlighted in blue. Dashed green lines connect pairs of vertices with the same rank. The ranks $\{i,j\}$ are rotated from vertices $\{w_1, w_2\}$ under $\pi$ to $\{a,b\}$ under $\tilde{\pi}$. The rotation skips the vertex $w_{\ell+1}$ between $a$ and $b$ and leaves its rank unchanged. The rank of any vertex not on \stackpath{\ell}{\pi_i, \pi_j} is kept unchanged as well.}
    \label{fig:minus-rotation}
\end{figure}

We make the following claim on the map $\phi^{i,j}_{a,b}$, which directly implies that for any $\tilde{\pi} \in U^{i,j}_{a,b}$, the preimage $(\phi^{i,j}_{a,b})^{-1}(\tilde{\pi})$ has size at most 2:

\begin{claim}\label{cl:mapinjectiveminus}
Suppose $\phi^{i,j}_{a,b}(\pi, \ell) = \phi^{i,j}_{a,b}(\pi', \ell')$ for two distinct pairs $(\pi, \ell), (\pi', \ell') \in T^{i,j}(a,b)$. Then one of \stackpath{\ell}{\pi_i, \pi_j}, \stackpath{\ell'}{\pi_i', \pi_j'} contains the other as a subpath with one fewer vertex.
\end{claim}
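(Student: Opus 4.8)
The statement to prove is Claim~\ref{cl:mapinjectiveminus}, which is the non-edge analog of Claim~\ref{cl:mapinjectiveplus}. I would model the proof closely on the proof of Claim~\ref{cl:mapinjectiveplus}, adapting each step to account for the modified rotation map $\phi^{i,j}_{a,b}$ (the one that skips the middle vertex $w_{\ell+1}$ between $a$ and $b$). First I would set up notation exactly as in the plus case: write $Q = \stackpath{\ell}{\pi_i,\pi_j}$ and $Q' = \stackpath{\ell'}{\pi'_i,\pi'_j}$, introduce the pivot sets $\pivotspivot$, $\pivotspivot'$ and the pivot maps $p_v$, $p_v'$ under $\pi$ and $\pi'$, and note that $\phi^{i,j}_{a,b}(\pi,\ell) = \phi^{i,j}_{a,b}(\pi',\ell')$ forces the ranks of all vertices outside $Q \cup Q'$ to agree, forces $\pi(w_1) = \pi'(w_1')$, $\pi(w_2) = \pi'(w_2')$, and forces the ranks along the two paths to agree down from $b$ along the maximal common suffix. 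Since the rotation fixes the rank of the skipped middle vertex in both cases, the suffix-matching bookkeeping is essentially the same as in Observation~\ref{obs:pair-paths-properties}, just shifted by the skipped index; I would record the analog of that observation here.

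**Core argument.** Then, as in the plus proof, let the maximal common suffix of $Q$ and $Q'$ start at a vertex $y$ with successor $z$, and let $x$ be the vertex preceding $y$ on $Q$ (when one path is not a subpath of the other; if one contains the other we are done). The key claims to re-establish, in order, are: (1) $\pi_k = \pi'_k$ for all small ranks $k < \pi(z)$, hence \FnVx behaves identically under $\pi$ and $\pi'$ below rank $\pi(z)$, giving the analog of Observation~\ref{obs:small-rank-same}; (2) $x$ is a pivot under $\pi'$; (3) $x = p_y'$, the pivot of $y$ under $\pi'$. These go through by exactly the same contradiction arguments via Observation~\ref{obs:vertex-query}. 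The main structural difference is that here $x$ need not be adjacent to $z$ — it may be that $x$ plays the role of $w_{\ell-1}$ and $z = w_{\ell+1}$ is the \emph{skipped} vertex, with the actual edge being $\{x, a\}$ or similar. So I would need a careful case split on \emph{where} $y$ sits relative to the skipped position $w_{\ell+1}$: either the common suffix begins strictly to the right of $w_{\ell+1}$ (in which case the plus-case argument applies verbatim since the rotation acts as a simple shift there), or $y$ coincides with $w_{\ell+1}$, $w_{\ell}=a$, or $w_{\ell-1}$, where the skip changes the rank bookkeeping. In each sub-case I would derive either a contradiction (ruling out that neither path contains the other) or the conclusion $\bar\ell - \bar\ell' = 1$.

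**Expected main obstacle.** The hard part will be the boundary sub-cases near the skipped vertex $w_{\ell+1}$. In the plus case, the rotation is a clean cyclic shift along a path, so the rank comparisons propagate uniformly; here, the rank of $w_{\ell+1}$ is held fixed while its neighbors' ranks move, which breaks the monotonicity one would like along the path and creates several new edge cases — for instance, one path's $\ell$ could be such that its skipped vertex aligns with a non-skipped vertex of the other path. Resolving these requires leaning on the two extra structural facts emphasized in Section~\ref{sec:YYI}: that $\{\pi_i,\pi_j\} \in X$ is an \emph{edge} whose endpoints share a \emph{common pivot} in \pivot{} (so $x = p_y' = p_z'$ when $\{y,z\}$ is the initiating pair), and that $Q$ records the \emph{first} time the recursion stack reaches size $\ell$, which forbids certain alternative stack configurations. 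I would use the common-pivot fact to pin down the identity of $x$ and the first-moment fact to rule out a premature stack of the same size, mirroring the final paragraph of the proof of Claim~\ref{cl:mapinjectiveplus} but with an extra layer of care for the skipped vertex. Once all sub-cases collapse to ``$Q$ contains $Q'$ with exactly one fewer vertex,'' the claim follows.
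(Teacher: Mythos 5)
Your plan follows the same route as the paper: keep the rotation/maximal-common-suffix argument of \Cref{cl:mapinjectiveplus}, check that the analog of \Cref{obs:pair-paths-properties} survives the modified rotation, and then treat the positions near the skipped vertex separately, leaning on the facts that the initiating pair in $X$ is an edge with a common \pivot{}-pivot and that $\stack{\ell}{u,v}$ records the \emph{first} time the stack reaches size $\ell$. The gap is in how you sort the cases and in the fact that the genuinely new cases are never actually resolved. Since in the minus case both paths end in $(a,\cdot,b)$, the relevant trichotomy is: the two middle vertices coincide and so does the vertex before $a$ (common suffix of length at least $4$); the middle vertices coincide but the suffix is exactly $(a,c,b)$; or the middle vertices differ, so the common suffix is just $(b)$. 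Only the first of these is handled by the plus-case argument essentially verbatim. Your proposed split assigns the case where the suffix avoids the skipped position to the ``verbatim'' branch, but that is exactly the hardest new case ($c\neq c'$, the paper's Case~III). And the suffix-$(a,c,b)$ case also cannot be run through the plus argument unchanged: there the natural cutoff rank $\pi(z)$ of \Cref{obs:small-rank-same} would be $\pi(c)$, whose rank is \emph{not} moved by the rotation, so the rank-agreement-below-cutoff step breaks; the paper instead works below $\pi(b)$ (\Cref{obs:small-rank-same-minus}).

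More importantly, the resolution of these two boundary cases is the entire new content of the claim, and your proposal only names the tools without giving the arguments. What is actually needed is to show both cases are \emph{impossible} (rather than yielding the subpath conclusion): in the suffix-$(a,c,b)$ case one shows, using rank agreement below $\pi(b)$, that $x=w_{\ell-1}$ is the pivot of $a$ under $\pi'$ with $\pi'(x)=\pi(b)<\pi'(c)$, contradicting that $a$ directly queries $c$ under $\pi'$ (\Cref{obs:query-path} with \Cref{obs:vertex-query}); in the case $c\neq c'$ one splits on whether $c$ lies on $P'$: if it does, the rotation's rank bookkeeping forces $c=w'_{\ell'-1}$ (you correctly anticipated this alignment subcase) and the previous contradiction repeats, while if it does not, then $a$ directly queries $c$ under $\pi'$, and either $c$ directly queries some vertex $d$, producing a stack of $\ell'$ elements strictly before $\stack{\ell'}{w_1',w_2'}$ and contradicting the first-moment property, or $c$ is a pivot under $\pi'$, contradicting that $a$ directly queries the larger-rank vertex $c'$. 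Without these impossibility arguments (and with the easy/hard cases mislabeled), the proposal does not yet establish the claim, even though the overall approach and the tools you point to are the right ones.
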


    

It follows from \Cref{cl:mapinjectiveminus} that
\[
    |T^{i,j}(a,b)| \leq 2|U^{i,j}_{a,b}| = 4(n-2)!.
\]
We prove \Cref{cl:mapinjectiveminus} below. 
\end{proof}

\begin{proof}[Proof of \Cref{cl:mapinjectiveminus}]
Note that $\stackpath{\ell}{\pi_i, \pi_j} = \stackpath{\ell'}{\pi_i', \pi_j'}$ would imply that $\ell = \ell'$, and additionally that $\pi = \pi'$ since $\phi^{i,j}_{a,b}(\pi, \ell) = \phi^{i,j}_{a,b}(\pi', \ell')$. Thus we must have $\stackpath{\ell}{\pi_i, \pi_j} \neq \stackpath{\ell'}{\pi_i', \pi_j'}$.

We use the same notations for $\pi$ and $\pi'$ as in the proof of \Cref{cl:mapinjectiveplus}. We write
\[
    P = \stackpath{\ell}{\pi_i, \pi_j} = (w_1, \dots, w_{\ell} =a, w_{\ell+1} = c, w_{\ell+2} = b),
\]
\[
    P' = \stackpath{\ell'}{\pi_i', \pi_j'} =  (w_1', \dots, w_{\ell'}' =a, w_{\ell'+1}' = c', w_{\ell'+2}' = b).
\]
Note that $\{\pi(w_1), \pi(w_2)\} = \{\pi'(w_1'), \pi'(w_2')\} = \{i,j\}$. We consider three cases separately:
\begin{enumerate}[label=(\rm{\Roman*})]

    \item $c = c'$ and $w_{\ell-1} = w_{\ell'-1}'$. \label{item:proof-minus-case-i}
    
    \item $c = c'$ and $w_{\ell-1} \neq w_{\ell'-1}'$. \label{item:proof-minus-case-ii}
    
    \item $c \neq c'$.
    \label{item:proof-minus-case-iii}   
\end{enumerate}

In Cases \ref{item:proof-minus-case-i} or \ref{item:proof-minus-case-ii}, the two paths $P$ and $P'$ end at the same three vertices $(a,c,b)$. Let $m$ be the largest integer such that
\[
    (w_{\ell-m+3}, \dots, w_{\ell} = a, w_{\ell+1} = c, w_{\ell+2} = b) = (w_{\ell'-m+3}', \dots, w_{\ell'}' = a, w_{\ell'+1}' = c, w_{\ell'+2}' = b),
\]
which is the maximal common suffix of $P$ and $P'$. Then $m \geq 3$, and equality holds if and only if we are in Case \ref{item:proof-minus-case-ii}.

We first consider Case \ref{item:proof-minus-case-i}, where $m \geq 4$. Then, after setting $\bar{\ell} = \ell+2, \bar{\ell}' = \ell' +2$, we see that $\phi^{i,j}_{a,b}(\pi, \ell) = \phi^{i,j}_{a,b}(\pi', \ell')$ directly implies that the properties in \Cref{obs:pair-paths-properties} hold for the permutations $\pi, \pi'$, paths $P,P'$, and integer $m$. Therefore, we may finish Case \ref{item:proof-minus-case-i} by the same paragraphs after \Cref{obs:pair-paths-properties} up to the end of the proof of \Cref{cl:mapinjectiveplus}.

\begin{figure}[h]
    \centering
    \includegraphics[scale=0.9]{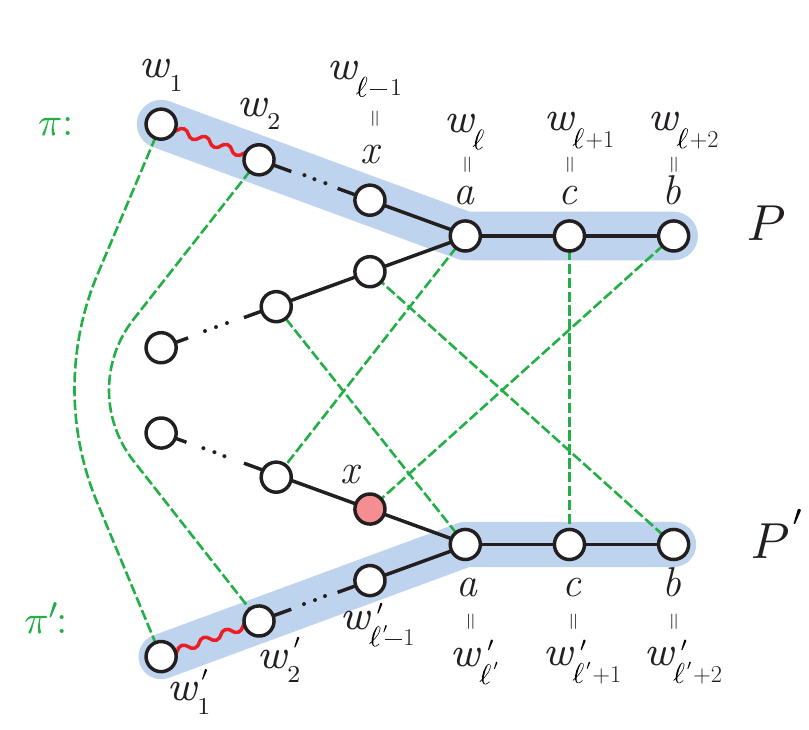}
    \caption{\small Case \ref{item:proof-minus-case-ii} where $m=3$. The paths $P$ and $P'$ under permutations $\pi$ and $\pi'$ respectively are highlighted in blue. Dashed green lines connect pairs of vertices with the same rank. The vertex $x = w_{\ell-1}$ highlighted in red is the pivot of $a$ in \pivot{} under $\pi'$.}
    \label{fig:minus-case-ii}
\end{figure}

Next, we show that Case \ref{item:proof-minus-case-ii}, where $m = 3$, is impossible. See \Cref{fig:minus-case-ii}. In this case, $\pi(c) = \pi'(c)$ but $\pi(b) \neq \pi'(b)$, and we assume without loss of generality that $\pi(b)< \pi'(b)$. From $\phi^{i,j}_{a,b}(\pi, \ell) = \phi^{i,j}_{a,b}(\pi', \ell')$, we directly have
\[
    \pi_k = \pi_k', \quad \text{for all $k < \pi(b)$,}
\]
which is similar to \Cref{eq:small-rank-same}. This implies the following observation, similar to \Cref{obs:small-rank-same}:

\begin{observation}\label{obs:small-rank-same-minus}
Let $v \in V$ be a vertex with pivot $p_v$ in \pivot{} under $\pi$ and $p_v'$ under $\pi'$. If $\pi(p_v) < \pi(b)$ or $\pi'(p_v')< \pi(b)$, then $p_v = p_v'$ and $\pi(p_v) = \pi'(p_v')$.
\end{observation}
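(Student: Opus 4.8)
The plan is to repeat, essentially verbatim, the short argument by which \Cref{obs:small-rank-same} was deduced from \Cref{eq:small-rank-same}, now with the identity $\pi_k=\pi_k'$ for all $k<\pi(b)$ (displayed just above) playing the role of \Cref{eq:small-rank-same}, i.e.\ with the threshold $\pi(z)$ replaced throughout by $\pi(b)$. Observe first that this identity is symmetric in $\pi\leftrightarrow\pi'$ once the number $\pi(b)$ is fixed, so we may freely swap the two permutations in what follows.

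First I would record the elementary consequence: if $\pi(v)<\pi(b)$, then writing $k=\pi(v)$ we get $v=\pi_k=\pi_k'$, hence $\pi'(v)=k=\pi(v)$, and symmetrically $\pi'(v)<\pi(b)$ forces $\pi(v)=\pi'(v)$; in particular the two events ``$\pi(v)<\pi(b)$'' and ``$\pi'(v)<\pi(b)$'' coincide, and the ranks agree on that event. Next, by strong induction on the rank, for every such $v$ the recursive calls made by \FnVx{$v$} are the same under $\pi$ and under $\pi'$: the direct queries of $v$ are its neighbors $w\in N(v)$ with rank below $\pi(v)<\pi(b)$, each such $w$ satisfies $\pi(w)=\pi'(w)<\pi(b)$, so the induction hypothesis applies to the subcall \FnVx{$w$}; since $N(v)$ is fixed, \FnVx{$v$} returns the same value under both. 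Hence $v\in\pivotspivot{}$ if and only if $v\in\pivotspivotprime{}$ whenever $\pi(v)<\pi(b)$, with equal ranks.

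With this preparation I would finish exactly as for \Cref{obs:small-rank-same}. Recall $p_v$ is the minimum-rank vertex of $(N(v)\cup\{v\})\cap\pivotspivot{}$ and $p_v'$ is its analog for $\pi'$. Assume $\pi(p_v)<\pi(b)$ (if instead only $\pi'(p_v')<\pi(b)$, swap $\pi$ and $\pi'$: the argument then shows $\pi(p_v)\le\pi(p_v')=\pi'(p_v')<\pi(b)$, reducing to this case). By the previous paragraph $p_v$ is also a pivot under $\pi'$ with $\pi'(p_v)=\pi(p_v)<\pi(b)$, and $p_v\in N(v)\cup\{v\}$, so minimality of $p_v'$ gives $\pi'(p_v')\le\pi'(p_v)=\pi(p_v)<\pi(b)$; then likewise $p_v'$ is a pivot under $\pi$ inside $N(v)\cup\{v\}$ with $\pi(p_v')=\pi'(p_v')$, so minimality of $p_v$ gives $\pi(p_v)\le\pi(p_v')$. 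Combining, $\pi(p_v)=\pi(p_v')$, hence $p_v=p_v'$ and $\pi(p_v)=\pi'(p_v')$.

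I do not expect a real obstacle here; the work is the same bookkeeping already done for \Cref{obs:small-rank-same}. The one spot that needs a little care is excluding $\pi'(p_v')\ge\pi(b)$ under the hypothesis $\pi(p_v)<\pi(b)$ — handled by the inequality chain above, which uses that $p_v$ itself is a low-rank pivot of $N(v)\cup\{v\}$ under $\pi'$, contradicting the minimality of $p_v'$ otherwise. Everything else is a direct transcription of the text preceding \Cref{obs:small-rank-same}.
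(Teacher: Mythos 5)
Your proposal is correct and matches the paper's treatment: the paper deduces \Cref{obs:small-rank-same-minus} directly from the identity $\pi_k=\pi_k'$ for all $k<\pi(b)$, in exact analogy with the argument preceding \Cref{obs:small-rank-same} (ranks and recursive \FnVx{} calls, hence pivot status, agree below threshold $\pi(b)$, and minimality of $p_v,p_v'$ then forces $p_v=p_v'$ with equal ranks). You have simply written out the induction and the minimality bookkeeping that the paper leaves implicit.
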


We set $x = w_{\ell-1}$. Then
\[
    \pi'(x) = \pi(b).
\]
Based on \Cref{obs:small-rank-same-minus}, we can use the same argument as that after \Cref{obs:small-rank-same} in the proof of \Cref{cl:mapinjectiveplus} to deduce that under $\pi'$, $x$ is the pivot of $a$ in \pivot, i.e. $x = p_a' \in \pivotspivotprime$.  However, since $\pi'(c) = \pi(c) > \pi(b) =  \pi'(x)$, $a$ cannot directly query $c'$ under $\pi'$ by \Cref{obs:vertex-query} \ref{item:query-vertex-nonpivot}. This is a contradiction to \Cref{obs:query-path}. 

Finally, we show that Case \ref{item:proof-minus-case-iii} is impossible as well. In this case, $\pi(c) \neq \pi'(c')$, and we assume without loss of generality that $\pi(c) < \pi'(c')$. We will consider two subcases depending on whether $c$ is on the path $P'$ separately. See \Cref{fig:minus-case-iii}.

\begin{figure}[h]
    \centering
    \includegraphics[scale=0.9]{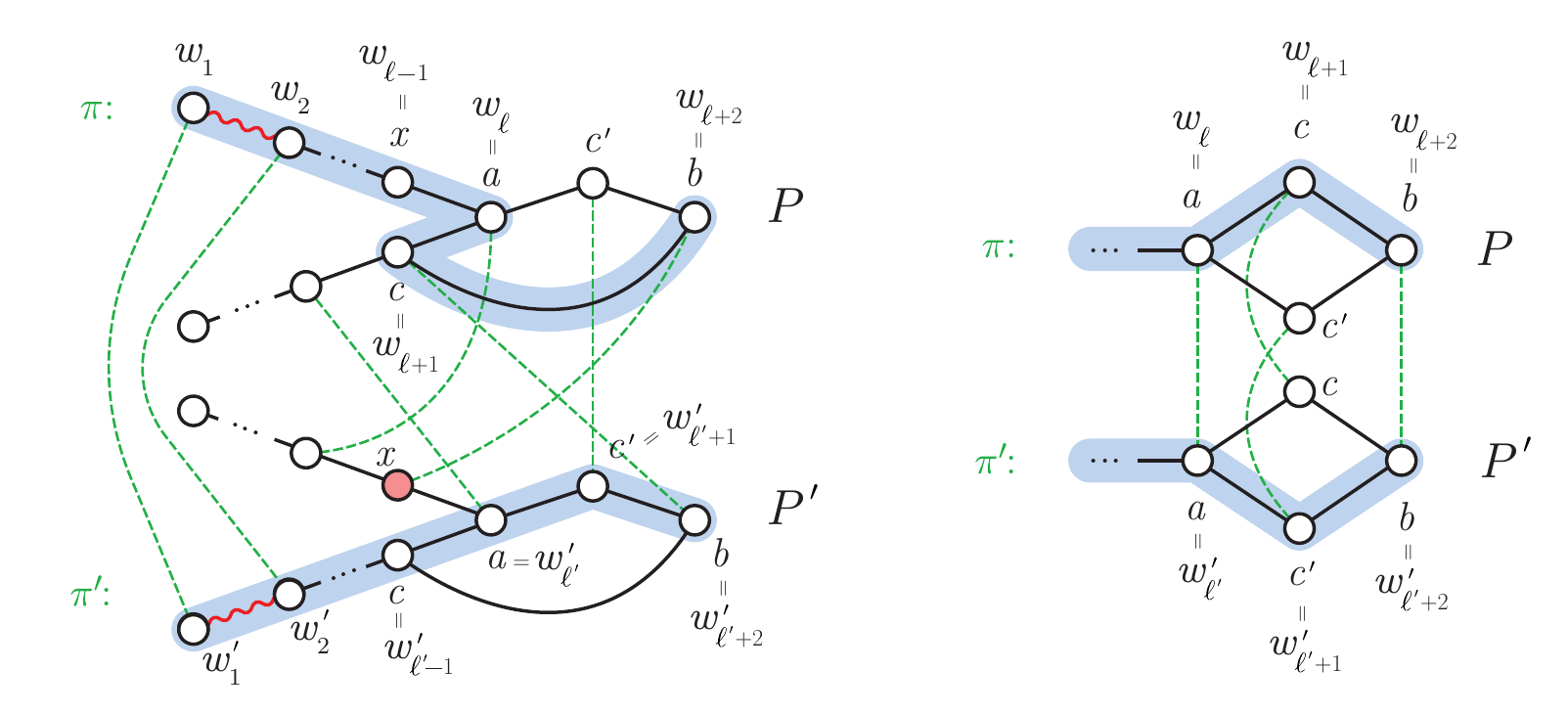}
    \caption{\small The two subcases of Case \ref{item:proof-minus-case-iii} where $c \neq c'$. In each subcase, the paths $P$ and $P'$ under permutations $\pi$ and $\pi'$ respectively are highlighted in blue. Dashed green lines connect pairs of vertices with the same rank. In the subcase where $c$ is on $P'$, the vertex $x = w_{\ell-1}$ highlighted in red is the pivot of $a$ in \pivot{} under $\pi'$.}
    \label{fig:minus-case-iii}
\end{figure}

First suppose that $c$ is on $P'$. By the definition of $\phi^{i,j}_{a,b}$, $\tilde{\pi}(c) = \pi'(v)$ for some vertex $v$ on $P'$. Now recall that $\tilde{\pi}(c) = \pi(c) < \pi'(c')$, which implies that $\pi'(v) < \pi'(c')$. However, the only vertex on $P'$ with rank less than $\pi'(c')$ under $\pi'$ is $b$, which means that $v = b$. Since $\pi'(b) = \tilde{\pi}(w_{\ell'-1}')$, we have $c = w_{\ell'-1}'$. This implies that
\[
    \pi(b) < \pi(c) = \tilde{\pi}(c) = \pi'(b) < \pi'(c').
\]
The proof from this point on is similar to Case \ref{item:proof-minus-case-ii} above. Here, $\phi^{i,j}_{a,b}(\pi, \ell) = \phi^{i,j}_{a,b}(\pi', \ell')$ also implies that
\[
    \pi_k = \pi_k', \quad \text{for all $k < \pi(b)$.}
\]
Thus \Cref{obs:small-rank-same-minus} also holds. We again set $x = w_{\ell-1}$. Then
\[
    \pi'(x) = \pi(b).
\]
In particular, $\pi'(x) = \pi(b) < \pi'(c')$ implies that $x \neq c'$. Based on \Cref{obs:small-rank-same-minus}, we can use the same argument as that after \Cref{obs:small-rank-same} in the proof of \Cref{cl:mapinjectiveplus} to deduce that $x = p_a' \in \pivotspivotprime$. However, since $\pi'(c')> \pi'(x)$, $a$ cannot directly query $c'$ under $\pi'$ by \Cref{obs:vertex-query} \ref{item:query-vertex-nonpivot}. This is a contradiction to \Cref{obs:query-path}. 

It remains to rule out the subcase where $c$ is not on $P'$. Here, we have
\[
    \pi'(c) = \tilde{\pi}(c) = \pi(c) < \pi'(c').
\]
By \Cref{obs:query-path}, under $\pi'$, $a$ directly queries $c'$. Thus by \Cref{obs:vertex-query} \ref{item:query-vertex-pivot} \ref{item:query-vertex-nonpivot}, $a$ also directly queries $c$ under $\pi'$ since it is a neighbor with rank smaller than $c'$. Now, if there exists a vertex $d \in V$ that $c$ directly queries under $\pi'$, then at some point, the stack of recursive calls to \FnVx when we call \FnPr{$w_1', w_2'$} consists of the following $\ell'$ elements
\[
    (w_3', \dots, w_{\ell'}' = a, c, d),
\]
and this happens before when the stack consists of $(w_3', \dots, w_{\ell'}' = a, c', b)$. This contradicts the definition of \stack{\ell'}{w_1',w_2'} and \stackpath{\ell'}{w_1', w_2'}. Otherwise, $c$ does not directly query any vertex under $\pi'$, which means that $c$ is a pivot in \pivot{}, i.e. $c \in \pivotspivotprime$. But this contradicts that $a$ directly queries the neighbor $c'$ whose rank is greater than that of $c$ under $\pi'$, by \Cref{obs:vertex-query} \ref{item:query-vertex-nonpivot}.
\end{proof}

\subsection{Deferred Proofs}\label{sec:deferred}

To show \Cref{cl:stack-gets-large}, we will make use of the following claim on the stack of recursive calls to \FnVx when we call \FnVx{$w$} for a vertex $w \in V$:

\begin{claim}\label{cl:stack-vertex}
    Let $w$ be a vertex that remains unsettled after $t$ rounds in \ourpivot{}. Then when we call \FnVx{$w$}, at some point the stack of recursive calls to \FnVx includes at least $2t$ elements (excluding $w$).
\end{claim}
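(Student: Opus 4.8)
The plan is to prove the claim by peeling off one round of \ourpivot{} at a time. Recall that the pivot-marking done by \ourpivot{} in its first $k$ rounds is exactly $k$ rounds of parallel randomized greedy maximal independent set over $\pi$; write $G^{(k)}$ for the subgraph of $G$ induced on the vertices still unsettled after $k$ rounds, so that $G^{(0)}=G$, $G^{(k)}$ is obtained from $G^{(k-1)}$ by deleting its rank-minimal vertices together with their neighbors, and ``$w$ unsettled after $t$ rounds'' means exactly $w\in G^{(t)}$. For a graph $H$ in this chain and a vertex $v$ of $H$, let $d_H(v)$ be the maximum number of elements other than $v$ that ever appear on the stack of recursive \FnVx-calls when we call \FnVx{$v$} inside $H$; note that at any moment this stack is a path that starts at $v$ and descends in rank. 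The goal is $d_G(w)\ge 2t$.

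The heart of the matter is a ``one round buys two stack levels'' estimate: for any $H$ in the chain and any vertex $v$ of $H^{(1)}$ (the next graph in the chain), $d_H(v)\ge d_{H^{(1)}}(v)+2$. Granting this, the claim follows by induction on $t$: the base case $t=0$ is trivial, and for $t\ge 1$, if $w\in G^{(t)}$ then $w\in G^{(1)}$ and $w$ is still unsettled after $t-1$ more rounds of the process run on $G^{(1)}$, so the inductive hypothesis applied to $G^{(1)}$ gives $d_{G^{(1)}}(w)\ge 2(t-1)$, whence $d_G(w)\ge d_{G^{(1)}}(w)+2\ge 2t$.

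For the one-round estimate (I state it with $H=G$ for concreteness; the argument is identical for any $H$ in the chain), I would first record two facts. (a) If $z$ is a vertex of $G^{(1)}$, then every $G$-neighbor of $z$ outside $G^{(1)}$ is a non-pivot of \pivot{}: it cannot be a round-$1$ pivot, since then $z$ would have been settled in round $1$, so it is a non-pivot deleted in round $1$; hence \FnVx{} returns $0$ on it. (b) For $z$ a vertex of $G^{(1)}$, running \FnVx{$z$} inside $G$ and inside $G^{(1)}$ yields the same answer (both return $1$ exactly when $z$ is marked a pivot by \pivot{}, equivalently is a greedy-MIS pivot of $G^{(1)}$). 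Now fix $v$ in $G^{(1)}$, put $D:=d_{G^{(1)}}(v)$, and consider the first moment during the run of \FnVx{$v$} inside $G^{(1)}$ at which the stack has $D$ elements below $v$; at that instant the stack is a descending path $v=y_0,y_1,\dots,y_D$ in $G^{(1)}$. Using (a) and (b), I would argue that running \FnVx{$v$} inside $G$ also reaches the configuration $(y_0,\dots,y_D)$: inductively, once \FnVx{$y_j$} is active with $y_0,\dots,y_{j-1}$ below it, \FnVx{} returns $0$ on every $G$-neighbor of $y_j$ of rank below $\pi(y_{j+1})$ — those in $G^{(1)}$ because the run inside $G^{(1)}$ already queried them and got $0$ on the way to $y_{j+1}$ (so they are not $G^{(1)}$-pivots, and then by (b) not pivots in $G$ either), those outside $G^{(1)}$ by (a) — so \FnVx{$y_j$} proceeds to query $y_{j+1}$ and pushes it. Finally, since $D$ is the maximal depth reached in $G^{(1)}$, the call \FnVx{$y_D$} inside $G^{(1)}$ must return without recursing, i.e. $y_D$ has no $G^{(1)}$-neighbor of smaller rank; but $y_D$ survived round $1$, so it is not rank-minimal in $G$ and has a $G$-neighbor of smaller rank, and letting $y_{D+1}$ be the minimum-rank such neighbor — which was deleted in round $1$ yet is not a round-$1$ pivot, hence again not rank-minimal in $G$, so it too has a minimum-rank $G$-neighbor $y_{D+2}$ of still smaller rank — running \FnVx{$y_D$} inside $G$ first queries $y_{D+1}$, which in turn first queries $y_{D+2}$, so the stack reaches $(y_0,\dots,y_D,y_{D+1},y_{D+2})$ and $d_G(v)\ge D+2$.

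The obstacle I anticipate is controlling this ``lifting'' step: a priori, the extra vertices of $G\setminus G^{(1)}$ might cause \FnVx{} in $G$ to discover a pivot, return $0$, and unwind before it can follow $y_0,\dots,y_D$, and fact (a) — that every such vertex is a non-pivot, hence never short-circuits a query — is precisely what rules this out. The gain of two levels rather than one (the reason the bound is $2t$, not $t$) is the other delicate ingredient: it rests on the small observation that a vertex surviving a round is never rank-minimal (one forced level), while its minimum-rank neighbor, deleted in that round but not as a pivot, is again not rank-minimal (a second forced level). One should also keep in mind that the estimate and the induction are formulated over all the subgraphs $G^{(k)}$, not just $G$ itself, because the inductive step invokes the claim with $G^{(1)}$ in the role of the ambient graph.
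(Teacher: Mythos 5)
Your proof is correct, but it takes a genuinely different route from the paper's. The paper also inducts on the number of rounds, but it keeps the ambient graph fixed: for $w$ unsettled after $k+1$ rounds (treated first when $w\in\pivotspivot{}$, with the non-pivot case reduced to this one via $p_w$), it exhibits a directly queried neighbor $z$ of $w$ that is still unsettled after $k$ rounds, shows that $z$'s pivot $p_z$ is then also unsettled after $k$ rounds, and applies the inductive hypothesis to $p_z$; the two extra stack levels are thus $z$ and $p_z$, sitting immediately below $w$ at the top of the recursion. You instead peel the graph: you apply the inductive hypothesis to the same vertex $w$ inside the residual graph $G^{(1)}$ and transfer the deep stack back to $G$ via your lifting lemma, gaining the two extra levels at the far end of the path (your $y_{D+1},y_{D+2}$, both deleted in round 1). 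The trade-off is this: the paper's argument never compares oracle runs across different graphs --- it only uses the query-set structure of \Cref{obs:vertex-query} within a single run on $G$ --- whereas your route additionally needs (i) memorylessness of the peeling process (rounds $2,\dots,t$ on $G$ coincide with rounds $1,\dots,t-1$ on $G^{(1)}$) and (ii) your facts (a) and (b), i.e., that every vertex deleted in round 1 is a non-pivot of \pivot{} and that the pivot status of surviving vertices is the same for the oracle run in $G$ and in $G^{(1)}$. Both are true and follow from the standard equivalence between sequential greedy MIS and parallel peeling (the same fact behind $\pivotsourpivot{}\subseteq\pivotspivot{}$ in the paper), but in a full write-up they deserve explicit statements and short proofs; your one-line justifications are the thinnest spots, though not wrong. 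In exchange, your formulation isolates a clean, reusable ``one round of peeling buys two stack levels'' lemma, and the use of the maximal-depth path automatically avoids the paper's secondary descent step (the ``it suffices to prove it for $z$'' reduction when some queried neighbor of $w$ is itself still unsettled after $k+1$ rounds).
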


\begin{myproof}
We proceed by induction on $t$. The base case $t = 0$ clearly holds. Now assume the statement for $t = k$, and let $w$ be a vertex that remains unsettled after $k+1$ rounds. We consider two cases depending on whether $w$ is a pivot in \pivotspivot{} separately:

We first consider the case where $w \in \pivotspivot{}$ is a pivot. By \Cref{obs:vertex-query} \ref{item:query-vertex-pivot}, the set of vertices that $w$ directly queries is
\[
    Q_w = \{z \in N(w) \mid \pi(z) < \pi(w)\},
\]
and that $Q_w$ does not contain any pivots in \pivotspivot{}. Without loss of generality, we may assume that there does not exist a vertex $z \in Q_w$ that remains unsettled after $k+1$ rounds, since otherwise to prove the claim for $w$ it suffices to prove it for $z$.

Next, we claim that there exists a vertex $z \in Q_w$ that remains unsettled after $k$ rounds. Otherwise, since all neighbors of $w$ with smaller rank under $\pi$ are settled after $k$ rounds, $w$ would be marked as a pivot and marked as settled in round $k+1$, a contradiction. 

As a consequence, $p_z$ must also remain unsettled after $k$ rounds. Since $z \not \in \pivotspivot{}$, we have $p_z \neq z$. Thus by \Cref{obs:vertex-query} \ref{item:query-vertex-nonpivot}, $z$ directly queries $p_z$. By the inductive hypothesis applied to $p_z$, when we call \FnVx{$p_z$}, at some point the stack of recursive calls to \FnVx includes $2k$ elements (excluding $p_z$). This implies the claim for $w$.

Now, we consider the other case where $w \not \in \pivotspivot{}$ is non-pivot. By \Cref{obs:vertex-query} \ref{item:query-vertex-nonpivot}, $w$ directly queries $p_w$. We claim that $p_w$ also remains unsettled after $k+1$ rounds. Otherwise $p_w$ would have been marked as a pivot by the end of round $k+1$ and $w$ would have been removed together with $p_w$, a contradiction. By the previous case, the claim holds for $p_w$, which then implies the claim for $w$.
\end{myproof}

\begin{myproof}[Proof of \Cref{cl:stack-gets-large} via \Cref{cl:stack-vertex}]
It suffices to prove the case $\ell = 2r$. We consider the two cases for the pair $\{u,v\}$ given in \Cref{lem:X-pair-classification} separately. In Case \ref{item:case-i}, $p_{\{u,v\}}$ remains unsettled after $r$ rounds in \ourpivot{}.  If $p_{\{u,v\}} \not \in \{u,v\}$, then by \Cref{obs:vertex-query} \ref{item:query-pair-nonpivot}, $\{u,v\}$ directly queries $p_{\{u,v\}}$, and we may thus conclude by \Cref{cl:stack-vertex} applied to $w = p_{\{u,v\}}$. If $p_{\{u,v\}} \in \{u,v\}$, then we may see from \Cref{obs:vertex-query} \ref{item:query-vertex-pivot} \ref{item:query-pair-pivot} that every vertex $z$ directly queried by $p_{\{u,v\}}$ is also directly queried by $\{u,v\}$. We again conclude by \Cref{cl:stack-vertex} applied to $w = p_{\{u,v\}}$.

In Case \ref{item:case-ii}, $w_{\{u,v\}} \in N(u) \cup N(v)$ is an unsettled vertex after $r$ rounds in \ourpivot{} with $\pi(w_{\{u,v\}}) < \pi(p_{\{u,v\}})$. Thus $\{u,v\}$ directly queries $w_{\{u,v\}}$ by \Cref{obs:vertex-query} \ref{item:query-pair-pivot} \ref{item:query-pair-nonpivot}. We may then conclude by \Cref{cl:stack-vertex} applied to $w = w_{\{u,v\}}$.
\end{myproof}

\section{Implementations}\label{sec:implementations}

In this section, we prove the implications of \cref{thm:main} in the models discussed. In each model, given $\epsilon>0$, we take $r = O(1/\epsilon)$ and provide an implementation of our \ourpivot{} algorithm. We introduce the following notation: For each $t \in [1, r]$, and each vertex $v \in V$ that is unsettled at the beginning of round $t$ or \ourpivot{}, let $\eta_t(v)$ denote the vertex that has the smallest rank under $\pi$ among the vertices in $\{v\} \cup N(v)$ that are unsettled at the beginning of round $t$. Note that $v$ is marked as a pivot in round $t$ if and only if $\eta_t(v) = v$.

\paragraph{The Massively Parallel Computations (\MPC{}) Model:} In the \MPC{} model, the edge set $E$ of the input graph $G$ is distributed to a collection of machines. Computation then proceeds in synchronous rounds. In each round, a machine can receive messages from other machines in the previous round, perform some local computation, and send messages to other machines as input for the next round. Each message has size $O(1)$ words. Each machine has limited local space, which restricts the total number of messages it can receive or send in a round. For correlation clustering, at the end of the computation, each machine is required to know the cluster IDs of the vertices for which it initially holds edges. We focus on the \emph{strictly sublinear regime} of \MPC{} where the computation uses $O(n^\delta)$ space per machine, where $\delta>0$ is a constant that can be made arbitrarily small, and $O(m)$ total space.

We now show that there is a randomized $O(r)$-round MPC algorithm that obtains an expected $\big(3+\frac{8}{2r-1}+n^{-\Omega(1)}\big)$-approximation of correlation clustering and uses $O(n^\delta)$ space per machine, where constant $\delta > 0$ can be made arbitrarily small, and $O(m)$ total space.

\newcommand{\ourpivotvar}[0]{\ensuremath{r\textsc{-Pivot-Variant}}}

\begin{proof}[Proof of \Cref{cor:MPC}]
Since a random permutation of $V$ cannot be drawn and stored on a single machine, we implement the following variant of \ourpivot{} in the \MPC{} model:

\begin{whitetbox}
\textbf{Algorithm} \ourpivotvar{} \textbf{:} A variant of the algorithm \ourpivot{}.

\vspace{0.2cm}

\begin{itemize}[topsep=0pt, itemsep=0pt,leftmargin=13pt]
\item Instead of drawing a random permutation $\pi$ of the vertex set $V$, each vertex $v \in V$ draws a rank $\pi(v)$ from $\{0, \dots, n^c-1\}$ independently and uniformly at random, where $c$ is a sufficiently large constant.

\item Proceed in the same way as in \ourpivot{}, breaking ties in ranks in favor of the vertex with smaller ID.

\end{itemize}
\end{whitetbox}

\begin{claim}
For any $r \geq 1$, the algorithm \ourpivotvar{} outputs an expected $\big(3+\frac{8}{2r-1}+n^{-\Omega(1)}\big)$-approximation of correlation clustering.
\end{claim}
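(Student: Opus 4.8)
The plan is to reduce the behavior of \ourpivotvar{} to that of \ourpivot{} by showing that, with high probability, the random ranks drawn in $\{0,\dots,n^c-1\}$ induce a \emph{uniformly random permutation} of $V$ after tie-breaking, except on a low-probability ``bad event'' where two vertices collide in rank. First I would observe that conditioned on all $n$ drawn ranks being distinct, their relative order is a uniformly random permutation of $V$, and the tie-breaking rule is irrelevant; hence on this event \ourpivotvar{} is exactly \ourpivot{} run on a uniform $\pi$, so by \Cref{cor:3-apx} its expected cost conditioned on this event is at most $\bigl(3+\frac{8}{2r-1}\bigr)\opt{G}$. Next I would bound the probability of a collision: by a union bound over the $\binom{n}{2}$ pairs, $\Pr[\exists\,u\neq v:\pi(u)=\pi(v)] \le \binom{n}{2}/n^c = O(n^{-(c-2)})$, which is $n^{-\Omega(1)}$ for $c$ a sufficiently large constant (any $c\ge 3$ works; larger $c$ gives a larger polynomial savings).

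Then I would combine the two pieces. Even tie-broken, the ranks plus IDs always give \emph{some} valid total order, so \ourpivotvar{} always produces a legitimate clustering; its cost is trivially at most the number of vertex pairs, $\binom n2 \le n^2$. Writing $B$ for the bad (collision) event, the law of total expectation gives
\[
\E[\text{cost of \ourpivotvar{}}]
= \Pr[\overline B]\cdot\E[\text{cost}\mid \overline B] + \Pr[B]\cdot\E[\text{cost}\mid B]
\le \Bigl(3+\tfrac{8}{2r-1}\Bigr)\opt{G} + O(n^{-(c-2)})\cdot n^2 .
\]
To turn the additive $O(n^{2-(c-2)})$ term into a multiplicative $n^{-\Omega(1)}\cdot\opt{G}$ term, I would use the standard fact that $\opt{G}\ge 1$ whenever $G$ is not a disjoint union of cliques (and the approximation statement is vacuous when $\opt{G}=0$, since then \ourpivotvar{} also achieves cost $0$ on the good event and the bad-event contribution is still $O(n^{2-(c-2)})$, which we simply absorb — more cleanly, if $\opt{G}=0$ one checks \ourpivotvar{} never errs regardless of ranks, as all clusters are forced). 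Choosing $c$ large enough (say $c \ge 5$) makes $O(n^{2-(c-2)}) = O(n^{-\Omega(1)})$, and since $\opt{G}\ge 1$ in the only nontrivial case, this additive term is at most $n^{-\Omega(1)}\cdot\opt{G}$, yielding the claimed $\bigl(3+\frac{8}{2r-1}+n^{-\Omega(1)}\bigr)$-approximation in expectation.

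The main obstacle — really the only subtlety — is the book-keeping at $\opt{G}=0$ and, more importantly, arguing cleanly that on the no-collision event the tie-breaking is genuinely irrelevant so that the coupling with \ourpivot{} is exact: one must check that every step of \ourpivot{} (marking pivots via $\eta_t$, and the final cluster-formation rule comparing ranks of pivots and unsettled neighbors) depends only on the \emph{order} of the ranks it sees, which is immediate since all comparisons in the algorithm are of the form $\pi(\cdot) < \pi(\cdot)$ among distinct vertices. Once that is noted, the rest is the elementary union bound and total-expectation calculation above. I would also remark that the same argument shows the per-pair charging analysis (\Cref{lem:EX}, \Cref{lem:width}) transfers verbatim on the no-collision event, so the instance-wise guarantee is likewise preserved up to the $n^{-\Omega(1)}$ slack.
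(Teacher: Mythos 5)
Your proposal is correct and follows essentially the same route as the paper: condition on the ranks being distinct (probability at least $1-n^{2-c}$ by a union bound), where \ourpivotvar{} coincides with \ourpivot{} on a uniform permutation and \Cref{cor:3-apx} applies; bound the cost on the collision event by $\binom{n}{2}\le n^2\cdot\opt{G}$ using $\opt{G}\ge 1$; and dispose of the $\opt{G}=0$ case by noting that when every component is a clique the algorithm is exact regardless of the ranks. Your parenthetical attempt to "absorb" the bad-event term when $\opt{G}=0$ would not work on its own, but your cleaner fallback (clusters are forced on disjoint cliques) is exactly the paper's argument, so the proof stands.
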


\begin{proof}
First, observe that any connected component of $G$ that is a clique can be successfully identified by both \ourpivot{} and \ourpivotvar{} with probability 1. Thus, if all connected components of $G$ are cliques, then both \ourpivot{} and \ourpivotvar{} output the optimal solution with probability 1.

Now consider the case where at least one connected component of $G$ is not a clique. Then $\opt{G} \geq 1$. With probability at least $1-n^{2-c}$, the ranks $\pi(v)$ drawn in \ourpivotvar{} are all distinct, in which case $\pi$ is equivalent to a uniformly drawn permutation of $V$, and the output of \ourpivotvar{} is the same as the output of \ourpivot{}, which has cost at most $\big(3+\frac{8}{2r-1}\big) \cdot \opt{G}$ by \Cref{cor:3-apx}. In the case there are repeated ranks, which happens with probability at most $n^{2-c}$, we simply upper bound the cost by ${n \choose 2} \leq n^2 \cdot \opt{G}$, since $\opt{G} \geq 1$. Thus the expected cost of the above variant is at most
$$
\left( 3+\frac{8}{2r-1} + n^{2-c} \cdot n^2\right) \opt{G} \leq \left( 3+\frac{8}{2r-1} + n^{-\Omega(1)} \right) \opt{G}.\qedhere
$$
\end{proof}

Now we describe the \MPC{} implementation. Take $r = O(1/\epsilon)$ such that $\frac{8}{2r-1} + n^{-\Omega(1)}< \epsilon$, and fix $\delta >0$. We use a collection of $O(n^{1-\delta})$ machines to draw the rank $\pi(v)$ of each vertex $v \in V$ from $\{0, \dots, n^c-1\}$ independently and uniformly at random, where $c$ is a sufficiently large constant. Let $M_v$ denote the machine that holds the rank $\pi(v)$ of a vertex $v$. We will also let $M_v$ keep track of whether $v$ is settled or a pivot throughout the computation. All vertices are initially marked as unsettled and non-pivot.

Moreover, the input edges in $E$ are stored in a collection of $O(m/n^{\delta})$ machines. In the implementation, we will need to perform the following two operations:
\begin{itemize}
    \item Every machine $M_v$ informs each machine that holds an input edge incident to $v$ whether $v$ is settled or a pivot.

    \item Every machine that holds input edges requests to mark some of the vertices it holds edges to as settled,  by communicating with the corresponding machines $M_v$'s.
    
\end{itemize}
We note that each operation can be done in $O(1/\delta)$ rounds of \MPC{} with $O(n^\delta)$ space per machine and $O(m)$ total space.

For each $t \in [1,r]$, we implement round $t$ of \ourpivot{} by $O(1/\delta)$ rounds of \MPC{} as follows: First, we compute $\eta_t(v)$ for all unsettled vertices $v \in V$ and store them in the respective machines $M_v$'s. To do this, we construct a set $L_t$ of at most $2m+n$ \emph{ordered} pairs of vertices as follows:
\begin{itemize}
    \item For each edge $\{u,v\} \in E$ such that both $u, v$ are unsettled, add both $(u,v)$ and $(v,u)$ to $L_t$. This is done by the machine holding the input edge $\{u,v\}$, which communicates with $M_u$ and $M_v$ to check whether $u,v$ are both unsettled. 
    
    \item For each unsettled vertex $v \in V$, add $(v,v)$ to $L_t$. This is done by the machine $M_v$.
\end{itemize}
Next, we sort the pairs in $L_t$ by the rank of the first vertex, and in case of ties, by the rank of the second vertex. This can be done in $O(1/\delta)$ rounds of \MPC{} with $O(n^\delta)$ space per machine and $O(m)$ total space \cite{Goodrich11}. The sorted list $L_t$ has form
\[ (u_1, v_{1,1}), \dots, (u_1, v_{1, \deg_t(u_1)+1}), (u_2, v_{2,1}), \dots, (u_2, v_{2, \deg_t(u_2)+1}), \dots, (u_{n_t}, v_{n_t,1}), \dots, (u_{n_t}, v_{n_t, \deg_t(u_{n_t})+1}),  \]
where $u_1, \dots, u_{n_t}$ is a listing of the unsettled vertices in increasing order of their ranks, and for each $i$, $v_{i,1}, \dots, v_{i, \deg_t(u_i)+1}$ is a listing of the unsettled vertices in $\{u_i\} \cup N(u_i)$ in increasing order of their ranks. In particular, for each $i$, $v_{i,1} = \eta_t(u_i)$. Then, any machine that holds $(u_i, v_{i,1})$ sends $v_{i,1} = \eta_t(u_i)$ to $M_{u_i}$.

Now, for any unsettled vertex $v$, if $M_v$ sees that $\eta_t(v) = v$, it marks $v$ as settled and a pivot. Then, any machine holding an input edge $\{u,v\} \in E$ checks whether $u$ is a newly-identified pivot and $v$ is unsettled, in which case it asks $M_v$ to mark $v$ as settled. Thereby we have implemented round $t$ of \ourpivot{} by $O(1/\delta)$ rounds of \MPC{}.

Now that we have implemented all $r$ rounds of \ourpivot{} by $O(r/\delta)$ rounds of \MPC{}, in $O(1/\delta)$ additional rounds, each $M_v$ determines the cluster ID of $v$, as follows: Every pivot starts a cluster which includes itself. Then, any non-pivot vertex $u$ can determine the vertex $v$ that has the smallest rank under $\pi$ among all pivots in $N(u)$ (if any), as well as the vertex $w$ that has the smallest rank under $\pi$ among all unsettled vertices in $N(u)$ (if any). The values $v$ and $w$ can be computed and sent to $M_u$ in a similar way as how the values $\eta_t(v)$'s are computed and sent above. If $v$ doesn't exist, or both $v, w$ exist and $\pi(w) < \pi(v)$, $u$ forms a singleton cluster. Otherwise, $u$ joins the cluster of $v$.  In the end, the cluster IDs are broadcast to the machines which initially hold input edges.
\end{proof}

\paragraph{The Graph Streaming Model:} In the streaming correlation clustering problem, the edges of graph $G$ arrive one by one in a stream, and in an arbitrary order. The algorithm has to take few passes over the input, use a small space, and output the clusters at the end. It is not hard to see that $\Omega(n)$ words of space are needed just to store the final output. 

We now show that there is a randomized $(2r+1)$-pass streaming algorithm using $O(n \log n)$ bits of space that obtains an expected $\big(3+\frac{8}{2r-1}\big)$-approximation of correlation clustering.

\begin{proof}[Proof of \Cref{cor:streaming}]
We take $r = O(1/\epsilon)$ such that $\frac{8}{2r-1}< \epsilon$ and implement \ourpivot{} in the streaming model. We start by drawing a random permutation $\pi$ of $V$ and marking all vertices as unsettled and non-pivot. Then, for each $t \in [1, r]$, we implement round $t$ of \ourpivot{} by making two passes of the stream. In the first pass, we maintain $\eta_t(v)$ associated to each unsettled vertex $v \in V$, which is initialized to $v$, as follows: whenever an edge $\{u,v\} \in E$ arrives with both $u,v$ unsettled, if $\pi(u) < \pi(\eta_t(v))$, we update $\eta_t(v)$ to be $u$; similarly, if $\pi(v) < \pi(\eta_t(u))$, we update $\eta_t(u)$ to be $v$. After this pass, any unsettled vertex $v$ with $\eta_t(v) = v$ is marked as settled and a pivot. In the second pass, we mark all neighbors of the newly-identified pivots as settled, as follows: whenever an edge $\{u,v\} \in E$ arrives with one of $u,v$ marked as a pivot and the other unsettled, we mark the unsettled vertex as settled.

Now that we have implemented all $r$ rounds of \ourpivot{} by $2r$ passes of the stream, we determine the output clustering by making an additional final pass. Every pivot starts a cluster which includes itself. During the final pass, for each non-pivot vertex $u$, we find the vertex $v$ that has the smallest rank under $\pi$ among all pivots in $N(u)$ (if any), as well as the vertex $w$ that has the smallest rank under $\pi$ among all unsettled vertices in $N(u)$ (if any). If $v$ doesn't exist, or both $v,w$ exist and $\pi(w) < \pi(v)$, $u$ forms a singleton cluster. Otherwise, $u$ joins the cluster of $v$.
\end{proof}

\paragraph{The Local Model:} In the local model, each vertex of the input graph $G$ hosts a computationally unbounded processor, computation proceeds in rounds, and adjacent vertices can exchange messages of any size in each round. The main question is the number of rounds it takes to solve a graph problem where the input is the same as the communication network $G$.


We now show that there is a randomized $(2r+1)$-round local algorithm that obtains an expected $\big(3+\frac{8}{2r-1}+n^{-\Omega(1)}\big)$-approximation of correlation clustering using $O(\log n)$-bit messages.

\begin{proof}[Proof of \Cref{cor:local}]
Take $r = O(1/\epsilon)$ such that $\frac{8}{2r-1}+n^{-\Omega(1)}< \epsilon$. Since we cannot globally draw a random permutation of $V$ in the local model, we implement \ourpivotvar{} given in the \MPC{} implementation (see the proof of \Cref{cor:MPC} above). To start, each vertex $v \in V$ draws its rank $\pi(v)$ from $\{0, \dots, n^c-1\}$ independently and uniformly at random, where $c$ is a sufficiently large constant, and gets marked as unsettled and non-pivot.  Then, for each $t \in [1, r]$, we implement round $t$ of \ourpivot{} by two rounds in the local model. In the first round, any unsettled vertex sends its rank to all of its neighbors. Then, any unsettled vertex $v$ can determine $\eta_t(v)$ from the ranks of all of its unsettled neighbors, and if $\eta_t(v) = v$, it gets marked as settled and a pivot. In the second round, any newly-identified pivot informs all of its neighbors. Then, any unsettled vertex that sees a newly-identified pivot neighbor gets marked as settled.

Now that we have implemented all $r$ rounds of \ourpivot{} by $2r$ rounds in the local model, we determine the output clustering by an additional final round. Every pivot starts a cluster which includes itself. In the final round, any pivot informs all of its neighbors that it is pivot and sends over its rank under $\pi$. Moreover, any unsettled vertex informs all of its neighbors that it is unsettled and sends over its rank under $\pi$. Then, any non-pivot vertex $u$ can determine the vertex $v$ that has the smallest rank under $\pi$ among all pivots in $N(u)$ (if any), as well as the vertex $w$ that has the smallest rank under $\pi$ among all unsettled vertices in $N(u)$ (if any). If $v$ doesn't exist, or both $v,w$ exist and $\pi(w) < \pi(v)$, $u$ forms a singleton cluster. Otherwise, $u$ joins the cluster of $v$.
\end{proof}

\paragraph{Local Computation Algorithms:} Centralized Local Computation Algorithms (LCAs) are useful for problems where both the input and output are too large. An LCA  is not required to output the whole solution, but instead should answer queries about parts of the output. For example, for the correlation clustering problem, the query to an LCA is a vertex $v$ and the answer should return the cluster ID of $v$. For graph problems, it is common to assume that the algorithm has query access to the adjacency lists. That is, for any vertex $v$, the LCA can query the degree of $v$ in $G$, and for any $i \in [\deg(v)]$, the LCA can request the ID of $i$-th neighbor of $v$. The goal is to answer each query using small time and space.

As it is by now standard and because \ourpivot{} can be implemented in $O(r)$ rounds in the local model, we can obtain a $\Delta^{O(r)}\poly\log n$ time/space LCA implementation for \ourpivot{} by simply collecting the whole $O(r)$-hop of any vertex \cite{RubinfeldTVX11,AlonRVX12}. This shows \Cref{cor:LCA}, with a choice of $r = O(1/\epsilon)$ such that $\frac{8}{2r-1}+n^{-\Omega(1)}< \epsilon$.

\section{Conclusion \& Open Problems}\label{sec:conclusion}

In this work, we showed that a $(3+\epsilon)$-approximation of correlation clustering can be obtained in $O(1/\epsilon)$ rounds in models such as (strictly sublinear) MPC, local, and streaming. This is a culminating point for low-depth algorithms for correlation clustering as the approximation gets close to a barrier of 3 for combinatorial algorithms and the round-complexity is essentially constant.

Several interesting questions remain open, especially in big data settings where there is no direct notion of round complexity. For example, is it possible to obtain an (almost) 3-approximation of correlation clustering in:
\begin{itemize}[itemsep=0pt]
    \item Sublinear time? See \cite{AssadiChen-ITCS} for the formal model for correlation clustering.
    \item $O(1)$ update-time in the fully dynamic model? See \cite{BehnezhadDHSS-FOCS19} for a $\poly\log n$ update-time algorithm. 
    \item $O(\Delta)$ time of the LCA model?
    \item A single pass of the streaming setting using $O(n \poly\log(n))$ space?
\end{itemize}

Finally, recall from the last paragraph of \cref{sec:ourcontribution} that our algorithm combined with the algorithm of \cite{ChawlaMSY14} gives an efficient $O(1/\epsilon)$-round algorithm for rounding the natural LP solution up to a factor of $(2.06 + \epsilon)$. Unfortunately, solving the LP remains the main bottleneck. It would thus be extremely interesting to study whether a low-round algorithm exists for solving the natural correlation clustering LP (see \cite{ChawlaMSY14}).




\bibliographystyle{plainnat}
\bibliography{references}

\appendix

\section{Size of {\normalfont \pivotsourpivot{}} vs {\normalfont \pivotspivot{}}}\label{sec:ourpiv-size}

In this section, we show that even though, by \cref{thm:main}, algorithm \ourpivot{} is almost as good as \pivot{} in terms of approximating correlation clustering, the set \pivotsourpivot{} of the pivots found by \ourpivot{} tends to be significantly smaller than the set \pivotspivot{} of the pivots found by \pivot{}. 

\begin{lemma}\label{lem:sizeofpivotsourpiv}
    For any $r \geq 1$, there is an infinite family of $n$-vertex graphs for which
    $$
        \E|\pivotsourpivot{}| \leq r \cdot n^{-\Omega(1/r)} \cdot \E|\pivotspivot{}|.
    $$
\end{lemma}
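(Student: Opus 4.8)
Here is the plan. The goal is to exhibit, for each fixed $r\ge 1$, a family of ``cascading'' graphs on which the parallel \pivot{} process makes wildly unequal progress across rounds: it commits only few pivots in the first $r$ rounds, yet its final maximal independent set $\pivotspivot{}$ is large. The natural candidate is a \emph{layered} graph with classes $V_0,V_1,\dots,V_{2r}$ whose sizes form a geometric progression with ratio $\beta\approx n^{1/(2r)}$ (so that one extreme class carries a $1-o(1)$ fraction of the $n$ vertices), with each class built as a disjoint union of cliques and consecutive classes joined by a biregular bipartite pattern. The inter-class wiring is chosen so that, for a random $\pi$, the random-greedy process behaves like a cascade that sweeps the classes one per round: in round $t$ essentially only one class contributes pivots, and the number contributed grows by a factor $\approx\beta$ from round to round. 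The clique-union structure (rather than independent sets) for each class is what prevents a class from being ``dumped'' wholesale into pivots the moment one of its neighbours pivots: a class contributes only as many pivots as it has clique-components.

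The proof then splits into two estimates. For the lower bound on $\E|\pivotspivot{}|$, one uses that whichever class the global minimum of $\pi$ lands in, the eventual MIS is forced to pick one vertex from each clique-component of the dominant class once the cascade reaches it; more crudely, since any maximal independent set of a graph with maximum degree $\Delta$ has size at least $n/(\Delta+1)$ and here $\Delta=n^{O(1/r)}$, one already gets $\E|\pivotspivot{}|\ge n/(\Delta+1)=n^{1-O(1/r)}$. For the upper bound on $\E|\pivotsourpivot{}|$, one establishes a structural claim in the spirit of \Cref{cl:stack-vertex}: a vertex of $V_i$ cannot be settled within $r$ rounds unless the cascade has already reached class $V_i$, which (with high probability over $\pi$) requires $i=O(r)$; hence $|\pivotsourpivot{}|$ is at most the total number of pivots produced by the first $O(r)$ classes, which by the geometric growth is $O(r\cdot\beta^{O(r)})$, i.e. $r\cdot n^{-\Omega(1/r)}$ times the size of the dominant class. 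Combining the two estimates gives $\E|\pivotsourpivot{}|\le r\, n^{-\Omega(1/r)}\cdot\E|\pivotspivot{}|$, with the constant in $\Omega(1/r)$ coming from $\log_n\beta=\Theta(1/r)$; a trivial bound handles the degenerate failure event and the $r=1$ corner case separately.

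The hard part is controlling the random permutation $\pi$ against the layered structure, i.e. showing the cascade really does proceed ``one class per round, except with probability $n^{-\Omega(1/r)}$''. Two bad events must be ruled out. First, a very small rank might land ``deep'' in the cascade and short-circuit it, causing many classes to collapse in a single round; this is why the class sizes grow geometrically (so that small ranks concentrate in the later classes) and why the inter-class connections must be dense enough that an early pivot in $V_i$ cannot free a vertex of $V_j$ for $j\gg i$. Second, even within a single round a class could be emptied into pivots faster than the ``one component per round'' target; this is controlled by the clique-component structure together with the biregularity of the bipartite layers. Quantifying the failure probabilities of these events and turning them into the stated bound on $\E|\pivotsourpivot{}|$ is the technical heart of the argument, and is, in spirit, a ``pivot-count'' analogue of the round-complexity lower bounds for random-greedy maximal independent set of \citet{FischerN-SODA18}.
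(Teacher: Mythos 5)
There is a genuine gap, and it sits in the lower-bound half of your plan. For \emph{every} graph, the expected number of pivots created in round~1 alone is $\sum_{v}\frac{1}{\deg(v)+1}\ge \frac{n}{\Delta+1}$, since a vertex is a round-1 pivot exactly when it is the minimum of $\pi$ on its closed neighborhood. These round-1 pivots all lie in \pivotsourpivot{} for every $r\ge 1$, so if your only lower bound on $\E|\pivotspivot{}|$ is the crude $n/(\Delta+1)$ with $\Delta=n^{O(1/r)}$, then your own construction satisfies $\E|\pivotsourpivot{}|\ge \E[\#\text{round-1 pivots}]\ge n/(\Delta+1)\ge$ (your bound on $\E|\pivotspivot{}|$), and no ratio of the form $r\cdot n^{-\Omega(1/r)}$ can ever be certified this way. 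What is needed is a \emph{structural} lower bound on $\E|\pivotspivot{}|$ exceeding $\sum_v 1/(\deg(v)+1)$ by a factor $n^{\Omega(1/r)}$. This is precisely what the paper's construction is engineered for: $G$ is the \emph{line graph} of a layered graph $H$ with $t=2r+2$ layers of geometrically growing sizes, the top layer $V_t$ carries a perfect matching, so \pivotspivot{} is a maximal matching of $H$ of size $\Omega(N)$ \emph{deterministically}, while each matching edge inside $V_t$ has only $2\alpha$ line-graph neighbors ($\alpha=N^{1/3t}$), so the truncated process captures only an $O(t/\alpha)$ fraction of that mass. Your parenthetical idea of ``one pivot per clique-component of the dominant class'' is the right kind of bound, but you then explicitly fall back on $n/(\Delta+1)$; to rescue the plan you would have to make the dominant class a union of many \emph{low-degree} components each guarded by a moderate number of cross edges, which is essentially the paper's line-graph/matching design.

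A second, related problem is the cascade claim itself: ``in round $t$ essentially only one class contributes pivots'' is false as stated, because in round~1 the parallel process marks \emph{all} local minima of $\pi$, in every class; the dominant class therefore contributes roughly $|V_{2r}|/(\deg+1)$ pivots immediately, a term your accounting (``only pivots produced by the first $O(r)$ classes'') omits entirely, and which in your parameter regime is already of the same order as your lower bound on $\E|\pivotspivot{}|$. The correct statement has to be per-vertex and probabilistic, which is how the paper argues: for each top-layer vertex $v_t$ it defines the event $Q(v_t)$ that along a path $(v_t,\dots,v_1)$ the lowest-rank edge at each $v_i$ points one layer down; the biregular degrees are chosen so that $\Pr[Q(v_t)]\ge 1-\Theta(t/\alpha)$, and under $Q(v_t)$ all edges of $v_t$ provably remain unsettled after $r$ rounds, giving $\E|\pivotsourpivot{}|=O(tN/\alpha)$ against $|\pivotspivot{}|=\Omega(N)$. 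Your two ``bad events'' are the right worries, but without the guard/degree design and such a per-vertex chain argument the quantitative separation $n^{-\Omega(1/r)}$ does not follow from the sketch.
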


\newcommand{\degreeright}[1]{\ensuremath{d^\rightarrow_{#1}}}
\newcommand{\degreeleft}[1]{\ensuremath{d^\leftarrow_{#1}}}

\begin{proof}
We first construct a layered graph $H$, then the graph $G$ of the lemma will be the line-graph of $H$. That is, $G$ includes a vertex for each edge of $H$, and two vertices of $G$ are adjacent if their corresponding edges in $H$ share an endpoint. Observe that \pivotspivot{} for $G$ corresponds to a maximal matching of $H$, and \pivotsourpivot{} for $G$ corresponds to a (not necessarily maximal) matching of $H$.

Let $t := 2r+2$ where $r$ is the parameter we run \ourpivot{} with. Let $N$ be a sufficiently large even integer controlling the number of vertices in $H$ and let $\alpha = N^{1/3t} \geq 2$. The graph $H$ has $t$ layers of vertices $V_1, \ldots, V_t$ where $|V_i| = N/\alpha^{t-i}$. The graph has a perfect matching among the vertices in $V_t$. Additionally, there is a bipartite graph between every two consecutive layers. For any $i \in [t]$ define:
$$
    \degreeleft{i} := \alpha^{2(t-i)+1}, \qquad\qquad \degreeright{i} := \alpha^{2(t-i)}.
$$
 For any $i \in [t-1]$, each vertex $v \in V_i$ has exactly $\degreeright{i}$ neighbors in $V_{i+1}$, and each vertex in $V_{i+1}$ has exactly $\degreeleft{i+1}$ neighbors in $V_i$. For this to be doable we need to have $|V_i| \cdot \degreeright{i} = |V_{i+1}| \cdot \degreeleft{i+1}$ and $|V_1| \geq \degreeleft{2}$. The former holds because
$$
    |V_i| \cdot \degreeright{i} = \frac{N}{\alpha^{t-i}} \cdot \alpha^{2(t-i)} = \frac{N}{\alpha^{t-(i+1)}} \cdot \alpha^{2(t-(i+1))+1} = |V_{i+1}| \cdot \degreeleft{i+1},
$$
and the latter holds because 
$$
|V_1| = N/\alpha^{t-1} = N/N^{\frac{t-1}{3t}} = N^{2/3+1/3t} \qquad \& \qquad \degreeleft{2} = \alpha^{2(t-2)+1} = N^{\frac{2(t-2)+1}{3t}} \leq N^{2/3}.
$$

Since \pivotspivot{} is a maximal matching of $H$, it should be at least half the size of its maximum matching. Our construction has a perfect matching among the vertices of $V_t$. Thus, in any outcome of the \pivot{} algorithm we must have
$$
    |\pivotspivot{}| \geq \frac{1}{2} \cdot \frac{|V_t|}{2} = \Omega(N).
$$

Next we show that $\E|\pivotsourpivot{}| = O(tN/\alpha) = rN^{1-\Omega(1/r)}$, which proves the claim.

Let $\pi$ be a random rank over the {\em edges} of $H$ (or equivalently the vertices of $G$). Given a vertex $v_t \in V_t$, define $Q(v_t)$ to be the event of having a path $(v_t, v_{t-1}, \ldots, v_1)$ in $H$ where $v_i \in V_i$ and additionally $(v_i, v_{i-1})$ has the lowest rank in $\pi$ among all edges of $v_i$. Observe that under this event, all edges of $v_t$ remain unsettled by \ourpivot{} after  $(t-2)/2 = r$ rounds.

Let us lower bound $\Pr[Q(v_t)]$ for fixed $v_t \in V_t$. First, vertex $v_t$ has $\degreeleft{t}$ edges in $V_{t-1}$ and only one edge in $V_t$. So  the lowest rank edge of $v$ is to some vertex $v_{t-1} \in V_{t-1}$  with probability 
$$\frac{\degreeleft{t}}{\deg(v_t)} = \frac{\degreeleft{t}}{\degreeleft{t}+1} = \frac{\alpha}{\alpha+1}.$$
Conditioned on this, the lowest rank edge of $v_{t-1}$ goes to $v_{t-2} \in V_{t-2}$ with probability 
$$
\frac{\degreeleft{t-1}}{\deg(v_{t-1})+\deg(v_t)-1} = \frac{\degreeleft{t-1}}{\degreeleft{t-1}+\degreeright{t-1}+\degreeleft{t}} = \frac{\alpha^3}{\alpha^3+\alpha^2+\alpha} \geq \frac{\alpha}{\alpha+2},
$$
(Where here we are using the fact that for disjoint sets of edges $A$ and $B$, the lowest rank edge among $B$ has a lower rank than the lowest rank edge of $A$ with probability $\frac{|B|}{|B|+|A|}$.) Conditioned on this, the lowest rank edge of $v_{t-2}$ goes to $v_{t-3} \in V_{t-3}$ with probability 
$$
\frac{\degreeleft{t-2}}{\deg(v_{t-2})+\deg(v_{t-1})+\deg(v_t)-2} = \frac{\alpha^5}{\alpha^5+\alpha^4+\alpha^3+\alpha^2+\alpha - 1} \geq \frac{\alpha}{\alpha+2}.
$$
Continuing this argument all the way to $V_1$, we get
$$
    \Pr[Q(v_t)] \geq \left(\frac{\alpha}{\alpha+2}\right)^{t-1} = \left(1-\frac{2}{\alpha+2}\right)^{t-1} \geq 1 - \frac{2(t-1)}{\alpha+2} = 1 - \Theta(t/\alpha).\qedhere
$$
\end{proof}

\section{Deterministic Approximation of {\normalfont \ourpivot{}}}\label{sec:det-apx-ourpiv}

In this section, we prove the following:

\begin{lemma}\label{lem:worstcasepermutation}
    For any $r < n/2$, there exists a graph $G$ and a choice of $\pi$ for \ourpivot{} where 
    $$
    \costourpiv{G, \pi} = \Omega\left(\frac{n^2}{r} \cdot \opt{G}\right).
    $$
\end{lemma}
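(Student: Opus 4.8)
The plan is to exhibit a graph $G$ that is almost a disjoint union of cliques (so $\opt{G}$ is tiny) together with a permutation that forces \ourpivot{} to shatter one large clique into singletons. The shattering happens because the canonical pivot of that clique is kept unsettled for all $r$ rounds by a cheap ``blocking'' path of length $\Theta(r)$ hanging off of it.

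Concretely, fix $r\ge 1$, take any $n\ge 4r$, and let $G$ be the union of a clique $K$ on the $n-2r$ vertices $v_1,u_1,\dots,u_{n-2r-1}$ and a path $v_1-v_2-\dots-v_{2r+1}$ sharing only $v_1$ with $K$; so $|V(G)|=n$ and $r<n/2$. Pick $\pi$ so that the $2r+1$ path vertices receive the smallest ranks in decreasing order along the path, $\pi(v_{2r+1})<\pi(v_{2r})<\dots<\pi(v_1)$, and $u_1,\dots,u_{n-2r-1}$ receive the remaining (largest) ranks in any order; in particular $v_1$ is the minimum-rank vertex of $K$. I would then trace \ourpivot{} on $(G,\pi)$ by induction on $k=0,1,\dots,r$, proving that after round $k$ the settled vertices are exactly $v_{2r-2k+2},\dots,v_{2r+1}$ and that, for $k\ge 1$, the only vertex marked pivot in round $k$ is $v_{2r-2k+3}$. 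For the inductive step one checks that among the currently unsettled vertices every $u_j$ is blocked by its smaller-rank unsettled neighbor $v_1$, the vertex $v_1$ is blocked by its smaller-rank unsettled neighbor $v_2$, and each unsettled path vertex $v_i$ with $i<2r-2k+1$ is blocked by $v_{i+1}$; the sole remaining candidate is the deepest unsettled path vertex $v_{2r-2k+1}$, whose only unsettled neighbor $v_{2r-2k}$ has larger rank, so it is marked and settles itself and $v_{2r-2k}$. Hence after $r$ rounds the pivot set is $\{v_3,v_5,\dots,v_{2r+1}\}$, and $v_1$ together with all the $u_j$ are still unsettled. In the cluster-formation step, each $u_j$ and $v_1$ has no pivot in its neighborhood and so forms a singleton, whereas each $v_{2k}$ joins the cluster of $v_{2k+1}$. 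Consequently every one of the $\binom{n-2r}{2}$ edges of $K$ is cut by \ourpivot{}, giving $\costourpiv{G,\pi}\ge\binom{n-2r}{2}\ge\binom{n/2}{2}=\Omega(n^2)$.

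On the other hand, the clustering that keeps $K$ as a single cluster and pairs up the rest of the path as $\{v_2,v_3\},\{v_4,v_5\},\dots,\{v_{2r},v_{2r+1}\}$ has at most $r$ disagreements --- the cut edges $v_1v_2,v_3v_4,\dots,v_{2r-1}v_{2r}$, and no non-edge inside any cluster since $K$ is a clique --- so $\opt{G}\le r$. Combining the two bounds, $\frac{n^2}{r}\cdot\opt{G}\le n^2=O\!\left(\costourpiv{G,\pi}\right)$, i.e. $\costourpiv{G,\pi}=\Omega\!\left(\frac{n^2}{r}\cdot\opt{G}\right)$, which is the claim.

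The main obstacle is the induction tracking how the ``wave'' of settled vertices crawls along the path --- specifically, showing it advances exactly two vertices per round and never reaches $v_1$ in time, so that no vertex of $K$ is ever marked as a pivot. This relies on the fact that the pivots of a given round are determined simultaneously with respect to the unsettled set as it stands at the \emph{beginning} of that round; this is exactly what lets $v_1$ keep the $u_j$'s blocked during round $r$ (and, with a $(2r+1)$-vertex path, $v_1$ in fact stays unsettled even after round $r$). The remaining pieces --- the estimates $\binom{n-2r}{2}=\Omega(n^2)$ (using $n\ge 4r$) and $\opt{G}\le r$, and the bookkeeping in the cluster-formation step --- are routine.
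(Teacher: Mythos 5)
Your proposal is correct and follows essentially the same construction and argument as the paper: a large clique with a path of $2r$ edges attached, with ranks increasing along the path toward the clique, so that the pivot ``wave'' advances two vertices per round and never reaches the clique, forcing all $\Omega(n^2)$ clique edges to be cut while $\opt{G} = O(r)$. Your write-up merely makes explicit (via the induction on rounds and the $n \ge 4r$ bookkeeping) details the paper leaves as a sketch.
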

\begin{proof}
For some parameter $N$, let $G$ include a clique $K_N$ and a path of length $2r$ attached to the clique. Let $\pi$ be such that the vertices on the path have the lowest ranks in the graph and in a monotone increasing way from the degree one vertex of the path to its vertex attached to the clique. The ranks of the vertices in the clique can be arbitrary. An example construction for $r=3$ and $N = 8$ is shown below, with the numbers on the vertices corresponding to the ranks in $\pi$.
\begin{figure}[h]
    \centering
    \includegraphics{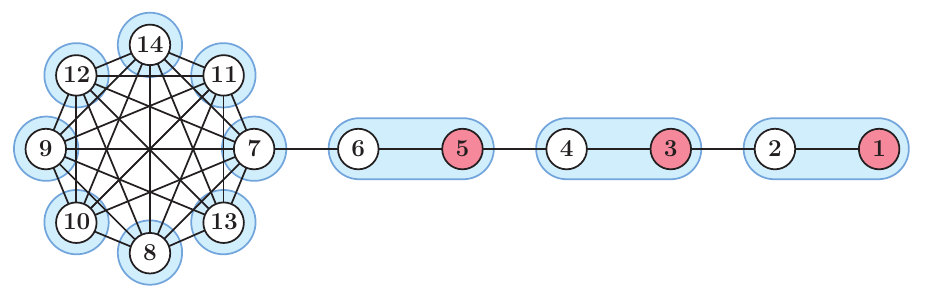}
\end{figure}

The idea is that within the first $r$ rounds of \ourpivot{}, only the vertices of the path will be marked as pivots. Thus, all vertices of the clique will form singleton clusters. As a result, the cost of this clustering is at least $\Omega(N^2)$. On the other hand, by simply putting all the vertices of the clique in a cluster, the optimal solution only pays a cost of $O(r)$. Thus the cost of \ourpivot{} is $\Omega(N^2/r)$ times that of the optimal solution.
\end{proof}




    

\section{Proof of \Cref{lem:width-gives-apx}}\label{sec:width-proof}
We provide a proof of \Cref{lem:width-gives-apx} based on the following result:

\begin{claim}[\cite{AilonCN-JACM08}]\label{cl:fractional-packing}
Let $BT$ be the set of all bad triangles in $G$. Let $y: BT \to [0, 1]$ be such that $\sum_{t \in BT: a, b \in t} y(t) \leq 1$ for every distinct pair of vertices $a,b \in V$ (not necessarily connected in $G$).  Then $\opt{G} \geq \sum_{t \in BT} y(t)$.
\end{claim}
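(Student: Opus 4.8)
The plan is to prove \Cref{cl:fractional-packing} by a direct counting argument — essentially weak LP duality between the fractional bad-triangle packing and the disagreements incurred by any clustering. First I would fix an optimal clustering $\mc{C}^\ast$ of $G$ and let $M$ be the set of vertex pairs on which $\mc{C}^\ast$ disagrees with the input labels (edges split across distinct clusters together with non-edges placed inside one cluster), so that $|M| = \opt{G}$ by definition of the objective.

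The one structural fact needed is that every bad triangle contains at least one pair of $M$. Indeed, if $t = \{a,b,c\}$ is a bad triangle with $\{a,b\},\{b,c\} \in E$ and $\{a,c\} \notin E$, then either $\mc{C}^\ast$ puts $a,b,c$ in a common cluster, in which case the non-edge $\{a,c\}$ lies in $M$, or it splits one of the edges $\{a,b\},\{b,c\}$, in which case that edge lies in $M$. Hence for each $t \in BT$ we may fix some pair $e(t) \in M$ with $e(t) \subseteq t$.

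Then I would estimate, using $y \ge 0$ and partitioning $BT$ according to the value of $e(\cdot) \in M$,
\[
    \sum_{t \in BT} y(t)
    = \sum_{\{a,b\} \in M}\ \sum_{\substack{t \in BT\\ e(t) = \{a,b\}}} y(t)
    \le \sum_{\{a,b\} \in M}\ \sum_{\substack{t \in BT\\ a,b \in t}} y(t)
    \le \sum_{\{a,b\} \in M} 1
    = |M| = \opt{G},
\]
where the first inequality uses nonnegativity of $y$ together with the inclusion $\{t : e(t)=\{a,b\}\} \subseteq \{t : a,b \in t\}$, and the second inequality is exactly the hypothesis applied to each pair $\{a,b\}$ (which may be an edge or a non-edge, both of which the hypothesis covers).

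There is no real obstacle here; the only points requiring a bit of care are the short case analysis showing that every bad triangle is "hit" by the disagreement set of the optimal clustering, and checking that re-indexing the sum over $BT$ through the map $t \mapsto e(t)$ is valid — dropping, for a fixed pair $\{a,b\}$, the triangles whose chosen pair is not $\{a,b\}$ can only decrease the inner sum precisely because $y$ is nonnegative.
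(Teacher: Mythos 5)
Your proof is correct. It differs from the paper's in that the paper proves \Cref{cl:fractional-packing} by explicitly writing down the natural LP relaxation of correlation clustering (variables $x_{\{a,b\}}$, one covering constraint per bad triangle), observing that its optimum lower bounds $\opt{G}$, and then invoking weak LP duality against the dual packing LP in which $y$ is feasible. You instead bypass the LP machinery entirely: you fix an optimal clustering, note that its disagreement set $M$ hits every bad triangle (your short case analysis of this is valid — if the non-edge endpoints are separated, the middle vertex cannot be clustered with both of them, so some edge of the triangle is cut), and then charge each triangle to a chosen pair $e(t) \in M$ and apply the hypothesis $\sum_{t \ni a,b} y(t) \le 1$ once per pair of $M$. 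In effect you are re-proving the relevant instance of weak duality by hand: $M$'s indicator vector is a feasible primal solution and $y$ a feasible dual solution, and your chain of inequalities is exactly the weak-duality computation specialized to these two solutions. What your route buys is a self-contained, elementary argument that needs no LP formalism; what the paper's route buys is brevity (weak duality is a black box) and the observation that the bound holds against the LP optimum, not just against integral clusterings — a generality that is not needed here, so both proofs are fully adequate for the claim as stated.
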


\begin{myproof}
Consider the following LP, which clearly lower bounds the optimal solution since at least one pair of any bad triangle must be clustered incorrectly:
\[
	\begin{aligned}
		\text{minimize} \qquad & \sum_{a \neq b \in V} x_{\{a, b\}} && \\
		\text{subject to}\qquad & x_{\{a, b\}} + x_{\{a, c\}} + x_{\{b, c\}} \geq 1 \qquad &&\forall \{a, b, c\} \in BT\\
		& x_{\{a, b\}} \geq 0 \qquad &&\forall a \neq b \in V \quad
	\end{aligned}
\]
	Now consider the dual:
\[
	\begin{aligned}
		\text{maximize} \qquad &\sum_{t \in BT} y_t &&\\
		\text{subject to} \qquad &\sum_{t \in BT: a, b \in t } y_t \leq 1 \qquad &&\forall a \neq b \in V\\
		& y_t \geq 0 \quad &&\forall t \in BT
	\end{aligned}
\]
	The lemma then follows because the objective value of any feasible solution to the dual lower bounds the primal objective.
\end{myproof}

\begin{myproof}[Proof of \Cref{lem:width-gives-apx} via \Cref{cl:fractional-packing}]
For any bad triangle $t \in BT$, let $x(t)$ be the expected number of charges of charging scheme $\mc{S}$ to $t$ where the expectation is taken over $\pi$ and the randomization of $\mc{S}$. First, note from the definition of charging schemes that
	\begin{equation}\label{eq:tclrc291}
		\E[|X|] = \sum_{t \in BT} \E\left[x(t)\right].
	\end{equation}	
	Fix a pair $a,b \in V$ of distinct vertices. Since $\mc{S}$ is assumed to have width $w$, we have
	\begin{equation}\label{eq:hcllrc0-9123}
		\sum_{t \in BT: a, b \in t} \E[x(t)] \leq w.
	\end{equation}
	Now let us define $y(t) := \E[x(t)]/w$ for any $t \in BT$. We have
	$$
		\sum_{t \in BT: a, b \in t} y(t) = \frac{1}{w} \sum_{t \in BT: a, b \in t} \E[x(t)] \stackeq{(\ref{eq:hcllrc0-9123})}{\leq} 1.
	$$
	Therefore, we can apply \Cref{cl:fractional-packing} to get 
	\begin{equation}\label{eq:hclrcg}
	\opt{G} \geq \sum_{t \in BT} y(t) \stackeq{(by definition of $y$)}{=} \frac{1}{w} \sum_{t \in BT} \E[x(t)].
	\end{equation} 
	Equations (\ref{eq:tclrc291}) and (\ref{eq:hclrcg}) together prove $\E[|X|] \leq w \cdot \opt{G}$.
\end{myproof}

\end{document}